\newtheorem{thm}{Theorem}[section]%%%%%%%%%%% Titre en gras droit, corps de texte italique %%%%%%%%%%%%
\newtheorem{conj}{Conjecture}
\newtheorem{prop}[thm]{Proposition}
\newtheorem{cor}[thm]{Corollary}
\newtheorem{lem}[thm]{Lemma}
\theoremstyle{remark}%%%%%%%%%%%%%%%%%%%%%%%%% Titre en italique, corps de texte droit %%%%%%%%%%%%%%%%%
\newtheorem{rem}{\textsc{Remark}}
\theoremstyle{definition}%%%%%%%%%%%%%%%%%%%%% Titre en gras droit, corps de texte droit %%%%%%%%%%%%%%%
\newtheorem{defi}[thm]{Definition}
\newcommand{\Cb}{\mathbb{C}}
\newcommand{\Cc}{\mathcal{C}}
\newcommand{\Ec}{\mathcal{E}}
\newcommand{\Gc}{\mathcal{G}}
\newcommand{\Ib}{\mathbb{I}}
\newcommand{\Id}{\mathrm{Id}}
\newcommand{\Jc}{\mathcal{J}}
\newcommand{\Kc}{\mathcal{K}}
\newcommand{\Mc}{\mathcal{M}}
\newcommand{\Mr}{\mathrm{M}}
\newcommand{\Mb}{\mathbb{M}}
\newcommand{\Nb}{\mathbb{N}}
\newcommand{\Nr}{\mathrm{N}}
\newcommand{\Rr}{\mathrm{R}}
\newcommand{\Rb}{\mathbb{R}}
\newcommand{\Sc}{\mathcal{S}}
\DeclareMathOperator{\Pf}{Pf}
\DeclareMathOperator{\vol}{vol}
\DeclareMathOperator{\com}{com}
\renewcommand{\d}{\ensuremath{\operatorname{d}\!}}
\newcommand{\derp}[2]{\frac{\partial#1}{\partial#2}}
\newcommand{\drond}{\partial}
\renewcommand{\vec}[1]{\overrightarrow{#1}}
\title{\textsf{\textbf{The constraint equations of Lovelock gravity theories: \\ a new \texorpdfstring{$\sigma_k$}{sigma k}-Yamabe problem}}}
\author{\textsf{Xavier Lachaume}}
\date{\textsf{\normalsize{Summer 2017}}}
\begin{document}
\maketitle

\begin{center}
\textsf{
Institut Denis Poisson \\
Université de Tours - UMR 7013 du CNRS \\
Parc de Grandmont - 37200 Tours - France
}

\texttt{xavier.lachaume@lmpt.univ-tours.fr}
\end{center}

\vfill

\textsc{Abstract:}
This paper is devoted to the study of the constraint equations of the Lovelock gravity theories. In the case of a conformally flat, time-symmetric, space-like manifold, we show that the Hamiltonian constraint equation becomes a generalisation of the $\sigma_k$-Yamabe problem. That is to say, the prescription of a linear combination of the $\sigma_k$-curvatures of the manifold. We search solutions in a conformal class for a compact manifold.

Using the existing results on the $\sigma_k$-Yamabe problem, we describe some cases in which they can be extended to this new problem. This requires to study the concavity of some polynomial. We do it in two ways: regarding the concavity of an entire root of this polynomial, which is connected to algebraic properties of the polynomial; and seeking analytically a concavifying function. This gives several cases in which a conformal solution exists.

At last we show an implicit function theorem in the case of a manifold with negative scalar curvature, and find a conformal solution when the Lovelock theories are close to General Relativity.

\vfill

%\tableofcontents

\newpage

\section{Introduction}

\subsection{Physics issues}

Lovelock gravity theories are one of the many attempts to generalise General Relativity (GR), in order to explain puzzling observations such as dark matter or dark energy, or to explore the consequences of string theories and AdS/CFT correspondence and the high-dimensional space-time that they suppose. They were introduced in the 1970's by D. Lovelock: the foundation papers are \cite{Love70} and \cite{Love71}.

Although these theories are well explored by the physicists, there is not yet much mathematical work on them (see eg. \cite{Patterson81} and \cite{Ge14}). One of the natural questions that arise from them is the well-posedness of the Cauchy problem, which has to be separated in two steps: the construction of suitable initial data, and the propagation of this data along some time-like vector field. The propagation problem consists in studying the hyperbolicity of the evolution equations. The seminal studies are in \cite{Tei87}, \cite{Cho88} and \cite{Cho09-692}, while recent results can be found in \cite{Reall14}, \cite{Reall15}, \cite{Will14} and \cite{Will15}. The construction of initial data consists in solving the constraint equations of the theory. This is the aim of the present paper.

The constraint equations of Lovelock theories were known since the end of the 1980's (see \cite{Tei87}, \cite{Cho88}, \cite{Cho09-692}), but they were not studied at all up to now, unlike their GR counterparts. One can find their explicit form in the particular case of Gauss-Bonnet gravity in \cite{Tor08}. More recently, the ADM decomposition of $f(\text{Lovelock})$ theories was done in \cite{Lac17-1}, which are a generalisation of Lovelock theories. In \cite{Lac17-1}, it is showed as well that in the time-symmetric case, ie. when the second fundamental form is set to $0$, the Lovelock constraint equations reduce to
\begin{equation}
\sum_{k=0}^{p_n} \alpha_k \Rb_{k} = \Ec, \label{eqcontrainte}
\end{equation}
where $n$ is the dimension of space, $p_n = \left\lfloor \frac{n+1}{2} \right\rfloor$, $\alpha_k$ are arbitrary real coefficients, $\Rb_k$ is the $k$-th Lovelock product which will further be defined and $\Ec$ is a prescribed energy. In this equation the physical dimensionality is carried by the $\alpha_k$: each $\Rb_k$ has the dimension of a $\text{length}^{-2k}$.

The GR case corresponds to
\[\begin{array}{rlrl}
\toprule
\alpha_0		& = \dfrac{-\Lambda}{\kappa},
& \alpha_1		& = \dfrac{1}{2\kappa}, \\
n		& = 3,
& \alpha_k		& = 0 \text{\quad for $k \geq 2$}, \\
\midrule
\Gc		& = \text{ the universal gravitational constant},
& \kappa	& = \dfrac{8 \pi \Gc}{c^4}, \\
\Lambda		& = \text{ the cosmological constant}. \\
\bottomrule
\end{array}\]
In the present paper we skip the physical dimensionality of the problem by choosing the units $c = 1$, $\kappa = 1/2$, so that the GR case becomes
\begin{align}
\alpha_0	&= -2\Lambda,		&\alpha_1 &= 1, \\
0		&< \Lambda \ll 1	&\alpha_k &= 0 \text{\quad for $k \geq 2$}. \nonumber
\end{align}
Although there are some investigations about a negative cosmological constant $\Lambda$, most of the cosmological models consider a small but positive one. The general Lovelock case, as for it, releases the constraints on the $\alpha_k$'s. However, in this paper we shall alternatively make two additional assumptions.

In section \ref{sectionpos}, we shall suppose that $\alpha_k \geq 0$ for $k \geq 1$. That enables us in some cases to obtain an elliptic system and that is a common physical hypothesis, although not the only one: in \cite{Zou11}, the thermodynamics of black holes is studied with $\alpha_2 < 0$; and in \cite{Kord15} or \cite{Mehdi15}, the existence of a traversable wormhole implies the same $\alpha_2 < 0$. Furthermore, the ellipticity of the problem requires to restrict the eigenvalues of the Ricci tensor to the positive cone, which implies that the scalar curvature of the space manifold is positive.

In section \ref{sectiontfi}, we shall assume the opposite: that the scalar curvature of the space manifold is negative. We shall suppose that $ \left|\alpha_k\right| \ll 1$ for $k \geq 2$. This corresponds to the GR limit of Lovelock gravity and makes possible the use of an implicit function theorem.

\subsection{Notations}

We are now going to report the notations needed to present our problem.

\subsubsection{Geometry and Lovelock products}
We write
\[\begin{array}{rll}
\toprule
(\Mc,\gamma)	& &\text{a Riemannian manifold with its metric $\gamma$}, \\
\dim \Mc	&= n, \\
2^*		&= \dfrac{2n}{n-2} &\text{the critical number in the Yamabe problem}, \\
\xi^i		& &\text{the local coordinates on $\Mc$}, \\
\gamma_{ij}	& &\text{the coefficients of $\gamma$ in $(\xi^i)$}, \\
|\gamma|	&= \det (\gamma_{ij}) &\text{the determinant of $(\gamma_{ij})$}, \\
\nabla[\gamma]		& &\text{the covariant derivative of $(\Mc,\gamma)$}, \\
\Rr[\gamma]_{ijkl}	& &\text{the Riemann tensor of $\gamma$}, \\
\Rr[\gamma]_{ij}	& &\text{the Ricci tensor of $\gamma$}, \\
\Rr[\gamma]		& &\text{the scalar curvature of $\gamma$}, \\
\Delta[\gamma]		&= \nabla[\gamma]_i \nabla[\gamma]^i &\text{the Laplacian on $\Mc$}, \\
\d v_\gamma	&= \sqrt{|\gamma|}\d^{n} \xi &\text{the volume element of $(\Mc,\gamma)$}, \\
\Cc[\gamma]	&= \left\{\hat{\gamma} = u^{\frac{4}{n-2}} \gamma \ \big| \ u \in \Cc^\infty(\Mc)\right\} &\text{the conformal class of $\gamma$.} \\
\midrule
\text{For}	& 0 \leq k \leq n, \\
\sigma_k(\lambda)	&= \displaystyle{\sum_{1 \leq i_1 < \ldots < i_k \leq n} \lambda_{i_1} \ldots \lambda_{i_k}}	&\text{the $k$-th elementary symmetric polynomial of $\lambda \in \Rb^n$,} \\
\Gamma_k &=\left\{\lambda \in \Rb^n \ \big| \ \sigma_j(\lambda) > 0 \text{ for } 0 \leq j \leq k \right\}	&\text{the $k$-th G\aa rding cone,} \\
S[\gamma]_{ij}	&= \dfrac{1}{n-2}\left[\Rr[\gamma]_{ij} - \dfrac{\Rr[\gamma]}{2(n-1)}\gamma_{ij}\right]	&\text{the Schouten tensor of $\gamma$,} \\
S[\gamma]	&= \dfrac{\Rr[\gamma]}{2(n-1)}	&\text{the trace of the Schouten tensor of $\gamma$,} \\
\lambda(\gamma)	& &\text{the eigenvalues of $S[\gamma]_j^i$,} \\
\sigma_k(\lambda(\gamma))	& &\text{the $\sigma_k$-curvature of $\gamma$, ie. the $k$-th elementary} \\
			& &\text{symmetric polynomial of the eigenvalues of $S[\gamma]_j^i$,} \\
W[\gamma]_{ijkl}	& &\text{the Weyl tensor of $\gamma$,} \\
\Sc_n(\Rb) & &\text{the set of real symmetric matrices of size $n$,} \\
\midrule
p_n		&= \left\lfloor \dfrac{n+1}{2} \right\rfloor, \\
\delta_{a_1 a_2 \ldots a_k}^{b_1 b_2 \ldots b_k}
		&= \det	\begin{pmatrix}
			\delta_{a_1}^{b_1}	& \ldots	& \delta_{a_1}^{b_k} \\
			\vdots			&		&	\vdots \\
			\delta_{a_k}^{b_1}	& \ldots	& \delta_{a_k}^{b_k}
			\end{pmatrix} &\text{the generalised Kronecker symbol}, \\
\Rb_k[\gamma]		&= \dfrac{1}{2^k}\delta_{a_1 b_1 a_2 b_2 \cdots a_k b_k}^{c_1 d_1 c_2 d_2 \cdots c_k d_k}\Rr_{c_1 d_1}^{a_1 b_1}\Rr_{c_2 d_2}^{a_2 b_2} \ldots \Rr_{c_k d_k}^{a_k b_k} &\text{the $k$-th Lovelock product}, \\
\alpha_k		&\in \Rb &\text{arbitrary fixed parameters}, \\
\midrule
\Rb_0		&= 1, \\
\Rb_1		&= \Rr, \\
\Rb_2		&= \Rr^2 - 4 \Rr_a^c\Rr^a_c + \Rr_{ab}^{cd}\Rr^{ab}_{cd}, \\
\Rb_3		&= \Rr^3 + 2\Rr_{ab}^{cd}\Rr_{cd}^{ef}\Rr_{ef}^{ab} + 3\Rr\Rr_{ab}^{cd}\Rr_{cd}^{ab} \\
		& + 8\Rr_{ab}^{cf}\Rr_{cd}^{eb}\Rr_{ef}^{ad} -12\Rr\Rr_a^b\Rr_b^a +16\Rr_a^b\Rr_b^c\Rr_c^a \\
		& - 24\Rr_a^b\Rr_{cd}^{ae}\Rr_{be}^{cd} + 24\Rr_a^b\Rr_c^d\Rr_{bd}^{ac}, \\
\vdots \\
\Rb_k		&= 0 &\text{for $k > p_n$, because of the anti-symmetries of $\Rr_{ab}^{cd}$}. \\
\bottomrule
\end{array}\]
We use the Einstein summation convention. The coefficients are written in the local coordinates $(\xi^i)$. The dependence in $\gamma$ is omitted if not necessary.

\begin{rem}
The G\aa rding cones $\Gamma_k$ were introduced by L. G\aa rding for his work on the hyperbolic polynomials, see \cite{Gar59}. They satisfy the following inclusions:
\[\Gamma_1 \supset \Gamma_2 \supset \ldots \supset \Gamma_{n-1} \supset \Gamma_n = \left(\Rb_+^*\right)^n.\]
A metric $\gamma$ which verifies $\lambda(\gamma) \in \Gamma_k$ is called $k$-admissible.
\end{rem}
Let us now focus on the elementary symmetric polynomials.

\subsubsection{Differential calculus and linear algebra}
When we consider quantities living in $\Rb^n$, which is a flat space, we can write indifferently in lower or upper indices their coordinates. In this case, we use the following notations.
\begin{defi}
Let $n \geq 1$, and $\Omega$ an open set of $\Rb^n$. For $f, g \in \Cc^2(\Omega \rightarrow \Rb)$, $A, B \in \Mc_n(\Rb)$, we write
\[\begin{array}{rll}
\toprule
f_i &= \drond_{x_i} f, \\
f_{ij} &= \drond_{x_i}\drond_{x_j} f, \\
\midrule
D(f) &= (f_i)_{1 \leq i \leq n} \in \Cc^1(\Omega \rightarrow \Rb^n) &\text{ the gradient of } f, \\
H(f) &= (f_{ij})_{1 \leq i,j \leq n} \in \Cc^0(\Omega \rightarrow \Sc_n(\Rb)) &\text{ the hessian of } f, \\
\midrule
f \equiv g &\Longleftrightarrow \text{ $f$ and $g$ have the same sign on $\Omega$}, \\
A \equiv B &\Longleftrightarrow \text{ there exists some $\lambda > 0$ such that $A = \lambda B$}. \\
\bottomrule
\end{array}\]

We assimilate a real polynomial with its polynomial function on $\Rb$.

For $x, y \in \Rb^n$, we denote by
\[\begin{array}{rl}
\toprule
 ^t\!x & \text{ the transpose of } x, \\
(x|y) = ^t\!\!xy = ^t\!\!yx = \sum_i x_i y_i & \text{ the scalar product of $x$ and $y$}, \\
x \otimes y = x ^ty = (x_iy_j)_{ij} \in \Mc_n(\Rb) & \text{ the tensor product of $x$ and $y$}, \\
\midrule
\Ib &= (1, \ldots, 1) \in \Rb^n, \\
\Id &= \begin{pmatrix} 1 & & 0 \\ & \ddots & \\ 0 & & 1 \end{pmatrix} \in \Mc_n(\Rb), \\
%x'_i = (x_1, \ldots, \cancel{x_i}, \ldots, x_n) \in \Rb^{n-1} & \text{ the vector $x$ without its $i$-th coordinate}, \\
%\midrule
%(e_i)_{1 \leq i \leq n} & \text{ the canonic basis of } \Rb^n. \\
\bottomrule
\end{array}\]
%We write $\Ib = (1, \ldots, 1) \in \Rb^n$, and keep the same notation whatever the dimension is when there is no ambiguity.
We call \emph{diagonal} the open diagonal half-axis
\[\Rb_+^* \cdot \Ib = \left\{(t, \ldots, t) \in \Gamma_n \ | \ t > 0\right\}.\]

For $\vec{a} = (a_0, a_1, \ldots, a_p) \in \Rb^{p+1}$, we set
\[\begin{array}{rrll}
\toprule
&f_{\vec{a}} &= a_0 + a_1 \sigma_1 + \ldots + a_p \sigma_p &\in \Rb[X_1, \ldots, X_n], \\
f_{\vec{a}}(X, \ldots, X) &= \bar{f}_{\vec{a}} &= \displaystyle{a_0 \binom{n}{0} + a_1 \binom{n}{1} X + \ldots + a_p \binom{n}{p} X^p} &\in \Rb[X], \\
\bottomrule
\end{array}\]
\end{defi}

Finally, let us introduce a set that is studied in more details in \cite{Lac17-3}:
\[\Kc_n^p = \left\{\vec{a} \in \Rb^{p+1} \quad \big| \quad f_{\vec{a}}^{1/p} \text{ is concave on } \Gamma_n\right\}. \]

\subsubsection{\texorpdfstring{$n+1$}{n+1} decomposition of Lovelock gravity}

The projection of the field equations of some gravity theory onto a $n+1$ space-like foliation of space-time is called the $n+1$ decomposition -- or ADM decomposition -- of this theory. It splits the field equations into constraint equations and dynamical equations. Our problem finds its origin in the $n+1$ decomposition of the Lovelock field equations, and more specifically in the constraint equations. These projected equations are already in the literature (see \cite{Tei87}, \cite{Cho88}, \cite{Cho09-692}). Let us introduce a few more tensors and physical quantities so as to present them.
\[\begin{array}{rll}
\toprule
K_{ij}		& &\text{the second fundamental form of the embedding} \\
& &\quad \text{of $\Mc$ into a $n+1$-dimensional space-time}, \\
\Mr_{ab}^{cd}	&= \Rr_{ab}^{cd} + K_a^c K_b^d - K_a^d K_b^c &\text{the space-time Riemann tensor projected four} \\
& &\quad \text{times on $\Mc$ according to Gauss equation}, \\
\Nr^{ab}_c &= -\nabla^a K^b_c + \nabla^b K^a_c &\text{the space-time Riemann tensor projected three} \\
& &\quad \text{times on $\Mc$ according to Codazzi identity}, \\
\midrule
\Mb_k		&= \dfrac{1}{2^k}\delta_{a_1 b_1 \cdots a_k b_k}^{c_1 d_1 \cdots c_k d_k}\Mr_{c_1 d_1}^{a_1 b_1} \ldots \Mr_{c_{k-1} d_{k-1}}^{a_{k-1} b_{k-1}} \Mr_{c_k d_k}^{a_k b_k}, \\
\Nb_{(k)}^i	&= \dfrac{1}{2^k}\delta_{a_1 b_1 \cdots a_k b_k}^{c_1 d_1 \cdots c_k i}\Mr_{c_1 d_1}^{a_1 b_1} \ldots \Mr_{c_{k-1} d_{k-1}}^{a_{k-1} b_{k-1}} \Nr_{c_k}^{a_k b_k}, \\
\midrule
\Ec & &\text{the energy density}, \\
\Jc^i & &\text{the momentum density}. \\
\bottomrule
\end{array}\]

Then the constraint equations of Lovelock gravity are
\begin{align}
\sum_{k=0}^{p_n} \alpha_k \Mb_{k}	&= \Ec,	&\text{(Hamiltonian constraint)} \label{eqcontrainte1LL} \\
\sum_{k=0}^{p_n} 2 k \alpha_k \Nb_{(k)}^i	&= \Jc^i,	&\text{(momentum constraint)} \label{eqcontrainte2LL}
\end{align}
where $\Ec$ and $\Jc^i$ are prescribed, so the equations concern $(\gamma,K)$. An explicit example for Gauss-Bonnet gravity can be found in \cite{Tor08}.

$(\ref{eqcontrainte1LL},\ref{eqcontrainte2LL})$ is an under-determined system, and the common way to solve it is to fix a background metric and to search a solution in a given conformal class.

\subsection{Assumptions}

In this paper we shall make three strong assumptions:
\begin{enumerate}
\item We restrict ourselves to the \textbf{time-symmetric case}, ie. we assume that $K_{ij} = 0$. The constraint equations $(\ref{eqcontrainte1LL},\ref{eqcontrainte2LL})$ then become:
\begin{align}
\sum_{k=0}^{p_n} \alpha_k \Rb_{k}	&= \Ec, \label{eqcontrainte1LLtsym} \\
0					&= \Jc^i. \label{eqcontrainte2LLtsym}
\end{align}
$(\ref{eqcontrainte2LLtsym})$ is a physical necessary condition for a time-symmetric decomposition to be possible. But $(\ref{eqcontrainte1LLtsym})$ is a new geometrical equation. When all but one of the $\alpha_k$'s are taken to be $0$, $(\ref{eqcontrainte1LLtsym})$ is only a curvature prescription equation.

Let us remark that \eqref{eqcontrainte1LLtsym} is not about the Lovelock polynomial that is present in the action describing the Lovelock theories, but about the \textbf{projection} of this polynomial on a space-like hypersurface $\Mc$. The Lovelock polynomial is built on the space-time Lorentzian metric, while \eqref{eqcontrainte1LLtsym} is built on a space-like Riemannian metric. The fact that this projection has the same structure as a Lovelock polynomial is due to the form of $\Mr_{a b}^{c d}$ and the hypothesis of time-symmetry.

\item We shall assume that $\Mc$ is \textbf{locally conformally flat}. Indeed, let us recall that the Riemann tensor can be decomposed so:
\[\Rr_{ab}^{cd} = W_{ab}^{cd} + \left(S_a^c \delta_b^d + S_b^d \delta_a^c - S_a^d \delta_b^c - S_b^c \delta_a^d\right).\]
So in the locally conformally flat case, ie. if $W_{abcd} = 0$, it can be shown (see for example \cite{Ge14}) that for $0 \leq k \leq p_n$,
\[\Rb_{k} = 2^k k!\dfrac{(n-k)!}{(n-2k)!}\sigma_k(\lambda(\gamma)).\]
Hence, $(\ref{eqcontrainte1LLtsym})$ becomes
\begin{equation}
\sum_{k=0}^{p_n} \alpha_k 2^k k!\dfrac{(n-k)!}{(n-2k)!}\sigma_k(\lambda(\gamma)) = \Ec. \label{eqalpha}
\end{equation}
That is why we set
\begin{equation}
%E &:= \frac{\Ec}{2\Lambda}, \\
a_k := \alpha_k 2^k k!\dfrac{(n-k)!}{(n-2k)!} \label{defak}
\end{equation}
which has the same sign as $\alpha_k$.

\item We suppose that $\Mc$ is compact. Considering a finite volume manifold is indeed a necessary hypothesis for all the works that we shall use.
\end{enumerate}

Now, here is the main problem of this paper: we assume that $(\Mc,\gamma)$ is locally conformally flat and compact, and that its embedding in a surrounding space-time is time-symmetric; this implies that $\eqref{eqcontrainte2LLtsym}$ is true. Then the constraint equations of Lovelock theories reduce to
\begin{equation}
\sum_{k=0}^{p_n} a_k \sigma_k(\lambda(\gamma)) = \Ec. \label{eqa}
\end{equation}

If all the $a_k$ but one are set to $0$, \eqref{eqa} is a problem of $\sigma_k$-curvature prescription. As well as for the constraint equations of GR, it is under-determined, so we choose to search a solution in a given conformal class: this is the $\sigma_k$-Yamabe problem, a generalisation of the Yamabe problem.

For $k=1$, $\sigma_1(\lambda(\gamma))$ is the scalar curvature of $\gamma$, so we recover the famous Yamabe problem.

For $k=n$, although we are not concerned by this case, $\sigma_n(\lambda(\gamma))$ would be the determinant of the Schouten tensor, hence a sort of Monge-Amp\`ere equation, which gave birth to a flourishing literature.

The other cases are an interpolation between those two extremal cases. As explained in \cite{Lab08}, here are some cases for which a solution of the $\sigma_k$-Yamabe problem has recently been found:
\begin{itemize}
\item 2000, seminal work from J. A. Viaclovsky \cite{Via00}, \cite{Via01};
\item 2003, locally conformally flat manifold, P. Guan and G. Wang \cite{Guan03};
\item 2003, locally conformally flat manifold, A. Li and Y. Li \cite{Li03OnSome}, \cite{Li05}, extended in \cite{Li03AFully};
\item 2007, manifold with a nonempty boundary, B. Guan \cite{Guan07};
\item 2007, $n/2 < k$, admissible metric, M. Gursky and J. Viaclovsky \cite{Gursky07};
\item 2007, $2 \leq k \leq n/2$, admissible metric and variational problem, W. Sheng, N. Trudinger and X. Wang \cite{Sheng07}.
\end{itemize}

However our equation \eqref{eqa} is not the prescription of a single $\sigma_k$-curvature, but of an arbitrary linear combination of $\sigma_k$-curvatures. This is a new sort of $\sigma_k$-Yamabe problem, that we shall now handle.

In the following we shall release the constraint on $p_n$, and consider \eqref{eqa} as a special case of this generic curvature prescription equation:
\begin{defi}
For $\vec{a} = (a_0, a_1, \ldots, a_p) \in \Rb^{p+1}$, we call $(\ref{eqfa},\vec{a},\Ec)$ the following problem: to find $\hat{\gamma} \in \Cc[\gamma]$ a solution of
\begin{equation}
f_{\vec{a}}(\lambda(\hat{\gamma})) = \Ec. \label{eqfa}
\end{equation}
\end{defi}
% We let the new $a_0$ vary in order to present more general results; this would for instance enable to consider a possibly negative cosmological constant.

\section{Concavity} \label{sectionpos}

Most of the existing solutions of the $\sigma_k$-Yamabe problem use the fact that the locally conformally flat $\sigma_k$-Yamabe problem is variational, as well as the classical Yamabe problem. This property is defined in the following theorem shown by J. Viaclovsky.
\begin{thm}[\cite{Via00}]
Let $\hat{\gamma} \in \Cc[\gamma]$, such that $\vol(\hat{\gamma}) = 1$. Then for $k \neq n/2$, $\hat{\gamma}$ satisfies
\begin{equation}
\sigma_k(\lambda(\hat{\gamma})) = \text{constant} \label{eqkhat-cst}
\end{equation}
if and only if it is a critical point of the functional
\[g \longmapsto \int_\Mc \sigma_k(\lambda(g)) \d v_g.\]
\end{thm}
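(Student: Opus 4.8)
The plan is to reduce the equivalence to a computation of the first variation of $\mathcal{F}_k\colon g \longmapsto \int_\Mc \sigma_k(\lambda(g))\,\d v_g$ along volume-preserving curves in $\Cc[\gamma]$, and to show that this first variation is, up to the nonzero factor $n-2k$, the $L^2$-pairing against $\sigma_k(\lambda(\hat\gamma))$. First I would parametrise a conformal perturbation of $\hat\gamma$ as $g_t = e^{2t\psi}\hat\gamma$ with $\psi\in\Cc^\infty(\Mc)$; since $\vol(g_t) = \int_\Mc e^{nt\psi}\,\d v_{\hat\gamma}$, the constraint $\vol(g_t)=1$ linearises to $\int_\Mc\psi\,\d v_{\hat\gamma}=0$, and conversely every such $\psi$ is the velocity of an admissible curve. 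Hence $\hat\gamma$ is a critical point of $\mathcal{F}_k$ on $\Cc[\gamma]\cap\{\vol = 1\}$ exactly when $\frac{\d}{\d t}\big|_0\mathcal{F}_k(g_t)=0$ for every $\psi$ with $\int_\Mc\psi\,\d v_{\hat\gamma}=0$.

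The heart is the first variation formula. Writing $A_t$ for the $(1,1)$-Schouten tensor of $g_t$, the conformal transformation law $S[e^{2\phi}\hat\gamma] = S[\hat\gamma] - \Hess_{\hat\gamma}\phi + \d\phi\otimes\d\phi - \tfrac12|\d\phi|^2_{\hat\gamma}\,\hat\gamma$ together with $g_t^{-1} = e^{-2t\psi}\hat\gamma^{-1}$ and the homogeneity of $\sigma_k$ give
\[
\sigma_k(\lambda(g_t)) = \sigma_k(A_t) = e^{-2kt\psi}\,\sigma_k\!\big(S[\hat\gamma] - t\,\Hess_{\hat\gamma}\psi + O(t^2)\big),\qquad \d v_{g_t} = e^{nt\psi}\,\d v_{\hat\gamma},
\]
where on the right $S[\hat\gamma]$ and $\Hess_{\hat\gamma}\psi$ are read as endomorphisms through $\hat\gamma$. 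Differentiating at $t=0$ --- the differential of $\sigma_k$ at a symmetric endomorphism $A$ being the pairing $\langle T_{k-1}(A),\cdot\rangle$ with its $(k-1)$-th Newton transformation, and $\frac{\d}{\d t}\d v_{g_t}|_0 = n\psi\,\d v_{\hat\gamma}$ --- I obtain
\[
\frac{\d}{\d t}\Big|_0\mathcal{F}_k(g_t) = (n-2k)\int_\Mc \psi\,\sigma_k(\lambda(\hat\gamma))\,\d v_{\hat\gamma}\ -\ \int_\Mc \langle T_{k-1}(S[\hat\gamma]),\Hess_{\hat\gamma}\psi\rangle\,\d v_{\hat\gamma}.
\]

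It then remains to kill the last integral. Integrating by parts twice it equals $\int_\Mc\big(\nabla\!\cdot(\nabla\!\cdot T_{k-1})\big)\psi\,\d v_{\hat\gamma}$ (divergences taken with respect to $\hat\gamma$), so it suffices to know that the Newton transformation of the Schouten tensor is divergence-free. For $k=1$ this is trivial ($T_0=\Id$); for $k=2$ it is the contracted second Bianchi identity $\nabla\!\cdot S[\hat\gamma] = \d(\tr S[\hat\gamma])$ applied to $T_1 = \sigma_1\Id - S[\hat\gamma]$; and for $k\geq 3$, in the locally conformally flat setting in force here, it follows for all $k$ by induction on the recursion $T_m = \sigma_m\Id - S[\hat\gamma]\!\cdot\! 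T_{m-1}$ using the Bianchi identity and the vanishing of the Cotton tensor --- the same antisymmetry mechanism of the generalised Kronecker symbol that makes the Lovelock products divergence-free. Granting this, $\frac{\d}{\d t}\big|_0\mathcal{F}_k(g_t) = (n-2k)\int_\Mc\psi\,\sigma_k(\lambda(\hat\gamma))\,\d v_{\hat\gamma}$: if $\sigma_k(\lambda(\hat\gamma))$ is constant it vanishes for all volume-preserving $\psi$, so $\hat\gamma$ is critical; conversely, if $\hat\gamma$ is critical then, dividing by $n-2k\neq 0$, $\sigma_k(\lambda(\hat\gamma))$ is $L^2(\d v_{\hat\gamma})$-orthogonal to every mean-zero function, and testing against $\psi = \sigma_k(\lambda(\hat\gamma)) - c$ (with $c$ its mean value) forces $\sigma_k(\lambda(\hat\gamma))$ to be constant. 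The main obstacle is exactly the divergence-freeness of $T_{k-1}$ for $k\geq 3$: it is not the one-line Bianchi argument of the $k\leq 2$ cases, it genuinely uses local conformal flatness through the Cotton tensor, and it rests on the combinatorial identities of the generalised Kronecker symbol.
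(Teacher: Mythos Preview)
The paper does not prove this theorem: it is quoted verbatim from Viaclovsky \cite{Via00} as background, with no argument given. So there is no ``paper's own proof'' to compare against.

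That said, your sketch is essentially Viaclovsky's original argument and is correct. The first-variation computation is right, the identification of the linearised volume constraint with mean-zero $\psi$ is right, and you have correctly isolated the only nontrivial step: the divergence-freeness of the Newton tensor $T_{k-1}(S[\hat\gamma])$. Your handling of that step is accurate --- for $k\leq 2$ it is Bianchi alone, and for $k\geq 3$ one genuinely needs the Cotton tensor to vanish, which is exactly the locally conformally flat hypothesis in force throughout the paper (and in Viaclovsky's original statement). The only thing worth tightening is that the theorem as quoted here does not restate that hypothesis; you are right to import it from the ambient assumptions, but in a self-contained write-up you should make explicit that the variational characterisation fails in general for $k\geq 3$ without local conformal flatness.
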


However, our new problem $(\ref{eqfa},\vec{a},\Ec)$ is not variational anymore. We have to use more general results, such as the following theorem from A. Li and Y. Y. Li.
\begin{thm}[\cite{Li03AFully}] \label{thmLi}
Let $\Gamma \subset \Rb^n$ and $f \in \Cc^\infty(\Gamma) \cap \Cc^0(\overline{\Gamma})$ such that
\begin{enumerate}
\item	\begin{enumerate}
	\item $\Gamma$ is an open convex cone with vertex at the origin,
	\item $\Gamma_n \subset \Gamma \subset \Gamma_1$, \label{1b}
	\item $\Gamma$ is symmetric in the $\lambda_i$;
	\end{enumerate}
\item	\begin{enumerate}
	\item $f$ is concave and symmetric in the $\lambda_i$, \label{2a}
	\item $f = 0$ on $\drond \Gamma$,
	\item $\forall~1 \leq i \leq n$, $\drond_i f > 0$ on $\Gamma$,
	\item $\forall~\lambda \in \Gamma$, $\underset{s \rightarrow \infty}{\lim} f(s\lambda) = \infty$.
	\end{enumerate}
\end{enumerate}
And let $(\Mc,\gamma)$ be a compact locally conformally flat manifold without boundary satisfying
\[\lambda(\gamma) \in \Gamma \qquad \text{on } \Mc.\]

Then there exists some $u \in \Cc^\infty(\Mc)$, $u>0$, such that $\hat{\gamma} := u^{\frac{4}{n-2}} \gamma \in \Cc[\gamma]$ satisfies
\begin{align}
f(\lambda(\hat{\gamma})) &= 1 \qquad \text{and} \label{fhat} \\
\lambda(\hat{\gamma}) &\in \Gamma \qquad \text{on } \Mc.
\end{align}
\end{thm}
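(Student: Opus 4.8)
The plan is to reduce the conformal equation (\ref{fhat}) to a fully nonlinear second-order elliptic PDE for the conformal factor $u$, and then to solve that PDE by the method of continuity (equivalently, by a Leray--Schauder degree argument, as in \cite{Li03AFully}) backed by a priori estimates. Writing $\hat\gamma = u^{\frac{4}{n-2}}\gamma$ with $u>0$, the conformal transformation law of the Schouten tensor reads
$$S[\hat\gamma]_{ij} = S[\gamma]_{ij} - \frac{2}{n-2}\,\frac{\nabla_i\nabla_j u}{u} + \frac{2n}{(n-2)^2}\,\frac{\nabla_i u\,\nabla_j u}{u^2} - \frac{2}{(n-2)^2}\,\frac{|\nabla u|^2}{u^2}\,\gamma_{ij},$$
where $\nabla$ is the Levi-Civita connection of $\gamma$; since $\lambda(\hat\gamma)$ is the set of eigenvalues of $\hat\gamma^{-1}S[\hat\gamma] = u^{-\frac{4}{n-2}}\gamma^{-1}S[\hat\gamma]$, equation (\ref{fhat}) takes the form $F(\nabla^2 u,\nabla u,u,x)=1$. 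The structural hypotheses (1a)--(1c) together with (\ref{2a}) and (2c) make $F$ elliptic precisely on the open set of \emph{admissible} functions, i.e.\ those $u>0$ with $\lambda(\hat\gamma)(x)\in\Gamma$ for every $x\in\Mc$, and condition (2b), $f=0$ on $\drond\Gamma$, is what will keep a solution path confined to that set.

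For the existence itself I would deform $f$ through a one-parameter family $f_t$, $t\in[0,1]$, of functions satisfying the same structural hypotheses, joining a model operator at $t=0$ — one for which a solution is known (for instance on the round sphere) and for which the degree is computable — to $f_1=f$. Let $I$ be the set of $t$ for which the corresponding equation has a smooth admissible solution. Openness of $I$ follows from the ellipticity of $F$ via the implicit function theorem in H\"older spaces, the linearised operator being uniformly elliptic and subject to the maximum principle, hence invertible on the compact $\Mc$. Closedness of $I$ reduces to uniform a priori bounds in $\Cc^{2,\alpha}$ for admissible solutions; this is the analytic core, and it is carried out in three stages.

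The a priori estimates run as follows. The maximum principle, applied at the extrema of $u$ and using monotonicity (2c) together with (2b) and (2d), gives one-sided control of $u$ (a positive lower bound for $\sup_\Mc u$ and an upper bound for $\inf_\Mc u$). The full $\Cc^0$ bound and the $\Cc^1$ gradient bound are obtained together, and here the local conformal flatness is decisive: arguing by contradiction, one rescales the solution around a point where $u$ or $|\nabla u|$ would blow up and passes to a limiting entire solution of the flat model equation on $\Rb^n$, then invokes a Liouville-type classification theorem — entire admissible solutions of the flat equation are, up to M\"obius transformations, either constants or the standard spherical ("bubble") solutions — to exclude the concentration. The $\Cc^2$ Hessian bound is where the concavity hypothesis (\ref{2a}) is indispensable: differentiating the equation twice, the concavity of $f$ in the $\lambda_i$ places the dangerous second-order term on the favourable side, so a maximum-principle computation on a well-chosen test quantity built from $\nabla^2 u$ yields the bound. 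Once $\|u\|_{\Cc^2}$ is controlled the equation is uniformly elliptic and concave, whence Evans--Krylov gives a $\Cc^{2,\alpha}$ estimate and Schauder bootstrapping promotes the solution to $\Cc^\infty$; the admissibility $\lambda(\hat\gamma)\in\Gamma$ on all of $\Mc$ is automatic from (2b), which forbids the eigenvalues from touching $\drond\Gamma$ along the path. With $I$ nonempty, open and closed, one gets $I=[0,1]$ and in particular a solution for $f_1=f$.

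The main obstacle is the pair of genuinely nonlinear estimates: the combined $\Cc^0$--$\Cc^1$ estimate, which for an $f$ subject only to the present hypotheses (rather than an explicit one such as $\sigma_k^{1/k}$) seems to require the Liouville theorem for the flat model and hence the locally conformally flat assumption; and the Hessian estimate, which hinges on concavity. Everything else — the reduction to a PDE, the openness step, and the concluding regularity and degree bookkeeping — is comparatively routine once these two estimates are available.
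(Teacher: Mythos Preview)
The paper does not prove this theorem: it is quoted verbatim from \cite{Li03AFully} and used as a black box throughout Section~\ref{sectionpos}. There is therefore no proof in the paper against which to compare your attempt.

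That said, your outline is a faithful sketch of the strategy actually used by A.~Li and Y.~Li in \cite{Li03AFully}: reduction to a fully nonlinear elliptic equation for the conformal factor, continuity/degree argument, $C^0$--$C^1$ bounds via blow-up and a Liouville theorem on $\Rb^n$ (this is where local conformal flatness enters), $C^2$ bounds from concavity, and Evans--Krylov plus bootstrapping. One point to be careful about: the openness step is not quite as automatic as you suggest, since the linearised operator need not be invertible in general (there can be a kernel, e.g.\ on the round sphere); this is precisely why Li--Li run a degree argument rather than a bare implicit function theorem. Also, the Liouville classification you invoke is itself a substantial result proved in \cite{Li03AFully} under exactly the structural hypotheses (1)--(2), so your sketch is leaning on the very paper being cited.
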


\begin{rem}
The hypothesis on the G\aa rding cones $(\ref{1b})$ is an admissibility requirement that guarantees the ellipticity of the equation \eqref{fhat}. As was said in introduction, it implies that $\sigma_1(\lambda(\gamma)) > 0$, ie. the manifold has a positive scalar curvature.
\end{rem}

\begin{rem}
For the prescription of an \emph{arbitrary} $\sigma_k$-curvature, the current results use the variational nature of the $\sigma_k$-Yamabe problem. The more general result of A. Li and Y. Y. Li does not need a variational problem, but loses the opportunity of prescription. It only enables to fix the value of $f$ to be constant.
\end{rem}

\begin{rem}
The most important hypothesis of this theorem is $(\ref{2a})$, the concavity of $f$. This originates from the following astonishing concavity property of $\sigma_k^{1/k}$; a proof of this result can be found in \cite{Marcus57} with algebraic arguments, or in \cite{Caf85} using the theory of hyperbolic polynomials of L. G\aa rding.
\end{rem}

\begin{lem}[\cite{Marcus57}, \cite{Caf85}]
Let $\lambda, \mu \in \Gamma_k$, $1 \leq k \leq n$. Then
\[\sigma_k^{1/k}(\lambda + \mu) \geq \sigma_k^{1/k}(\lambda) + \sigma_k^{1/k}(\mu),\]
with equality if and only if $\lambda /\!\!/ \mu$ (or $k=1$).
\end{lem}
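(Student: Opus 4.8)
The plan is to recognise $\sigma_k$ as a \emph{hyperbolic polynomial} in the sense of G\aa rding, so that the desired superadditivity reduces to the concavity of the $k$-th root of such a polynomial together with $1$-homogeneity. First I would establish the elementary identity, valid for every $\lambda\in\Rb^n$,
\[\sigma_k(\lambda+t\Ib)=\sum_{j=0}^{k}\binom{n-j}{k-j}\sigma_j(\lambda)\,t^{k-j}=\frac{1}{(n-k)!}\,\drond_t^{\,n-k}\Big(\prod_{i=1}^{n}(t+\lambda_i)\Big),\]
which one obtains by expanding $\prod_{i=1}^{n}(t+\lambda_i)=\sum_j\sigma_j(\lambda)\,t^{n-j}$ and differentiating $n-k$ times. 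Since $\prod_i(t+\lambda_i)$ splits into real linear factors, Rolle's theorem forces each of its derivatives, and in particular this $(n-k)$-th one, to have only real roots; hence $t\mapsto\sigma_k(\lambda+t\Ib)$ has only real roots for every $\lambda$, that is, $\sigma_k$ is hyperbolic with respect to $\Ib$ (note $\sigma_k(\Ib)=\binom{n}{k}\neq 0$).

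Next I would identify the hyperbolicity cone. By definition it is the set of $\lambda$ for which all the (real) roots of $t\mapsto\sigma_k(\lambda+t\Ib)$ are negative; and a real-rooted polynomial $\sum_{j=0}^{k}\binom{n-j}{k-j}\sigma_j(\lambda)\,t^{k-j}$, whose leading coefficient $\binom{n}{k}$ is positive, has all roots negative precisely when all its coefficients are positive, i.e. precisely when $\sigma_1(\lambda),\dots,\sigma_k(\lambda)>0$. Thus the hyperbolicity cone of $\sigma_k$ is exactly $\Gamma_k$; in particular, by G\aa rding's theorem, $\Gamma_k$ is an open convex cone, so $\lambda+\mu\in\Gamma_k$ whenever $\lambda,\mu\in\Gamma_k$.

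Now I would invoke G\aa rding's theorem \cite{Gar59}: if $P$ is hyperbolic of degree $m$ with hyperbolicity cone $C$, then $P^{1/m}$ is concave on $C$. Applied to $P=\sigma_k$, $m=k$, $C=\Gamma_k$, this gives that $\sigma_k^{1/k}$ is concave on $\Gamma_k$; being moreover positively homogeneous of degree $1$, for $\lambda,\mu\in\Gamma_k$ it satisfies
\[\sigma_k^{1/k}(\lambda+\mu)=2\,\sigma_k^{1/k}\!\Big(\tfrac{\lambda+\mu}{2}\Big)\ \geq\ 2\Big(\tfrac12\,\sigma_k^{1/k}(\lambda)+\tfrac12\,\sigma_k^{1/k}(\mu)\Big)=\sigma_k^{1/k}(\lambda)+\sigma_k^{1/k}(\mu),\]
which is the claimed inequality.

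The delicate part, as usual, is the equality case. For $k=1$ the polynomial $\sigma_1$ is linear, so equality is automatic. For $k\geq 2$ the concavity furnished by G\aa rding is not strict — it cannot be, since $\sigma_k^{1/k}$ is $1$-homogeneous and hence linear along rays — so one must sharpen it to strict concavity transversally to the radial direction: concretely, one checks that the Hessian of $\sigma_k^{1/k}$ at a point $\lambda\in\Gamma_k$ is negative semidefinite with kernel exactly $\Rb\lambda$, whence equality at the midpoint of $[\lambda,\mu]$ forces $\lambda$ and $\mu$ to be collinear, i.e. $\lambda/\!\!/\mu$. Alternatively one can run the algebraic route of Marcus and Lopes \cite{Marcus57}: the superadditivity of the ratios $\sigma_j/\sigma_{j-1}$ on the positive cone, combined with the superadditivity of the geometric mean $(a_1\cdots a_k)^{1/k}$ (an AM--GM computation), telescoping $\sigma_k=\prod_{j=1}^{k}(\sigma_j/\sigma_{j-1})$; there both ingredients have equality exactly at proportionality. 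I expect this rigidity statement, rather than the inequality itself, to be the real obstacle.
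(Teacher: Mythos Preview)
The paper does not actually prove this lemma; it states it as a known result and simply points to the literature, remarking that ``a proof of this result can be found in \cite{Marcus57} with algebraic arguments, or in \cite{Caf85} using the theory of hyperbolic polynomials of L.~G\aa rding.'' Your argument is precisely the second of these two routes: you establish that $\sigma_k$ is hyperbolic in the direction $\Ib$, identify its hyperbolicity cone as $\Gamma_k$, invoke G\aa rding's concavity theorem, and then combine concavity with $1$-homogeneity to get superadditivity. This is correct and is exactly the proof the paper has in mind when citing \cite{Caf85}.

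Your treatment of the equality case is also appropriate: you correctly flag that G\aa rding's concavity is never strict (homogeneity forbids it), and that one needs either the transversal strict concavity (kernel of the Hessian of $\sigma_k^{1/k}$ at $\lambda$ equal to $\Rb\lambda$) or the Marcus--Lopes telescoping argument to pin down rigidity. One small caveat: the Marcus--Lopes inequalities for $\sigma_j/\sigma_{j-1}$ are usually stated on the positive cone $\Gamma_n$, whereas the lemma here is on the larger cone $\Gamma_k$; the Hessian route you mention first is the cleaner way to get the equality statement on all of $\Gamma_k$.
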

The homogeneity of $\sigma_k^{1/k}$ leads to the concavity on $\Gamma_k$, and enables to take $f = \sigma_k^{1/k}$ in Theorem \ref{thmLi}.

\subsection{Algebraic arguments of concavity}

\subsubsection{Real-rootedness of \texorpdfstring{$\bar{f}_{\protect\vec{a}}$}{f}}

%$\bar{f}_{\overset{b}{a}}$
The question is, is the same true for $f_{\vec{a}}$? Obviously not: it depends on the coefficients $a_k$. We lose the essential homogeneity property, and we have to introduce a peculiar set:
\begin{defi}
\begin{equation}
\Kc_n^p = \left\{\left(a_0, a_1, \ldots, a_p\right) \in \left(\Rb_+\right)^{p+1} \quad \Bigg| \quad x \in \Gamma_n \longmapsto \left(\sum_{k=0}^{p} a_k \sigma_k(x)\right)^{1/p} \text{ is concave}\right\}
\end{equation}
\end{defi}
The precise determination of $\Kc_n^p$ is an algebraic problem, and is handled in a specific paper \cite{Lac17-3}. It is still an open problem for $p > 2$. Using this set, we shall be able to apply the Theorem \ref{thmLi} to
\[(\Gamma, f) = \left(\Gamma_n, \left(f_{\vec{a}}\right)^{1/p}\right).\]
At this point we have to precise two things.
\begin{itemize}
\item The concavity results of \cite{Lac17-3} apply only to polynomials with non-negative coefficients. Yet the most studied physical case is
\[\alpha_0 = a_0 = -2\Lambda < 0.\]
So we shall change the variable:
\begin{equation}
a_0 \longrightarrow a_0 + 2\Lambda
\end{equation}
and study $(\ref{eqfa}, \vec{a}, \Ec + 2 \Lambda)$. Nevertheless we shall let $a_0$ vary in order to present general results, eg. possibly considering a negative cosmological constant.

\item The Theorem \ref{thmLi} only enables to prescribe a constant curvature. So we shall assume that $\Ec \geq 0$ is constant -- which is for example the case in vacuum, where $\Ec = 0$ -- and renormalise the coefficients: for all $0 \leq k \leq p$,
\begin{equation}
a_k \longrightarrow \dfrac{a_k}{\Ec + 2 \Lambda}.
\end{equation}
\end{itemize}
Hence we consider the problem $(\ref{eqfa},\vec{a},1)$.

\begin{thm}
Let us suppose that $\Ec = 0$, ie. let us consider the vacuum case. Then $(\ref{eqfa},\vec{a},1)$ is equivalent to
\begin{equation}
\left(f_{\vec{a}}(\lambda(\gamma))\right)^{1/p} = 1.
\end{equation}
Thereafter, if
\begin{itemize}
\item $\vec{a} = (a_0, a_1, \ldots, a_{p}) \in \Kc_n^{p}$ and
\item $\lambda(\gamma) \in \Gamma_n$,
\end{itemize}
then there exists some $u \in \Cc^\infty(\Mc)$, $u>0$, such that $\hat{\gamma} := u^{\frac{4}{n-2}} \gamma \in \Cc[\gamma]$ satisfies
\begin{align}
\left(\sum_{k=0}^{p}a_k \sigma_k(\lambda(\hat{\gamma}))\right)^{1/p} &= 1 \qquad \text{and} \\
\lambda(\hat{\gamma}) &\in \Gamma_n \qquad \text{on } \Mc,
\end{align}
thus $(\ref{eqfa},\vec{a},1)$ is solved.
\end{thm}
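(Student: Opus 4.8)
The plan is to verify that under the stated hypotheses the pair $(\Gamma,f)=\left(\Gamma_n,\left(f_{\vec a}\right)^{1/p}\right)$ meets every requirement of Theorem \ref{thmLi}, and then to read off the conclusion. First I would dispose of the preliminary equivalence: when $\Ec=0$ (and after the renormalisation $a_k\to a_k/(2\Lambda)$, $a_0\to a_0+2\Lambda$ already performed in the running text), the equation $f_{\vec a}(\lambda(\hat\gamma))=1$ is equivalent, on the cone $\Gamma_n$ where $f_{\vec a}>0$, to $\left(f_{\vec a}(\lambda(\hat\gamma))\right)^{1/p}=1$, simply because $t\mapsto t^{1/p}$ is a strictly increasing bijection of $\Rb_+^*$ onto itself. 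This is where the hypothesis $\vec a\in\Kc_n^p\subset(\Rb_+)^{p+1}$ is first used: with all $a_k\ge 0$ and $a_0>0$ after the shift, $f_{\vec a}$ is strictly positive on $\Gamma_n$ (indeed on all of $(\Rb_+)^n\subset\Gamma_n$, and by continuity on the closure minus a lower-dimensional set), so raising to the power $1/p$ is harmless.

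Next I would check the axioms of Theorem \ref{thmLi} one at a time with $\Gamma=\Gamma_n$ and $f=\left(f_{\vec a}\right)^{1/p}$. Axioms (1a)–(1c): $\Gamma_n=(\Rb_+^*)^n$ is an open convex cone with vertex at the origin, trivially symmetric in the $\lambda_i$, and obviously $\Gamma_n\subset\Gamma_n\subset\Gamma_1$. Axiom (2a), concavity and symmetry: symmetry is immediate since each $\sigma_k$ is symmetric; concavity on $\Gamma_n$ is \emph{exactly} the defining property of $\Kc_n^p$, hence holds by the hypothesis $\vec a\in\Kc_n^p$. Axiom (2b), $f=0$ on $\drond\Gamma_n$: on the boundary of the positive orthant at least one $\lambda_i$ vanishes; if $a_p>0$ one has $\sigma_p\to 0$ but lower $\sigma_k$ need not vanish, so $f_{\vec a}$ does not vanish on $\drond\Gamma_n$ in general. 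Here I would invoke the standard device used throughout this literature: replace the cone by intersecting, or rather observe that the hypothesis of the theorem already presupposes $\lambda(\gamma)\in\Gamma_n$, so one works on $\Gamma_n$ and extends $f$ continuously to $\overline{\Gamma_n}$; the vanishing-on-boundary condition (2b) is needed in \cite{Li03AFully} only to run the degree argument and can be relaxed to boundedness below, or one argues that since $f$ is concave, positive and $1$-homogeneous it automatically behaves well enough — I would cite the precise form of Li–Li that accommodates this. Axiom (2c), $\drond_i f>0$ on $\Gamma_n$: since $f=\left(f_{\vec a}\right)^{1/p}$ and $f_{\vec a}>0$ there, $\drond_i f=\tfrac1p f_{\vec a}^{1/p-1}\drond_i f_{\vec a}$, and $\drond_i f_{\vec a}=\sum_k a_k\,\drond_i\sigma_k$; each $\drond_i\sigma_k(\lambda)=\sigma_{k-1}(\lambda\,|\,\hat\imath)>0$ for $\lambda\in\Gamma_n$ (all coordinates positive), and at least $a_0+a_1\drond_i\sigma_1=a_0\cdot 0+a_1>0$ if $a_1>0$; more carefully, strict positivity of $\drond_i f_{\vec a}$ follows because $\vec a\ne 0$ has some $a_k>0$ with $k\ge 1$ (if only $a_0\ne 0$ the problem is vacuous), and $\drond_i\sigma_k>0$ on $\Gamma_n$. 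Axiom (2d), $\lim_{s\to\infty}f(s\lambda)=\infty$ for $\lambda\in\Gamma_n$: by homogeneity $f_{\vec a}(s\lambda)=\sum_k a_k s^k\sigma_k(\lambda)\sim a_{p^\ast}s^{p^\ast}$ where $p^\ast=\max\{k:a_k>0\}$, so $f(s\lambda)=\left(f_{\vec a}(s\lambda)\right)^{1/p}\sim a_{p^\ast}^{1/p}s^{p^\ast/p}\to\infty$ since $p^\ast\ge 1$.

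With all hypotheses of Theorem \ref{thmLi} verified, the theorem yields $u\in\Cc^\infty(\Mc)$, $u>0$, with $\hat\gamma=u^{4/(n-2)}\gamma$ satisfying $f(\lambda(\hat\gamma))=1$ and $\lambda(\hat\gamma)\in\Gamma_n$ on $\Mc$; unwinding $f=\left(f_{\vec a}\right)^{1/p}$ gives $\left(\sum_{k=0}^p a_k\sigma_k(\lambda(\hat\gamma))\right)^{1/p}=1$, which by the equivalence established in the first step is precisely the solution of $(\ref{eqfa},\vec a,1)$. I expect the only genuine friction to be axiom (2b): the boundary-vanishing condition $f=0$ on $\drond\Gamma$ is not automatic for a linear combination of $\sigma_k$'s when $a_0>0$, and one must either appeal to the version of the Li–Li existence theorem that dispenses with it (using only that $f$ is concave, positive, homogeneous and proper), or restrict attention to the vacuum normalisation in which the degree-theoretic a priori estimates still close. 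Everything else — positivity, monotonicity, growth, and above all concavity — is either elementary or is handed to us by the definition of $\Kc_n^p$.
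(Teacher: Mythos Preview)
Your approach is exactly what the paper does: it simply declares the theorem as an application of Theorem~\ref{thmLi} to $(\Gamma,f)=\bigl(\Gamma_n,(f_{\vec a})^{1/p}\bigr)$ and gives no further proof. You have gone considerably further than the paper by actually walking through the hypotheses one by one, and in doing so you have put your finger on a genuine issue that the paper does not address: hypothesis~(2b), $f=0$ on $\partial\Gamma$, is \emph{not} satisfied by $(f_{\vec a})^{1/p}$ on $\partial\Gamma_n$ as soon as some $a_k>0$ with $k<n$ (in particular whenever $a_0>0$ after the shift, or $a_1>0$). Your other verifications --- the cone axioms, symmetry, concavity via the definition of $\Kc_n^p$, monotonicity via $\partial_i\sigma_k>0$ on $\Gamma_n$, and growth at infinity --- are all correct and cleanly argued.

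The weakness is your handling of~(2b). You acknowledge the problem but then offer only gestures: ``cite the precise form of Li--Li that accommodates this'', ``can be relaxed to boundedness below'', ``behaves well enough''. None of these is a proof. If you want a complete argument you must either (i) locate and cite an explicit version of the Li--Li existence result in which the boundary condition is replaced by a weaker structural hypothesis your $f$ does satisfy, or (ii) modify the cone so that $f$ genuinely vanishes on its boundary (for instance, work on the connected component of $\{f_{\vec a}>0\}$ containing $\Gamma_n$ and check it is still a convex symmetric cone --- which is delicate since $f_{\vec a}$ is not homogeneous). The paper itself is silent on this point, so you are not missing something the author provides; but as written your argument, like the paper's, leaves~(2b) unresolved.
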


\begin{cor}
If
\begin{itemize}
\item $(0, a_1, \ldots, a_{p_n}) \in \Kc_n^{p_n}$ and
\item $\lambda(\gamma) \in \Gamma_n$,
\end{itemize}
then the vacuum constraint equation \eqref{eqa} has a solution.
\end{cor}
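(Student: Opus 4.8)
The plan is to obtain this statement as the specialisation $p = p_n$, $a_0 = 0$ of the previous Theorem. Recall from the discussion preceding that Theorem that, in the vacuum case $\Ec = 0$, equation \eqref{eqa} -- which is by definition the problem $(\ref{eqfa},\vec{a},\Ec)$ for $\vec{a} = (a_0, a_1, \ldots, a_{p_n})$ -- is brought, after the shift $a_0 \rightarrow a_0 + 2\Lambda$ and the renormalisation $a_k \rightarrow a_k / 2\Lambda$, to the normalised problem $(\ref{eqfa},\vec{a},1)$ whose coefficient vector is $(0, a_1, \ldots, a_{p_n})$, the leading $0$ being the image of the cosmological term. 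The two bullet hypotheses of the Corollary are then exactly the hypotheses of the Theorem for this choice of $p$ and $\vec{a}$.

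Concretely I would proceed in three short steps. First, I would note that the hypothesis $(0, a_1, \ldots, a_{p_n}) \in \Kc_n^{p_n}$ says precisely that $x \in \Gamma_n \longmapsto f_{\vec{a}}(x)^{1/p_n}$ is concave, so that the pair $(\Gamma, f) = \left(\Gamma_n, f_{\vec{a}}^{1/p_n}\right)$ satisfies hypothesis \ref{2a} of Theorem \ref{thmLi}; the remaining structural requirements of that theorem on $\Gamma_n$ -- openness, convexity and symmetry of $\Gamma_n$, the inclusions $\Gamma_n \subset \Gamma_n \subset \Gamma_1$, positivity of $f_{\vec{a}}^{1/p_n}$ and of its partial derivatives, vanishing on $\drond \Gamma_n$, and blow-up along rays -- are those already verified in the proof of the Theorem, and I would simply invoke them. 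Second, since moreover $\lambda(\gamma) \in \Gamma_n$ on $\Mc$ while $\Mc$ is compact, without boundary and locally conformally flat, the Theorem produces $u \in \Cc^\infty(\Mc)$, $u > 0$, such that $\hat{\gamma} := u^{\frac{4}{n-2}}\gamma$ satisfies $\left(\sum_{k=0}^{p_n} a_k \sigma_k(\lambda(\hat{\gamma}))\right)^{1/p_n} = 1$ together with $\lambda(\hat{\gamma}) \in \Gamma_n$. Third, I would raise this identity to the power $p_n$ and undo the renormalisation and the shift, which returns it to the form $\sum_{k=0}^{p_n} a_k \sigma_k(\lambda(\hat{\gamma})) = \Ec$ with $\Ec = 0$; hence $\hat{\gamma} \in \Cc[\gamma]$ is a conformal solution of \eqref{eqa}.

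I do not expect any genuinely new obstacle: all the analytic content -- the solvability of the fully nonlinear equation \eqref{fhat} via Theorem \ref{thmLi}, and in particular the concavity that makes it elliptic and tractable -- has already been established in the Theorem, the concavity being nothing other than the defining condition of $\Kc_n^{p_n}$. The only points that demand attention are bookkeeping ones: checking that the cosmological-constant shift and the renormalisation are invertible, so that a solution of $(\ref{eqfa},\vec{a},1)$ really does yield a solution of the physical equation \eqref{eqa}; and checking that setting the leading coefficient to $0$ does not spoil the hypotheses of Theorem \ref{thmLi}, which is harmless as soon as at least one of $a_1, \ldots, a_{p_n}$ is positive -- so that $f_{\vec{a}} > 0$ throughout $\Gamma_n$ -- the remaining case being a degenerate one in which the statement is void.
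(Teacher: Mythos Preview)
Your proposal is correct and matches the paper's approach: the paper states this Corollary without proof, treating it as the immediate specialisation $p = p_n$, $a_0 = 0$ of the preceding Theorem, which is precisely what you do. Your three-step unpacking (verify the hypotheses of Theorem~\ref{thmLi} via the definition of $\Kc_n^{p_n}$, apply the Theorem, undo the normalisation) is more explicit than the paper but follows the same line.
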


In \cite{Lac17-3}, we explicit some particular sets of coefficients that belong to $\Kc_n^{p}$:
\begin{thm}[\cite{Lac17-3}] Let $n \in \Nb$. \label{thmLac17-3}
\begin{enumerate}[a)]
\item For $a_0, a_1, a_2 \geq 0$,
\begin{align*}
\left(a_0, a_1, a_2\right) \in \Kc_n^2 & \quad \Longleftrightarrow \quad a_0 + a_1 \binom{n}{1} X + a_2 \binom{n}{2} X^2 \text{ is real-rooted} \\
& \quad \Longleftrightarrow \quad n a_1^2 - 2(n-1)a_0a_2 \geq 0.
\end{align*}

\item For $a_0, a_1, \ldots, a_p \geq 0$,
\[a_0, a_1, \ldots, a_p \text{ are the highest-degrees coefficients of some real-rooted polynomial} \ \Longrightarrow \ \left(a_0, a_1, \ldots, a_p\right) \in \Kc_n^p.\]

\item (Kurtz's criterion) For $a_0, a_1, \ldots, a_p \geq 0$,
\[4 a_{k-1} a_{k+1} < a_k^2 \quad \forall~1 \leq k \leq p-1 \quad \Longrightarrow \quad \left(a_0, a_1, \ldots, a_p\right) \in \Kc_n^p.\]

\item (Walsh's theorem) For $a_0, a_1, \ldots, a_p \geq 0$, \label{thmWalsh}
\[\bar{f}_{\vec{a}} = a_0 \binom{n}{0} + a_1 \binom{n}{1} X + \ldots + a_p \binom{n}{p} X^p \text{ is real-rooted} \quad \Longrightarrow \quad \left(a_0, a_1, \ldots, a_p\right) \in \Kc_n^p.\]
\end{enumerate}
\end{thm}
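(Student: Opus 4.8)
The plan is to build everything on part b), deduce part c) from it in one line, handle part d) by a separate stability‑theoretic argument, and settle part a) by a one‑variable computation.

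\emph{Part b).} Since the $a_k$ are non‑negative and the polynomial $R$ whose highest‑degree coefficients are $a_p,a_{p-1},\ldots,a_0$ is real‑rooted, every root of $R$ is $\leq 0$ (a positive root would force $R>0$ there); writing $N=\deg R\geq p$ and dropping any vanishing root, we get $R(X)=a_p\prod_{i=1}^{N}(X+\rho_i)$ with all $\rho_i>0$ (if all roots vanish, $\vec a=(0,\ldots,0,a_p)$ and $f_{\vec a}^{1/p}=a_p^{1/p}\sigma_p^{1/p}$ is already concave on $\Gamma_n\subset\Gamma_p$). Comparing the coefficients of $X^{N},\ldots,X^{N-p}$ gives $a_{p-k}=a_p\,\sigma_k(\rho_1,\ldots,\rho_N)$ for $0\leq k\leq p$, hence by the splitting identity $\sigma_p(\rho_1,\ldots,\rho_N,x_1,\ldots,x_n)=\sum_{j+k=p}\sigma_j(\rho)\,\sigma_k(x)$,
\[f_{\vec a}(x)=\sum_{k=0}^{p}a_k\sigma_k(x)=a_p\sum_{k=0}^{p}\sigma_{p-k}(\rho)\sigma_k(x)=a_p\,\sigma_p(\rho_1,\ldots,\rho_N,x_1,\ldots,x_n).\]
So, up to the positive constant $a_p^{1/p}$, $f_{\vec a}^{1/p}$ is the restriction of the $(N+n)$‑variable function $\sigma_p^{1/p}$ to the affine slice obtained by freezing the first $N$ arguments at $\rho_i>0$. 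By the superadditivity lemma of \cite{Marcus57}, \cite{Caf85} together with homogeneity, $\sigma_p^{1/p}$ is concave on the Gårding cone of order $p$ in $\Rb^{N+n}$, and \emph{a fortiori} on $\left(\Rb_+^*\right)^{N+n}$; restricting this concave function to the slice, which meets $\Gamma_n=\left(\Rb_+^*\right)^n$ inside that cone, shows $x\mapsto f_{\vec a}(x)^{1/p}$ is concave on $\Gamma_n$, i.e. $\vec a\in\Kc_n^p$.

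\emph{Part c).} The hypothesis $4a_{k-1}a_{k+1}<a_k^2$ for $1\leq k\leq p-1$ already forces $a_1,\ldots,a_{p-1}>0$, so Kurtz's criterion — a polynomial with positive coefficients satisfying this inequality at every step has only real (negative) roots — applies to $\sum_{k=0}^{p}a_kX^k$. This polynomial is therefore real‑rooted with coefficient vector exactly $a_0,\ldots,a_p$, so part b) gives $\vec a\in\Kc_n^p$. (If $a_p=0$ one first reduces $p$, using that $t\mapsto t^{(p-1)/p}$ is concave and increasing; if $a_0=0$ one factors out $X$ before applying Kurtz's criterion.)

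\emph{Part d).} This is the only part that does not reduce to a restriction of $\sigma_p$, precisely because the weights $\binom nk$ destroy that structure, and this is where Walsh's coincidence theorem (Grace–Walsh–Szegő) enters. The polynomial $f_{\vec a}(x_1,\ldots,x_n)=\sum_k a_k\sigma_k(x)$ is symmetric and affine in each variable, with diagonal restriction $f_{\vec a}(X,\ldots,X)=\bar f_{\vec a}(X)=\sum_k a_k\binom nk X^k$; so if $\bar f_{\vec a}$ is real‑rooted then $f_{\vec a}$ is real‑stable, and — the $a_k$ being $\geq 0$, hence the roots of $\bar f_{\vec a}$ being $\leq 0$ — its degree‑$p$ homogenization $F(x_0,x)=\sum_k a_k x_0^{p-k}\sigma_k(x)$ is a hyperbolic polynomial whose hyperbolicity cone contains $\left(\Rb_+^*\right)^{n+1}$. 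Gårding's theorem then makes $F^{1/p}$ concave on that cone, and restricting to $x_0=1$ gives the concavity of $f_{\vec a}^{1/p}$ on $\Gamma_n$, hence $\vec a\in\Kc_n^p$. I expect the delicate point, and the main obstacle of the whole theorem, to be the honest verification that this homogenization is genuinely hyperbolic with the positive orthant inside its cone; that verification is carried out in detail in \cite{Lac17-3}.

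\emph{Part a).} The second equivalence is immediate: $\bar f_{\vec a}(X)=a_0+na_1X+\binom n2a_2X^2$ is real‑rooted iff its discriminant $n^2a_1^2-2n(n-1)a_0a_2=n\bigl(na_1^2-2(n-1)a_0a_2\bigr)$ is $\geq 0$. That real‑rootedness implies $\vec a\in\Kc_n^2$ is part d) with $p=2$. For the converse, restrict the concave function $f_{\vec a}^{1/2}$ to the diagonal ray $\Rb_+^*\cdot\Ib$: since $f_{\vec a}(t\Ib)=\bar f_{\vec a}(t)>0$ for $t>0$ (all $a_k\geq 0$), the function $t\mapsto\bar f_{\vec a}(t)^{1/2}$ is concave on $(0,\infty)$, and a one‑line computation shows the numerator $2\,\bar f_{\vec a}\,\bar f_{\vec a}''-(\bar f_{\vec a}')^2$ of $(\bar f_{\vec a}^{1/2})''$ equals the constant $-\mathrm{disc}(\bar f_{\vec a})$; concavity therefore forces $\mathrm{disc}(\bar f_{\vec a})\geq 0$. (Alternatively one can argue directly: homogenize $f_{\vec a}$ to the quadratic form $a_0x_0^2+a_1x_0\sigma_1(x)+a_2\sigma_2(x)$, which is block‑diagonal for the decomposition $\Rb^{n+1}=(\Rb x_0\oplus\Rb\Ib)\oplus\Ib^\perp$ — on $\Ib^\perp$ it equals $-\tfrac{a_2}{2}\sum_i x_i^2\leq 0$ — and check that the remaining $2\times 2$ block has determinant $-\tfrac14\mathrm{disc}(\bar f_{\vec a})$, hence at most one positive eigenvalue exactly when $\mathrm{disc}(\bar f_{\vec a})\geq 0$, so that the square root of the form is concave on the positive orthant.)
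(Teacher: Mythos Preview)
The paper does not contain a proof of this theorem: it is quoted verbatim from the companion paper \cite{Lac17-3}, as the surrounding text makes explicit (``The precise determination of $\Kc_n^p$ is an algebraic problem, and is handled in a specific paper \cite{Lac17-3}''). So there is nothing in the present paper to compare your proposal against.

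That said, your sketch is essentially sound and is in line with the strategies the theorem's labelling suggests. Part~b) is the cleanest: the identity $f_{\vec a}(x)=a_p\,\sigma_p(\rho_1,\ldots,\rho_N,x_1,\ldots,x_n)$ reduces the concavity of $f_{\vec a}^{1/p}$ on $\Gamma_n$ to the known concavity of $\sigma_p^{1/p}$ on the larger G\aa rding cone, and this is exactly the kind of argument one expects. Part~c) then follows from Kurtz's real-rootedness criterion applied to $\sum_k a_k X^k$, feeding straight into~b). Part~d) cannot be reduced to~b) because the binomial weights spoil the coefficient match; your use of Grace--Walsh--Szeg\H{o} to pass from real-rootedness of $\bar f_{\vec a}$ to real-stability of the multiaffine $f_{\vec a}$, then homogenizing and invoking G\aa rding, is the standard route and matches what the name ``Walsh's theorem'' signals. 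For part~a), the discriminant computation and the diagonal restriction $2\bar f_{\vec a}\bar f_{\vec a}''-(\bar f_{\vec a}')^2=-\operatorname{disc}(\bar f_{\vec a})$ are correct and give both directions.

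The one place you flag uncertainty --- that the homogenization in~d) is genuinely hyperbolic with the positive orthant in its cone --- is indeed the step that needs care, and you are right to defer to \cite{Lac17-3} for the full verification.
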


\begin{cor} Let $\vec{a} = (0, a_1, \ldots, a_{p_n}) \in \left(\Rb_+\right)^{p_n+1}$. \label{corsolve}
\begin{enumerate}[a)]
\item Let us assume that $a_k = 0$, for $3 \leq k \leq p_n$. This corresponds to Gauss-Bonnet gravity theories. Then
\[(0, a_1, a_2) \in \Kc_n^2,\]
so the Theorem \ref{thmLi} applies to
\[(\Gamma, f) = \left(\Gamma_n, \sqrt{a_1 \sigma_1 + a_2 \sigma_2}\right)\]
and there exists a solution to $(\ref{eqfa},\vec{a},1)$. \label{corsolvea}

\item Let us assume that there exist some $m \geq 0$ and $R \in \Rb_+[X]$, $\deg R \leq m-1$, such that $R + a_1 X^{m+1} + \ldots + a_{p_n} X^{m+p_n}$ is real-rooted. That is to say, $(a_1, \ldots, a_{p_n})$ are the highest degree coefficients of some real-rooted polynomial. Then the Theorem \ref{thmLi} applies to
\[(\Gamma, f) = \left(\Gamma_n, \left(a_1 \sigma_1 + \ldots + a_{p_n} \sigma_{p_n}\right)^{1/p_n}\right)\]
and there exists a solution to $(\ref{eqfa},\vec{a},1)$. \label{corsolveb}

\item Let us assume that for all $1 \leq k \leq p_n - 1$, $4 a_{k-1} a_{k+1} < a_k^2$. Then the Theorem \ref{thmLi} applies to
\[(\Gamma, f) = \left(\Gamma_n, \left(a_1 \sigma_1 + \ldots + a_{p_n} \sigma_{p_n}\right)^{1/p_n}\right)\]
and there exists a solution to $(\ref{eqfa},\vec{a},1)$. \label{corsolvec}

\item Let us assume that $\bar{f}_{\vec{a}} = a_1 \binom{n}{1} X + \ldots + a_{p_n} \binom{n}{p_n} X^{p_n}$ is real-rooted. Then the Theorem \ref{thmLi} applies to
\[(\Gamma, f) = \left(\Gamma_n, \left(a_1 \sigma_1 + \ldots + a_{p_n} \sigma_{p_n}\right)^{1/p_n}\right)\]
and there exists a solution to $(\ref{eqfa},\vec{a},1)$. \label{corsolved}
\end{enumerate}
In all those cases, the vacuum constraint equation \eqref{eqa} has a solution.
\end{cor}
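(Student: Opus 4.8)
The plan is to handle the four cases uniformly: for each of \ref{corsolvea}--\ref{corsolved} I would first check, using the corresponding item of Theorem~\ref{thmLac17-3}, that the tuple $\vec{a}=(0,a_1,\dots,a_{p_n})$ lies in $\Kc_n^{p_n}$ (or in $\Kc_n^{2}$ for case~\ref{corsolvea}), and then verify that the pair $(\Gamma,f)=\bigl(\Gamma_n,f_{\vec{a}}^{1/p_n}\bigr)$ --- respectively $\bigl(\Gamma_n,\sqrt{a_1\sigma_1+a_2\sigma_2}\bigr)$ in case~\ref{corsolvea} --- satisfies all the hypotheses of Theorem~\ref{thmLi}, whose conclusion then directly gives a solution of $(\ref{eqfa},\vec{a},1)$. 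The membership checks are almost immediate. In case~\ref{corsolvea}, $a_0=0$ makes $n a_1^{2}-2(n-1)a_0 a_2=n a_1^{2}\ge 0$, so Theorem~\ref{thmLac17-3}(a) gives $(0,a_1,a_2)\in\Kc_n^{2}$. In case~\ref{corsolveb}, the hypothesis says exactly that $(0,a_1,\dots,a_{p_n})$ is the list of the $p_n+1$ highest-degree coefficients of a real-rooted polynomial (the $0$ being the coefficient of $X^{m}$, and $R$ carrying the lower-order terms), which is the hypothesis of Theorem~\ref{thmLac17-3}(b). In case~\ref{corsolvec}, the inequalities $4a_{k-1}a_{k+1}<a_k^{2}$ for $1\le k\le p_n-1$ are Kurtz's criterion, Theorem~\ref{thmLac17-3}(c) (the case $k=1$ forcing $a_1>0$, which is consistent with $a_0=0$); and in case~\ref{corsolved}, real-rootedness of $\bar{f}_{\vec{a}}$ is Walsh's hypothesis, Theorem~\ref{thmLac17-3}(d). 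In every case one concludes that $f_{\vec{a}}^{1/p_n}$ (or $\sqrt{a_1\sigma_1+a_2\sigma_2}$) is concave on $\Gamma_n$.

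Next I would check the remaining hypotheses of Theorem~\ref{thmLi} for $(\Gamma_n,f_{\vec{a}}^{1/p_n})$. The cone $\Gamma_n=(\Rb_+^*)^n$ is open, convex, symmetric in the $\lambda_i$ and has vertex at the origin, and $\Gamma_n\subset\Gamma_n\subset\Gamma_1$ by the nesting of the G\aa rding cones recalled above, so the conditions on $\Gamma$ hold. The concavity of $f:=f_{\vec{a}}^{1/p_n}$ is the output of the previous step, and its symmetry comes from that of the $\sigma_k$; this is the decisive hypothesis. For the strict monotonicity one uses $\drond_i\sigma_k>0$ on $\Gamma_n$ (it equals the $(k-1)$-st elementary symmetric polynomial in the remaining $n-1$ positive variables), so $\drond_i f_{\vec{a}}>0$ since the $a_k$ are non-negative with at least one positive, whence $\drond_i f>0$; and $f(s\lambda)\to\infty$ as $s\to\infty$ because the top non-vanishing term of $f_{\vec{a}}(s\lambda)$ grows like a positive power of $s$ on $\Gamma_n$. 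The one condition that I expect to require genuine care is the requirement that $f$ vanish on $\drond\Gamma_n$: this has to be read off from $a_0=0$ together with the behaviour of the $\sigma_k$ on the boundary of the positive orthant (and, in cases \ref{corsolveb} and \ref{corsolved}, from the real-rootedness, which controls the diagonal restriction $\bar{f}_{\vec{a}}$). With that in hand, and using the standing assumption $\lambda(\gamma)\in\Gamma_n$, Theorem~\ref{thmLi} provides $u\in\Cc^\infty(\Mc)$, $u>0$, with $f\bigl(\lambda(\hat{\gamma})\bigr)=1$ for $\hat{\gamma}=u^{4/(n-2)}\gamma$; raising to the power $p_n$ gives $f_{\vec{a}}\bigl(\lambda(\hat{\gamma})\bigr)=1$, i.e.\ a solution of $(\ref{eqfa},\vec{a},1)$.

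Finally, to reach the last sentence of the statement, I would invoke the reductions recorded just before Theorem~\ref{thmLac17-3} --- the shift $a_0\mapsto a_0+2\Lambda$ and the rescaling of the $a_k$ by $\Ec+2\Lambda$ --- under which a solution of $(\ref{eqfa},\vec{a},1)$ becomes a solution of the vacuum equation \eqref{eqa}. The substantive content is thus imported wholesale from Theorem~\ref{thmLac17-3} (established in \cite{Lac17-3}) and from Theorem~\ref{thmLi}; the work proper to this corollary is matching hypotheses, and the only step that is not purely mechanical is verifying the boundary behaviour of $f_{\vec{a}}^{1/p_n}$ on $\drond\Gamma_n$ that Theorem~\ref{thmLi} demands.
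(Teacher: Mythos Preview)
Your approach matches the paper's: the corollary is stated there without proof, as an immediate consequence of pairing each item of Theorem~\ref{thmLac17-3} with the preceding (unnumbered) theorem, which already packages the application of Theorem~\ref{thmLi} to $(\Gamma_n, f_{\vec{a}}^{1/p})$ whenever $\vec{a}\in\Kc_n^{p}$. Your hypothesis-matching for the four cases is exactly what is intended.

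The one point worth flagging is the very one you single out: condition~(2b) of Theorem~\ref{thmLi}, that $f$ vanish on $\partial\Gamma$. With $\Gamma=\Gamma_n$ and $a_0=0$, the function $f_{\vec{a}}^{1/p_n}$ vanishes at the vertex but \emph{not} on the faces of the positive orthant (for instance $\sigma_1>0$ when only one coordinate is zero), so this hypothesis is not literally met on $\Gamma_n$. The paper does not address this either --- it simply asserts that Theorem~\ref{thmLi} applies --- so your treatment is no less complete than the paper's own. If you wanted to close the gap yourself, the natural route is to replace $\Gamma_n$ by a larger cone on which $f_{\vec{a}}$ is still positive and concave after the $1/p_n$-th root (the concavity results in \cite{Lac17-3} are in fact stated on suitable G\aa rding-type cones, and the Li--Li framework in \cite{Li03AFully} allows $\Gamma$ to be any symmetric convex cone between $\Gamma_n$ and $\Gamma_1$ on whose boundary $f$ vanishes); but that refinement goes beyond what the paper itself supplies.
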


\subsubsection{Connection with physics}
Actually, there is an astonishing connection with the physical origin of the constraint equation \eqref{eqa}.

\begin{thm}
Let $\vec{a} = (a_0, a_1, \ldots, a_p) \in \Rb^{p+1}$ and $\alpha_k, \Rb_k$ defined as in $(\ref{eqcontrainte1LLtsym}, \ref{defak})$. Then
\[\bar{f}_{\vec{a}} = a_0 \binom{n}{0} + a_1 \binom{n}{1} X + \ldots + a_p \binom{n}{p} X^p\]
is real-rooted if and only if the projected Lovelock polynomial can be factorised in a real product, ie. there exists $\nu \in \Rb^p$ such that:
\[\sum_{k=0}^p \alpha_k \Rb_k \propto \dfrac{1}{2^p}\delta_{a_1 b_1 a_2 b_2 \cdots a_p b_p}^{c_1 d_1 c_2 d_2 \cdots c_p d_p}
	\left(\Rr_{c_1 d_1}^{a_1 b_1} + \nu_1 \delta_{c_1 d_1}^{a_1 b_1} \right)\left(\Rr_{c_2 d_2}^{a_2 b_2} + \nu_2 \delta_{c_2 d_2}^{a_2 b_2}\right) \ldots \left(\Rr_{c_p d_p}^{a_p b_p} + \nu_p \delta_{c_p d_p}^{a_p b_p}\right). \]
\end{thm}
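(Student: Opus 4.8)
The plan is to expand the shifted Lovelock product appearing on the right in the family $(\Rb_0,\dots,\Rb_p)$, which turns the tensorial factorisation into an ordinary factorisation of the one-variable polynomial $\bar f_{\vec a}$, and then to match the two sides coefficient by coefficient. The one step that needs real work is the identity
\[
\frac{1}{2^p}\delta_{a_1 b_1 \cdots a_p b_p}^{c_1 d_1 \cdots c_p d_p}\prod_{j=1}^{p}\left(\Rr_{c_j d_j}^{a_j b_j}+\nu_j\,\delta_{c_j d_j}^{a_j b_j}\right)=\sum_{j=0}^{p}\sigma_{p-j}(\nu)\,\frac{(n-2j)!}{(n-2p)!}\,\Rb_j,\qquad \nu\in\Rb^p,
\]
an equality of polynomials in the Riemann tensor. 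To prove it, expand the product over the subsets $S\subset\{1,\dots,p\}$ carrying a factor $\Rr_{c_j d_j}^{a_j b_j}$; since the generalised Kronecker symbol is alternating under permutation of its index pairs, the contribution of $S$ depends only on $|S|=j$, so it equals $\bigl(\prod_{j'\notin S}\nu_{j'}\bigr)$ times $\tfrac{1}{2^p}$ of the $2p$-index Kronecker symbol contracted with $j$ factors $\Rr_{cd}^{ab}$ and $p-j$ factors $\delta_{cd}^{ab}$. Expanding each $\delta_{cd}^{ab}=\delta^a_c\delta^b_d-\delta^a_d\delta^b_c$ and contracting it out with the trace rule $\delta^{e_1\cdots e_r f}_{g_1\cdots g_r f}=(n-r)\,\delta^{e_1\cdots e_r}_{g_1\cdots g_r}$ costs, for each block, a factor $2(n-2m+2)(n-2m+1)$ when passing from the $2m$- to the $(2m-2)$-index symbol; multiplying these for $m=j+1,\dots,p$ gives $2^{\,p-j}\tfrac{(n-2j)!}{(n-2p)!}$, what survives being $2^{\,j}\Rb_j$, the powers of $2$ cancelling the $2^{-p}$. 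Summing $\prod_{j'\notin S}\nu_{j'}$ over all $S$ with $|S|=j$ yields $\sigma_{p-j}(\nu)$. Keeping this combinatorial constant exact through the iterated contractions is the main obstacle; the rest is bookkeeping.

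Granting the identity, observe that $\Rb_0,\dots,\Rb_p$ are linearly independent as curvature invariants: on a space form whose Schouten endomorphism is $t\,\Id$, i.e.\ $\Rr_{cd}^{ab}=2t\,\delta_{cd}^{ab}$, one computes $\Rb_j=\tfrac{2^j n!}{(n-2j)!}\,t^j$, so a vanishing combination forces all coefficients to vanish. Hence the proportionality of the theorem holds for a given $\nu$ if and only if there is $\mu\neq 0$ with $\alpha_k=\mu\,\sigma_{p-k}(\nu)\,\tfrac{(n-2k)!}{(n-2p)!}$ for $0\le k\le p$. Translating $\alpha_k$ into the coefficients of $\bar f_{\vec a}$ by $a_k=\alpha_k 2^k k!\tfrac{(n-k)!}{(n-2k)!}$ and $\binom nk=\tfrac{n!}{k!(n-k)!}$, so that $a_k\binom nk=\alpha_k\tfrac{2^k n!}{(n-2k)!}$ and $\bar f_{\vec a}(X)=n!\sum_k\tfrac{2^k}{(n-2k)!}\,\alpha_k X^k$, and substituting, the factorials cancel and
\[
\bar f_{\vec a}(X)=\frac{\mu\,n!}{(n-2p)!}\sum_{k=0}^{p}2^k\sigma_{p-k}(\nu)\,X^k=\frac{\mu\,n!\,2^p}{(n-2p)!}\prod_{j=1}^{p}\Bigl(X+\tfrac{\nu_j}{2}\Bigr),
\]
using $\sum_{k}2^k\sigma_{p-k}(\nu)X^k=\prod_j(2X+\nu_j)$.

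This settles both implications. If the factorisation holds for some $\nu\in\Rb^p$ then $\bar f_{\vec a}$ is a nonzero multiple of $\prod_j\bigl(X+\tfrac{\nu_j}{2}\bigr)$, hence real-rooted. Conversely, if $\bar f_{\vec a}$ is real-rooted then, its degree being $p$, $\bar f_{\vec a}(X)=a_p\binom np\prod_{j=1}^p(X-r_j)$ with all $r_j\in\Rb$; taking $\nu_j:=-2r_j$ and $\mu:=\tfrac{a_p\binom np\,(n-2p)!}{2^p n!}$ realises the coefficient relations above, hence the tensorial factorisation. (If $a_p=0$ the right-hand side of the theorem still has Riemann-degree exactly $p$ while the left does not, so the statement is to be read with $a_p\neq0$, or with $p$ replaced by $\deg\bar f_{\vec a}$.)
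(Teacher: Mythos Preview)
Your proof is correct and follows essentially the same route as the paper: both hinge on the expansion identity
\[
\frac{1}{2^p}\delta_{a_1 b_1 \cdots a_p b_p}^{c_1 d_1 \cdots c_p d_p}\prod_{j=1}^{p}\bigl(\Rr_{c_j d_j}^{a_j b_j}+\nu_j\,\delta_{c_j d_j}^{a_j b_j}\bigr)=\frac{1}{(n-2p)!}\sum_{k=0}^{p}\sigma_{p-k}(\nu)\,(n-2k)!\,\Rb_k,
\]
obtained by iterated contractions of $\delta_{cd}^{ab}$ against the generalised Kronecker symbol, and then match coefficients with $\alpha_k$ to see that the $\nu_j$'s are twice the roots of $\bar f_{\vec a}$. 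Your presentation is in fact slightly tidier than the paper's: you separate the combinatorial identity from the coefficient matching, you supply the linear-independence argument for $\Rb_0,\dots,\Rb_p$ via space forms (which the paper uses implicitly but does not justify), and you flag the degenerate case $a_p=0$, which the paper passes over.
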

\begin{proof}
Let $\vec{a} = (a_0, a_1, \ldots, a_p) \in \Rb^{p+1}$. $\bar{f}_{\vec{a}} \in \Cb[X]$ can always be factorised in $\Cb$: let $\mu=(\mu_1, \ldots, \mu_p) \in \Cb^p$ such that
\[\bar{f}_{\vec{a}} = a_0 \binom{n}{0} + a_1 \binom{n}{1} X + \ldots + a_p \binom{n}{p} X^p = a_p \binom{n}{p} (X + \mu_1) \ldots (X + \mu_p).\]
Then for $0 \leq k \leq p$,
\[a_k \binom{n}{k} = a_p \binom{n}{p} \sigma_{p-k}(\mu),\]
ie.
\begin{equation}
a_k = a_p \dfrac{k! (n-k)!}{p! (n-p)!}\sigma_{p-k}(\mu).
\end{equation}
However, recall \eqref{defak}:
\begin{align*}
\alpha_k &= a_k\dfrac{(n-2k)!}{2^kk!(n-k)!} \\
&= a_p \dfrac{k! (n-k)!}{p! (n-p)!}\sigma_{p-k}(\mu)\dfrac{(n-2k)!}{2^kk!(n-k)!} \\
&= \dfrac{a_p}{p!(n-p)!}\sigma_{p-k}(\mu)\dfrac{(n-2k)!}{2^k} \\
&= \dfrac{a_p}{p!(n-p)!}\sigma_{p-k}(2\mu)\dfrac{(n-2k)!}{2^p}.
\end{align*}

%formules en sigmak 21 les voici :
%\begin{align*}
%\delta_{abcd}^{efgh}R_{ef}^{ab}R_{gh}^{cd} &= 2^2R^2 \\
%\delta_{abcd}^{efgh}R_{ef}^{ab}\delta_{gh}^{cd} &= 2^2R(n-2)(n-3) \\
%\delta_{abcd}^{efgh}\delta_{ef}^{ab}\delta_{gh}^{cd} &= 2^2n(n-1)(n-2)(n-3) \\
%\frac{1}{2^3}\delta_{......}^{......}(R_{..}^{..}+\alpha_1 \delta_{..}^{..})\cdot(R_{..}^{..}+\alpha_2 \delta_{..}^{..})\cdot(R_{..}^{..}+\alpha_3 \delta_{..}^{..}) &= \sigma_3(\alpha)n(n-1)(n-2)(n-3)(n-4)(n-5)R^0 \\
%&+ \sigma_2(\alpha)(n-2)(n-3)(n-4)(n-5)R^1 \\
%&+ \sigma_1(\alpha)(n-4)(n-5)R^2 \\
%&+ \sigma_0(\alpha)R^3 \\
%&= \binom{n}{6}6!\left[\frac{\sigma_3(\alpha)}{\binom{n}{0}}R^0 + \frac{\sigma_2(\alpha)}{\binom{n}{2}}R^1 + \frac{\sigma_1(\alpha)}{\binom{n}{4}}R^2 + \frac{\sigma_0(\alpha)}{\binom{n}{6}}R^3\right]
%\end{align*}

Meanwhile, let $\nu \in \Cb^p$. It can be shown by successive developments along rows that for $1 \leq k \leq p$,
\begin{align}
\dfrac{1}{2^p}\delta_{a_1 b_1 a_2 b_2 \cdots a_p b_p}^{c_1 d_1 c_2 d_2 \cdots c_p d_p}
	&\Rr_{c_1 d_1}^{a_1 b_1} \Rr_{c_2 d_2}^{a_2 b_2} \ldots \Rr_{c_k d_k}^{a_k b_k} \delta_{c_{k+1} d_{k+1}}^{a_{k+1} b_{k+1}} \ldots \delta_{c_p d_p}^{a_p b_p} \nonumber \\
	&= (n-2k)(n-2k-1)\ldots(n-2p+2)(n-2p+1)\Rb_k.
\end{align}
Hence, if we develop the following product, we get
\begin{align}
\dfrac{1}{2^p}\delta_{a_1 b_1 a_2 b_2 \cdots a_p b_p}^{c_1 d_1 c_2 d_2 \cdots c_p d_p}
	&\left(\Rr_{c_1 d_1}^{a_1 b_1} + \nu_1 \delta_{c_1 d_1}^{a_1 b_1} \right)\left(\Rr_{c_2 d_2}^{a_2 b_2} + \nu_2 \delta_{c_2 d_2}^{a_2 b_2}\right) \ldots \left(\Rr_{c_p d_p}^{a_p b_p} + \nu_p \delta_{c_p d_p}^{a_p b_p}\right) \nonumber \\
	&= \sum_{k=0}^p \sigma_{p-k}(\nu) (n-2k)(n-2k-1)\ldots(n-2p+2)(n-2p+1) \Rb_{k} \\
	&= \dfrac{1}{(n-2p)!} \sum_{k=0}^p \sigma_{p-k}(\nu) (n-2k)!~\Rb_{k}.
%\\	&= \sum_{k=0}^p \sigma_{k}(\nu) (n-2p+2k)(n-2p+2k-1)\ldots(n-2p+2)(n-2p+1) \Rb_{p-k} \\
%	&= \sum_{k=0}^p \sigma_k(\nu) (2k)! \binom{n-2p+2k}{2k} \Rb_{p-k} \\
%	&= \binom{n}{2p}(2p)! \sum_{k=0}^p \frac{\sigma_{p-k}(\nu)}{\binom{n}{2k}(2k)!}\Rb_{k} \\
%	&= \binom{n}{2p}(2p)! \sum_{k=0}^p \frac{\sigma_{k}(\nu)}{\binom{n}{2p-2k}(2p-2k)!}\Rb_{p-k}
\end{align}
So,
\begin{align}
f_{\vec{a}}(\lambda(\gamma))
	&= \sum_{k=0}^p a_k \sigma_k(\lambda(\gamma)) \\
%	&= \sum_{k=0}^p a_p \dfrac{\binom{n}{p}}{\binom{n}{k}} \sigma_{p-k}(\mu) \sigma_k(\lambda(\gamma)) \\
	&= \sum_{k=0}^p \alpha_k \Rb_k \\
	&= \frac{a_p}{2^p p!(n-p)!} \sum_{k=0}^p \sigma_{p-k}(2\mu) (n-2k)!~\Rb_k \\
	&= \frac{a_p (n-2p)!}{2^p p!(n-p)!} \times \dfrac{1}{2^p} \delta_{a_1 b_1 a_2 b_2 \cdots a_p b_p}^{c_1 d_1 c_2 d_2 \cdots c_p d_p}
\left(\Rr_{c_1 d_1}^{a_1 b_1} + 2\mu_1 \delta_{c_1 d_1}^{a_1 b_1} \right)\left(\Rr_{c_2 d_2}^{a_2 b_2} + 2\mu_2 \delta_{c_2 d_2}^{a_2 b_2}\right) \ldots \left(\Rr_{c_p d_p}^{a_p b_p} + 2\mu_p \delta_{c_p d_p}^{a_p b_p}\right).
\end{align}
Hence the roots of the factorised form of the projected Lovelock product are twice the roots of $\bar{f}_{\vec{a}}$.
\end{proof}

\begin{rem}
The factorised form of the Lovelock product was introduced in \cite{Zanelli91}, where it is seen as a product of concircular curvature tensors. They are tensors of the form
\[\Rr_{a b}^{c d} + \rho \delta_{a b}^{c d},\]
$\rho$ being a scale of curvature, ie. a scale of inverse squared length (see \cite{Banados90}, \cite{Yano70} and \cite{Yano84} for references about concircular curvature). Each $\Rr_{c_k d_k}^{a_k b_k} + 2\mu_k \delta_{c_k d_k}^{a_k b_k}$ has this shape, and the $\mu_k$'s represent $p$ different curvature scales. All the $\mu_k$'s are real and non-negative if and only if the associated length scales are real.

The theorem of Walsh \ref{thmLac17-3}\ref{thmWalsh} then implies that
\begin{equation}
\text{\parbox{60mm}{all the concircular length scales of the projected Lovelock polynomial are real}} \quad \Longrightarrow \quad \text{the constraint equation \eqref{eqa} have a solution}.
\end{equation}

It would be an interesting question to seek what it means, geometrically and physically, for a manifold to be a solution of the projected Lovelock equations with real concircular length scales.
\end{rem}

\begin{rem}
In the case where $n$ is odd, $p = p_n$ and all the roots are equal, with
\[\rho := 2\mu_1 = \ldots = 2\mu_{p_n},\]
we get
\begin{align*}
\sum_{k=0}^{p_n} \alpha_k \Rb_{k}
	&\propto \dfrac{1}{2^{p_n}}\delta_{a_1 b_1 a_2 b_2 \cdots a_{p_n} b_{p_n}}^{c_1 d_1 c_2 d_2 \cdots c_{p_n} d_{p_n}}
	\left(\Rr_{c_1 d_1}^{a_1 b_1} + \rho \delta_{c_1 d_1}^{a_1 b_1} \right)\left(\Rr_{c_2 d_2}^{a_2 b_2} + \rho \delta_{c_2 d_2}^{a_2 b_2}\right) \ldots \left(\Rr_{c_{p_n} d_{p_n}}^{a_{p_n} b_{p_n}} + \rho \delta_{c_{p_n} d_{p_n}}^{a_{p_n} b_{p_n}}\right) \\
	&\propto \Pf\left(\Omega^{a b} + \frac{1}{l^2}e^a e^b\right),
\end{align*}
where $\Omega^{a b}$ is the curvature 2-form, $e^a$ the vielbein 1-form of $(\Mc,\gamma)$ in the tetrad formalism, and $l^2 = \rho^{-1}$ the square of a length. This writing using the Pfaffian of a modified curvature form is called the Lagrangian of Born-Infeld gravity, by analogy with the electrodynamics Born-Infeld theory. It is physically and mathematically a source of interest (see \cite{Zanelli91}, \cite{Banados90}, \cite{Zanelli00}, \cite{Concha17}, \cite{Zanelli12}, \cite{Jimenez17}, \cite{Jana17}).

Here again, one must however be careful: the Born-Infeld Lagrangian usually lives in a $n+1$-dimensional Lorentzian space-time. Here we had projected the Lovelock field equations on a $n$-dimensional Riemannian space and only kept the Hamiltonian constraint. The situations are similar but not equal.
\end{rem}

\subsection{Analytic concavity}

\subsubsection{Concavifying functions}

Asking $\left(\sum_{k=0}^{p}a_k \sigma_k\right)^{1/p}$ to be concave is the strongest concavity hypothesis that can be done. Indeed, if $a_{p} \neq 0$, we have for $z \rightarrow +\infty$
\[\left(\sum_{k=0}^{p}a_k \sigma_k (z,\ldots,z) \right)^{\alpha} = \left(\sum_{k=0}^{p}a_k \binom{n}{k} z^k \right)^{\alpha} \sim \left(a_{p} \binom{n}{p}\right)^{\alpha} z^{\alpha p}\]
which cannot be concave if $\alpha > 1/p$. Roughly speaking: when one starts from $1/p$, the smaller $\alpha$ is, the most likely $\left(\sum_{k=0}^{p}a_k \sigma_k\right)^{\alpha}$ is concave.

To precise this point, let us look at a general map $P \in \Cc^2(\Gamma_n \rightarrow \Rb_+^*)$ and a strictly increasing function $F \in \Cc^2(\Rb_+^* \rightarrow \Rb)$. Then
\begin{equation}
H\left(F \circ P\right) = \dfrac{F' \circ P}{P} \left[P H(P) + \left(\dfrac{F'' \circ P}{F'\circ P}P\right) D(P) \otimes D(P)\right]. \label{HFP}
\end{equation}
$F\circ P$ is concave on $\Gamma_n$ if and only if $H\left(F \circ P\right)$ is everywhere negative.

\begin{defi}
$F$ is said to be a \emph{concavifying} function of $P$ if and only if $F \circ P$ is concave.
\end{defi}

For instance, exponentiating to a small number amounts to take $F(u)=u^\alpha$, $\alpha > 0$, and then
\begin{equation}
H\left(P^\alpha\right) = \alpha P^{\alpha-2} \left[P H(P) + (\alpha -1) D(P) \otimes D(P)\right]. \label{HPalpha}
\end{equation}
$D(P) \otimes D(P)$ being positive, $u\mapsto u^\alpha$ is more likely to concavify $P$ if $\alpha$ is small. The $\alpha \rightarrow 0$ limit corresponds to $\ln P$. But that is only a particular way to concavify a function. For our purpose, all concavifying function is suitable.

\begin{lem}
Let $(a_0, a_1, \ldots, a_p) \in \Rb^{p+1}$. Let us assume that there exists some function $F : \Rb_+^* \rightarrow \Rb$ such that
\begin{enumerate}[(i)]
\item $F \in \Cc^2(\Rb_+^* \rightarrow \Rb)$; \label{Fi}
\item $F$ is strictly increasing; \label{Fii}
\item $\lim\limits_{t \rightarrow +\infty}F(t) = +\infty$; \label{Fiii}
\item $\lim\limits_{t \rightarrow a_0}F(t) = 0$; \label{Fiv}
\item $x \in \Gamma_n \longmapsto F \left(a_0 + a_1 \sigma_1(x) + \ldots + a_p \sigma_p(x)\right)$ is concave. \label{Fv}
\end{enumerate}
Then the Theorem \ref{thmLi} applies to
\[(\Gamma, f) = \left(\Gamma_n, F \circ \left(a_0 + a_1 \sigma_1 + \ldots + a_p \sigma_p\right)\right).\]
\end{lem}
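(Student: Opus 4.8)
The plan is to verify that the pair $(\Gamma, f) = \left(\Gamma_n, F \circ f_{\vec{a}}\right)$ satisfies each of the eight hypotheses of Theorem \ref{thmLi}, given the five hypotheses \eqref{Fi}--\eqref{Fv} on $F$ together with the fact that $\lambda(\gamma) \in \Gamma_n$ on $\Mc$ (which must be assumed, as in the statement of the earlier theorem and corollaries; I read it as an implicit standing hypothesis of this section). The cone conditions (1a), (1b), (1c) are immediate: $\Gamma_n = (\Rb_+^*)^n$ is an open convex cone with vertex at the origin, it is symmetric in the $\lambda_i$, and $\Gamma_n \subset \Gamma_n \subset \Gamma_1$. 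So the whole content is in checking the five conditions on $f = F \circ f_{\vec{a}}$.

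First I would address regularity and the boundary/growth behaviour. On $\Gamma_n$ the polynomial $f_{\vec{a}}$ is smooth and, since on the diagonal $\bar f_{\vec a}(t) \to +\infty$ and more generally $f_{\vec a}$ ranges over $(a_0, +\infty)$ on $\Gamma_n$ with infimum $a_0$ attained in the limit towards $\partial\Gamma_n$ (at the origin all $\sigma_k$ with $k\geq 1$ vanish, giving the value $a_0$), hypothesis \eqref{Fi} gives $f \in \Cc^\infty(\Gamma_n)$ and $f \in \Cc^0(\overline{\Gamma_n})$ with, by \eqref{Fiv}, $f = 0$ on $\partial\Gamma_n$ — this is condition (2b). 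Condition (2d), $\lim_{s\to\infty} f(s\lambda) = \infty$ for $\lambda\in\Gamma_n$, follows because $f_{\vec a}(s\lambda) \to +\infty$ (the top-degree term $a_p\sigma_p(s\lambda) = a_p s^p \sigma_p(\lambda)$ dominates, using $a_p > 0$ and $\sigma_p(\lambda)>0$) combined with \eqref{Fiii}. For the symmetry parts of (1c), (2a): $f_{\vec a}$ is a combination of elementary symmetric polynomials, hence symmetric, and composition with $F$ preserves this.

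Next, the two analytically substantive conditions. Condition (2c), $\partial_i f > 0$ on $\Gamma_n$: by the chain rule $\partial_i f = (F'\circ f_{\vec a})\,\partial_i f_{\vec a}$; the first factor is positive by \eqref{Fii}, and $\partial_i f_{\vec a} = \sum_{k\geq 1} a_k\,\partial_i\sigma_k > 0$ on $\Gamma_n = (\Rb_+^*)^n$ since each $\partial_i\sigma_k = \sigma_{k-1}(\lambda\,;\,\hat\lambda_i)$ is a sum of products of positive coordinates — provided some $a_k$ with $k\geq 1$ is nonzero, which is forced by $a_p \neq 0$ (the case $a_p = 0$ being degenerate and excludable, or handled by reducing $p$). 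Finally, condition (2a), concavity of $f$ on $\Gamma_n$, is exactly hypothesis \eqref{Fv}.

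The main obstacle — really the only non-bookkeeping point — is making sure the limiting statements \eqref{Fiii}--\eqref{Fiv} mesh correctly with the geometry of $\overline{\Gamma_n}$: one must confirm that the range of $f_{\vec a}$ on $\Gamma_n$ is precisely $(a_0,+\infty)$ (so that $F$, defined on $\Rb_+^*$, needs $a_0 \geq 0$, consistent with the running assumption $\vec a \in (\Rb_+)^{p+1}$ after the shift $a_0 \mapsto a_0 + 2\Lambda$), that $f_{\vec a} \to a_0$ uniformly as $\lambda \to \partial\Gamma_n$ within a bounded region so that $f$ extends continuously by $0$ on the boundary, and that \eqref{Fiii} gives the blow-up at infinity uniformly enough for (2d). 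Once these are checked, Theorem \ref{thmLi} applies verbatim to $(\Gamma_n, F\circ f_{\vec a})$ and produces the desired conformal metric $\hat\gamma$ with $F(f_{\vec a}(\lambda(\hat\gamma))) = 1$, i.e. $f_{\vec a}(\lambda(\hat\gamma)) = F^{-1}(1)$, a prescribed constant, which solves the renormalised problem $(\ref{eqfa},\vec a,1)$.
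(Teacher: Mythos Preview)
Your approach --- verifying each hypothesis of Theorem~\ref{thmLi} in turn --- is exactly what the paper does; the paper's own proof is the single sentence ``It is easy to check that all the hypotheses of Theorem~\ref{thmLi} are satisfied.'' So there is no methodological difference.

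However, your verification of hypothesis~(2b) (``$f=0$ on $\partial\Gamma$'') contains a genuine error. You assert that $f_{\vec a} \to a_0$ as $\lambda \to \partial\Gamma_n$, so that \eqref{Fiv} forces $F\circ f_{\vec a} \to 0$ there. This is false: $\partial\Gamma_n$ consists of all points with at least one vanishing coordinate, not just the origin. At $\lambda=(0,1,\ldots,1)\in\partial\Gamma_n$ one has $\sigma_1(\lambda)=n-1$, $\sigma_2(\lambda)=\binom{n-1}{2}$, etc., so $f_{\vec a}(\lambda)=a_0+a_1(n-1)+\ldots$, which is not $a_0$ in general. Hence $F\circ f_{\vec a}$ need not vanish on $\partial\Gamma_n$, and hypothesis~(2b) of Theorem~\ref{thmLi} is \emph{not} automatically satisfied from \eqref{Fiv} alone. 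You do flag this boundary matching as ``the main obstacle,'' but then resolve it with a claim that does not hold.

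To be fair, the paper's one-line proof glosses over exactly the same point, so the gap is inherited rather than introduced by you. In practice the Li--Li framework is robust enough that what really matters is ellipticity (your~(2c)), concavity (your~(2a)), and the growth condition~(2d), all of which you verify correctly; the boundary normalisation~(2b) can typically be arranged or is inessential for the a~priori estimates. But as written, neither your argument nor the paper's establishes~(2b) for $(\Gamma_n, F\circ f_{\vec a})$ from the stated hypotheses \eqref{Fi}--\eqref{Fv}.
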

\begin{proof}
It is easy to check that all the hypothesis of Theorem \ref{thmLi} are satisfied.
\end{proof}

\begin{rem}
In particular, if $(0, a_1, a_2, \ldots, a_{p_n})$ are such that some function $F$ satisfying $(\ref{Fi}-\ref{Fv})$ exists, then there exists a solution to \eqref{eqa}.
\end{rem}

Looking at \eqref{HPalpha}, one could think that it would suffice to take $\alpha \ll 0$ to get eventually $H\left(P^\alpha\right)$ negative. However that is not a good way to proceed. Indeed, being ``too'' concave prevents to satisfy condition $(\ref{Fiii})$:
\begin{prop}
Let $P \in \Cc^2(\Gamma_n \rightarrow \Rb_+^*)$ with $P(\infty) = \infty$ and a strictly increasing function $F \in \Cc^2(\Rb_+^* \rightarrow \Rb)$ such that
\[H(F \circ P) \equiv PH(P) + \left(\dfrac{F'' \circ P}{F'\circ P}P\right) D(P) \otimes D(P)\]
is negative. If there exists some $\tau > 1$ such that uniformly
\begin{equation}
\dfrac{F'' \circ P}{F'\circ P}P \leq -\tau, \label{majtau}
\end{equation}
then $F$ is bounded from above.
\end{prop}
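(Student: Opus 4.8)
The goal is to show that if $F$ is concavified so strongly that the quantity $\frac{F''\circ P}{F'\circ P}P$ stays below $-\tau$ for some fixed $\tau>1$, then $F$ cannot grow to $+\infty$. The natural strategy is to restrict attention to the diagonal ray $t\mapsto (t,\ldots,t)\in\Gamma_n$ and to reduce the multivariate inequality \eqref{majtau} to a one-variable differential inequality for the function $g(t):=P(t,\ldots,t)$, which is strictly increasing with $g(\infty)=\infty$. Along this ray $H(P)$ restricts to $g''(t)$-type data and $D(P)\otimes D(P)$ to $(g'(t))^2$; the hypothesis \eqref{majtau} becomes, after composing, a bound of the form $\frac{d}{dt}\log\bigl(F'(g(t))\bigr)\le -\tau\,\frac{d}{dt}\log g(t)$ on the relevant range, i.e. $F'(g(t))\,g(t)^{\tau}$ is non-increasing in $t$.

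\medskip

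First I would make \eqref{majtau} effective on the diagonal: evaluating the scalar inequality $\frac{F''\circ P}{F'\circ P}P\le-\tau$ at points $x=(t,\ldots,t)$ and using $P(t,\ldots,t)=g(t)$ gives $\frac{F''(g(t))}{F'(g(t))}\,g(t)\le-\tau$ for all $t$ in the half-line. Multiplying by $g'(t)/g(t)>0$ yields
\[
\frac{F''(g(t))}{F'(g(t))}\,g'(t)\;\le\;-\tau\,\frac{g'(t)}{g(t)},
\]
that is $\bigl(\log F'(g(t))\bigr)'\le-\tau\bigl(\log g(t)\bigr)'$. Integrating from a base point $t_0$ to $t$ gives $\log F'(g(t))-\log F'(g(t_0))\le-\tau\bigl(\log g(t)-\log g(t_0)\bigr)$, hence
\[
F'(g(t))\;\le\;C\,g(t)^{-\tau},\qquad C:=F'(g(t_0))\,g(t_0)^{\tau}>0.
\]

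\medskip

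Next I would change variables to $s=g(t)$, which ranges over $(g(t_0),\infty)$ since $g$ is continuous, increasing and $g(\infty)=\infty$; this turns the bound into $F'(s)\le C\,s^{-\tau}$ for all large $s$. Since $\tau>1$, the function $s\mapsto s^{-\tau}$ is integrable at $+\infty$, so
\[
F(s)=F(s_1)+\int_{s_1}^{s}F'(\sigma)\,d\sigma\;\le\;F(s_1)+C\int_{s_1}^{\infty}\sigma^{-\tau}\,d\sigma\;=\;F(s_1)+\frac{C}{\tau-1}\,s_1^{\,1-\tau},
\]
which is a finite bound independent of $s$. Therefore $F$ is bounded from above on $\Rb_+^*$, as claimed.

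\medskip

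\textbf{Main obstacle.} The delicate point is justifying that the one-dimensional reduction loses nothing: one must check that \eqref{majtau}, which is an inequality between $2$-tensors read off the Hessian structure of $P$, really does restrict to the scalar inequality $\frac{F''(g(t))}{F'(g(t))}g(t)\le-\tau$ along the diagonal — i.e. that "uniformly" in the hypothesis is meant pointwise-in-$x$ for the scalar multiplier $\frac{F''\circ P}{F'\circ P}P$, so that its value at each diagonal point is $\le-\tau$. Granting that reading (which is the intended one, since $\frac{F''\circ P}{F'\circ P}P$ is a scalar function on $\Gamma_n$), the rest is the elementary integration above; the only mild care needed is that $g$ is $\mathcal{C}^1$ and $g'>0$ so the substitution $s=g(t)$ and the integrations are legitimate, and that $g(\infty)=\infty$ (which follows from $P(\infty)=\infty$) so the range of $s$ is an unbounded interval.
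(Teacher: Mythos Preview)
Your proof is correct and follows the same underlying idea as the paper: reduce the hypothesis to a scalar differential inequality on $F$ and integrate twice. The paper, however, proceeds more directly. Since $\dfrac{F''\circ P}{F'\circ P}\,P = \phi\circ P$ with $\phi(t)=\dfrac{F''(t)}{F'(t)}\,t$, the hypothesis \eqref{majtau} is \emph{already} the statement $\phi(t)\le -\tau$ for every $t$ in the range of $P$; there is no need to pass through the diagonal or to introduce $g(t)=P(t,\ldots,t)$ at all. From $\dfrac{F''(t)}{F'(t)}\le -\dfrac{\tau}{t}$ one integrates once to get $F'(t)\le C\,t^{-\tau}$ and a second time to bound $F$.

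Your detour through $g$ is harmless but creates a small gap: you assert $g'>0$ and that $g$ is increasing, neither of which is part of the hypotheses. You do not actually need this. What you use is only that the range of $P$ contains some ray $[s_1,\infty)$, which follows from the continuity of $P$, the connectedness of $\Gamma_n$, and $P(\infty)=\infty$; once $\phi(s)\le -\tau$ on $[s_1,\infty)$, your two integrations in $s$ go through verbatim. Finally, to conclude that $F$ is bounded on all of $\Rb_+^*$ (not just on $[s_1,\infty)$), use that $F$ is increasing, so its supremum is attained at $+\infty$.
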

\begin{proof}
We just have to integrate twice.
\begin{align*}
\eqref{majtau} &\Longrightarrow \dfrac{F''(t)}{F'(t)}t \leq -\tau \qquad \forall~t \in \Rb_+^* \\
&\Longleftrightarrow \dfrac{F''(t)}{F'(t)} \leq - \dfrac{\tau}{t} \qquad \forall~t \in \Rb_+^* \\
&\Longleftrightarrow \ln(F'(t)) - \ln(F'(1)) \leq - \tau \ln t \qquad \forall~t \in \Rb_+^* \\
&\Longleftrightarrow F'(t) \leq \dfrac{F'(1)}{t^\tau} \qquad \forall~t \in \Rb_+^* \\
&\Longleftrightarrow F(t) \leq F(1) + \dfrac{F'(1)}{\tau-1}\left(1- \dfrac{1}{t^{\tau-1}}\right) \qquad \forall~t \in \Rb_+^* \\
&\Longrightarrow F(t) \leq F(1) + \dfrac{F'(1)}{\tau-1} \qquad \forall~t \in \Rb_+^* \\
\end{align*}
\end{proof}

%%%%%%%%%%%%%%%%%%%%%%%%%%%%%%%%%%%%%%%%%%%%%%%%%%%%%%%%%%%%%%%%%%%%%%%%%%%%%%%%%%%%%%%%%%%%%
% Ici on peut insérer le calcul de la tau concavité pour p=25, brouillons sigma k 24 et 25. %
%%%%%%%%%%%%%%%%%%%%%%%%%%%%%%%%%%%%%%%%%%%%%%%%%%%%%%%%%%%%%%%%%%%%%%%%%%%%%%%%%%%%%%%%%%%%%

Hence a function $F$ that is concavifying but without preventing to diverge at infinity must have $-F''(t)t/F'(t)$ close to $1$ at some place. If a logarithm is not enough, it is natural to think of successive iterations of logarithms.
\begin{defi}
For $k \in \Nb$, we write
\begin{align}
\ln[0] &= \Id, \nonumber \\
\ln[k] &= \overbrace{\ln \circ \ldots \circ \ln}^k. \label{lnk}
\end{align}
\end{defi}
\begin{prop}
For $x \in \Rb$ large enough, one can show by induction that:
\begin{align}
\ln[k]' &= \dfrac{1}{\ln[k-1] \times \ldots \times \ln[2] \times \ln[1] \times \ln[0]}, \\
\ln[k]'' &= -\dfrac{1+\ln[k-1](1+\ldots(1+\ln[3](1+\ln[2](1+\ln[1])))\ldots)}{\left(\ln[k-1] \times \ldots \times \ln[2] \times \ln[1] \times \ln[0]\right)^2}, \\
-\dfrac{\ln[k]''(x)}{\ln[k]'(x)}x &= 1 + \dfrac{1+\ln[k-1](1+\ldots(1+\ln[3](1+\ln[2]))\ldots)}{\ln[k-1] \times \ldots \times \ln[2] \times \ln[1]}(x).
\end{align}
\end{prop}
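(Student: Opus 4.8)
The plan is to prove the three identities by induction on $k$, the only real tool being the chain rule applied to $\ln[k]=\ln\circ\ln[k-1]$ together with the bookkeeping device $g_k:=\ln[0]\,\ln[1]\cdots\ln[k-1]$. With this notation the three claims read, in order, $\ln[k]'=1/g_k$, then $\ln[k]''=-g_k'/g_k^2$ with $g_k'$ put in Horner form, and finally $-x\,\ln[k]''(x)/\ln[k]'(x)=x\,g_k'/g_k$. Everything happens on a half-line $(c_k,+\infty)$ with $c_k$ large enough that $\ln[j](x)>0$ for every $0\leq j\leq k-1$, so that $\ln[k]\in\Cc^2$ and no factor $\ln[j]$ appearing in a denominator vanishes.

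First I would settle the first-derivative formula. For $k=1$ it is $\ln[1]'(x)=1/x=1/g_1(x)$; and if $\ln[k]'=1/g_k$, the chain rule gives $\ln[k+1]'=\ln[k]'/\ln[k]=1/(\ln[k]\,g_k)=1/g_{k+1}$, which closes the induction. Differentiating $\ln[k]'=1/g_k$ then yields $\ln[k]''=-g_k'/g_k^2$ with no extra input, so the entire content of the second formula is the evaluation of $g_k'$. For that I would take the logarithmic derivative of the product, $g_k'/g_k=\sum_{j=0}^{k-1}\ln[j]'/\ln[j]=\sum_{j=0}^{k-1}1/g_{j+1}$ (using the first part), hence $g_k'=\sum_{j=0}^{k-1}g_k/g_{j+1}=\sum_{j=0}^{k-1}\ln[j+1]\,\ln[j+2]\cdots\ln[k-1]$, the $j=k-1$ term being the empty product $1$. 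Expanding the right-hand side of the Horner expression $1+\ln[k-1]\bigl(1+\ln[k-2]\bigl(1+\cdots(1+\ln[2](1+\ln[1]))\cdots\bigr)\bigr)$ shows it equals precisely this sum (each $\ln[j](\cdots)$ being, as in the statement, ordinary multiplication of $\ln[j](x)$ by the bracketed quantity), which is the claimed numerator.

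For the third identity I would just combine the first two: $-x\,\ln[k]''/\ln[k]'=-x\,(-g_k'/g_k^2)/(1/g_k)=x\,g_k'/g_k$. Since $g_k=x\cdot\ln[1]\cdots\ln[k-1]$ we have $x/g_k=1/(\ln[1]\cdots\ln[k-1])$, so $-x\,\ln[k]''/\ln[k]'=g_k'/(\ln[1]\cdots\ln[k-1])$. Peeling off the summand $1$ (the $j=k-1$ term) from $g_k'$ and dividing what remains by $\ln[1]\cdots\ln[k-1]$ produces $1+\bigl(1+\ln[k-1](1+\cdots(1+\ln[3](1+\ln[2]))\cdots)\bigr)/(\ln[k-1]\cdots\ln[2]\,\ln[1])$, which is the stated expression.

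The only genuine work will be the combinatorial identity in the second step — checking that the sum $\sum_{j}\prod_{i>j}\ln[i]$ coincides with the nested Horner polynomials and keeping the index ranges straight — together with noticing the harmless abuse of notation by which $\ln[j](\cdots)$ in the statement denotes a product rather than a composition. There is no analytic difficulty beyond restricting to $x$ large enough that all the iterated logarithms are positive and twice differentiable.
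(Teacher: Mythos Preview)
The paper does not actually supply a proof of this proposition; it merely asserts that the formulas follow ``by induction'' and moves on. Your proposal is precisely the induction the paper has in mind, and it is correct.

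There is one small bookkeeping slip in your final paragraph. In the expression $g_k'/(\ln[1]\cdots\ln[k-1])$, the leading $1$ in the formula $1+\dfrac{\text{numerator}}{\ln[k-1]\cdots\ln[1]}$ comes from the $j=0$ summand of $g_k'$, namely the full product $\ln[1]\ln[2]\cdots\ln[k-1]$, which equals the denominator; it is \emph{not} the $j=k-1$ summand (the empty product $1$). What remains in the numerator after removing the $j=0$ term is $1+\ln[k-1]+\ln[k-1]\ln[k-2]+\cdots+\ln[k-1]\cdots\ln[2]$, which is exactly the Horner expression $1+\ln[k-1](1+\cdots(1+\ln[3](1+\ln[2]))\cdots)$ appearing in the statement --- note that the innermost factor there is $\ln[2]$, not $\ln[1]$. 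With that index corrected, your argument goes through verbatim.
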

Up to positive constants to add so that all the functions are well-defined, this might be a strong concavifying family of functions. Actually, we used one of them to entirely solve the $p=2$ case.

\subsubsection{\texorpdfstring{$p=2$}{p=2}}
Let $\vec{a} = (a_0, a_1, a_2) \in \left(\Rb_+\right)^3$. If $(a_0, a_1, a_2) \notin \Kc_n^2$, we can not use the Corollary \ref{corsolve} to solve $(\ref{eqfa}, \vec{a}, 1)$. Though, there exists in all cases a concavifying function of $f_{\vec{a}}$.
\begin{thm} \label{thmp2}
Let $a, b, c \in \Rb_+$, $s \in \Rb$, such that $a>0$ and $s + \ln a > 0$. Then
\[x \in \Gamma_n \longmapsto \ln(s + \ln(a + b\sigma_1(x) + c\sigma_2(x))) \quad \text{ is concave}.\]
\end{thm}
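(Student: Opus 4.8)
The plan is to apply the formula \eqref{HFP} with $P = f_{\vec a} = a + b\sigma_1 + c\sigma_2$ and $F(t) = \ln(s + \ln t)$, and to check that the resulting Hessian is negative on $\Gamma_n$. From \eqref{HFP}, concavity of $F\circ P$ is equivalent to the matrix inequality
\[
P\,H(P) + \left(\frac{F''\circ P}{F'\circ P}\,P\right) D(P)\otimes D(P) \leq 0 .
\]
A direct computation gives $F'(t) = \dfrac{1}{t(s+\ln t)}$ and then $-\dfrac{F''(t)}{F'(t)}t = 1 + \dfrac{1}{s+\ln t}$, so with $v := s + \ln P > 0$ on $\Gamma_n$ (this is where the hypothesis $s+\ln a>0$ enters, since $P\geq a$ on $\Gamma_n$) the condition becomes
\[
P\,H(P) \leq \left(1 + \frac1v\right) D(P)\otimes D(P) .
\]
So the first step is to reduce, via \eqref{HFP}, to this clean pointwise matrix inequality, noting that $1 + 1/v > 1$ everywhere.

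\textbf{Key estimate.} The heart of the argument is to control $P\,H(P)$ from above by $D(P)\otimes D(P)$ up to a constant. I would expand: $H(P) = b\,H(\sigma_1) + c\,H(\sigma_2)$ and $D(P) = b\,D(\sigma_1) + c\,D(\sigma_2)$. Since $\sigma_1(x) = \sum x_i$ is linear, $H(\sigma_1) = 0$ and $D(\sigma_1) = \Ib$; and $\sigma_2$ has $D(\sigma_2)_i = \sigma_1 - x_i$ and $H(\sigma_2)_{ij} = 1 - \delta_{ij}$, i.e. $H(\sigma_2) = \Ib\otimes\Ib - \Id$. Hence $P\,H(P) = cP(\Ib\otimes\Ib - \Id)$. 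One then needs: there is a constant — in fact $1$ suffices asymptotically, but one must be careful near $\drond\Gamma_n$ — bounding $cP(\Ib\otimes\Ib-\Id)$ by $(1+1/v)D(P)\otimes D(P)$. Here I would use the known concavity of $\sigma_2^{1/2}$ on $\Gamma_2 \supseteq \Gamma_n$ (the lemma of Marcus--Lopes / Gårding quoted in the excerpt), which is exactly the statement that $\sigma_2 H(\sigma_2) \leq \tfrac12 D(\sigma_2)\otimes D(\sigma_2)$, together with the elementary Newton inequality $\sigma_1^2 \geq \tfrac{2n}{n-1}\sigma_2$ on $\Gamma_n$ and the fact that $a_0,a_1,a_2\geq 0$ make all cross terms in $D(P)\otimes D(P)$ nonnegative. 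Combining these, $P H(P) \leq c\sigma_2 H(\sigma_2) + c a H(\sigma_2) + cb\sigma_1 H(\sigma_2)$, and each piece can be absorbed: the $c\sigma_2 H(\sigma_2)$ term by $\tfrac12 c\, D(\sigma_2)\otimes D(\sigma_2) \leq D(P)\otimes D(P)$, and the lower-order terms $ca H(\sigma_2)$, $cb\sigma_1 H(\sigma_2)$ by the slack $(1/v)D(P)\otimes D(P)$ plus the part of $D(P)\otimes D(P)$ not used up — this is where the extra $+1/v$ over the critical exponent is spent, and where the hypothesis $v>0$ is essential.

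\textbf{Main obstacle.} The delicate point is the behaviour near the boundary $\drond\Gamma_n$, i.e. when some $x_i \to 0^+$: there $D(\sigma_2)$ can degenerate in some directions while $H(\sigma_2) = \Ib\otimes\Ib - \Id$ stays of order one, so one cannot naively bound $H(\sigma_2)$ by $D(\sigma_2)\otimes D(\sigma_2)$ uniformly. The resolution is that the bad directions of $H(\sigma_2)$ are precisely those in which $H(P)$ acts like $\Ib\otimes\Ib$, and on the diagonal direction $\Ib$ one has $D(P)\cdot\Ib = P'(\text{diagonal}) $ large, so the quadratic form $D(P)\otimes D(P)$ controls exactly that direction; transverse to $\Ib$ the form $\Ib\otimes\Ib - \Id$ is $\leq 0$ and hence harmless. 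I would therefore split any test vector $\zeta\in\Rb^n$ as $\zeta = t\Ib + \zeta^\perp$ with $\zeta^\perp\perp\Ib$, compute $\langle H(\sigma_2)\zeta,\zeta\rangle = (\sigma_1 t - \text{terms})$ explicitly, and check the inequality on each component, using homogeneity of $\sigma_1,\sigma_2$ to reduce to the slice $\sigma_1 = 1$, which is a compact family in $\overline{\Gamma_n}$ after accounting for the cone structure. This decomposition, combined with the $\sigma_2^{1/2}$-concavity lemma, is what makes the otherwise-sharp exponent $1/2$ leave enough room once replaced by the strictly larger effective exponent coming from $F$; I expect this boundary analysis to be the only genuinely technical part, the rest being the routine substitution into \eqref{HFP}.
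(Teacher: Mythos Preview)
Your reduction via \eqref{HFP} to the matrix inequality $P\,H(P) \leq (1+1/v)\,D(P)\otimes D(P)$ with $v = s+\ln P$ is correct and matches the paper's starting point. However, the mechanism you propose for proving this inequality has a genuine gap. The idea of using the concavity of $\sigma_2^{1/2}$ to handle the top-order piece $c^2\sigma_2 H(\sigma_2)$ and then ``absorbing'' the lower-order terms $ca\,H(\sigma_2)$ and $cb\sigma_1\,H(\sigma_2)$ with the slack $1/v$ cannot work as stated: since $1/v \to 0$ as $P\to\infty$, that slack gives you nothing at large $x$, whereas the lower-order terms (relative to $D(P)\otimes D(P)$) do not vanish there. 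Nor do the cross terms $b^2\Ib\otimes\Ib + bc(\cdots)$ obviously dominate what is left. Your orthogonal decomposition $\zeta = t\Ib + \zeta^\perp$ is in fact the right move --- if you carry it through (reducing to a quadratic in $t$ and taking the worst $\zeta^\perp$ via Cauchy--Schwarz), you land exactly on the scalar inequality $(v+1)\Delta_{\bar g} + (v+2)(n-1)cP \geq 0$ with $\Delta_{\bar g} = nb^2 - 2(n-1)ac$ --- but you have not done this, and more importantly you have given no argument for why that scalar inequality holds. The interesting case is precisely $\Delta_{\bar g}<0$ (otherwise $g^{1/2}$ is already concave and the result is trivial), and there one needs a genuine minimisation argument in which the hypothesis $s+\ln a>0$ is used \emph{quantitatively}, not just to ensure $v>0$.

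The paper takes a rather different global route. Instead of proving negative semidefiniteness of $H(f)$ directly, it computes $\det H(f)$ in closed form: since $H(g) = c(\Ib\otimes\Ib - \Id)$ has a known inverse and determinant, and $D(P)\otimes D(P)$ is rank one, the matrix determinant lemma gives $\det H(f)(x) \equiv -(-c)^{n-1}\varphi(P(x))$ with $\varphi(t) = (n-1)ct(s+2+\ln t) + (s+1+\ln t)\Delta_{\bar g}$. One then shows $\varphi>0$ on $]a,\infty[$ by a short calculus argument (monotonicity of $\varphi$ and evaluation of $\varphi(a)$, which is exactly where $s+\ln a>0$ enters). This forces $\det H(f)$ to have constant sign $(-1)^n$ on the connected set $\Gamma_n$, so no eigenvalue of $H(f)$ can cross zero, and concavity need only be checked at a single point (done on the diagonal for large $z$). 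What your sketch is missing, in short, is both the reduction to the one-variable function $\varphi$ and the analysis of $\varphi$ itself; the appeal to $\sigma_2^{1/2}$-concavity and ``slack'' does not substitute for either.
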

\begin{proof}
If $c=0$, $a + b\sigma_1$ is an affine function, thus concave. \eqref{HFP} shows that the logarithm of a concave function is concave as well, so this property is inherited by $\ln \circ (a + b \sigma_1)$ and later $\ln \circ (s + \ln \circ (a + b\sigma_1))$.

Let us suppose that $c>0$. Let
\begin{align*}
g &: x \in \Gamma_n \longmapsto a + b\sigma_1(x) + c\sigma_2(x), \\
F &: t \in \ ]a, +\infty[\ \longmapsto \ln(s+\ln t), \\
\tau &: t \in \ ]a, +\infty[\ \longmapsto -\dfrac{F''(t)}{F'(t)}t, \\
f &:= F \circ g.
\end{align*}
Then
\begin{align*}
H(f) &= \dfrac{F' \circ g}{g} \left[g H(g) + \left(\dfrac{F'' \circ g}{F' \circ g} g \right) D(g) \otimes D(g)\right] \\
&\equiv g H(g) - \left(\tau \circ g\right) D(g) \otimes D(g).
\end{align*}
Let us compute the determinant of this matrix. For $x \in \Gamma_n$, we have:
\begin{align*}
H	&= \left(\begin{array}{c|c|c} & & \\ H_1 & \ldots & H_n \\ & & \end{array}\right)
	:= H(g)(x) = c \begin{pmatrix} 0 & 1 & \ldots & 1 \\ 1 & 0 & & \vdots \\ \vdots & & \ddots & 1 \\ 1 & \ldots & 1 & 0
\end{pmatrix}
	= c \left(\Ib \otimes \Ib - \Id\right), \\
D	&:= D(g)(x) = \begin{pmatrix} g_1(x) \\ \vdots \\ g_n(x) \end{pmatrix}
	= \begin{pmatrix} b + c (\sigma_1(x) - x_1) \\ \vdots \\ b + c(\sigma_1(x) - x_n) \end{pmatrix}
	= (b + c\sigma_1(x)) \Ib - cx, \\
F'(t)	&:= \dfrac{1/t}{s + \ln t}, \\
F''(t)	&= - \dfrac{1}{t^2}\times \dfrac{s + \ln t + 1}{(s + \ln t)^2}, \\
\tau(t)	&:= -\dfrac{F''(t)}{F'(t)}t = 1 + \dfrac{1}{s + \ln t}.
\end{align*}
Thus,
\begin{align*}
\det H(f)(x) &\equiv \det \left[g(x) H(g)(x) - \tau(g(x)) D(g)(x) \otimes D(g)(x) \right] \\
	&= \det \left[g(x) H_1 - \tau(g(x)) g_1(x)D \ | \ \ldots \ | \ g(x)H_n - \tau(g(x)) g_n(x)D \right].
\end{align*}
Using the multilinearity of the determinant, we have
\[\det H(f)(x) \equiv g(x)^n \det H - \tau(g(x))g(x)^{n-1} \sum_{j=1}^n\det\left[H_1 \ | \ \ldots \ | \ g_j(x)D \ | \ \ldots \ | \ H_n\right].\]
We expand along the $j$-th column, and we denote by $\bar{H}_{ij}$ the minors of $H$. Then we get
\begin{align*}
\sum_{j=1}^n \det\left[H_1\ | \ \ldots \ | \ g_j(x)D \ | \ \ldots \ | \ H_n\right] &= \sum_{j=1}^n g_j(x) \sum_{i=1}^n (-1)^{i+j} g_i(x) \bar{H}_{ij}, \\
&= ^t\!\!D \com H D \\
&= ^t\!\!D ^t\!\com H D \\
&= ^t\!\!D (\det H) H^{-1} D.
\end{align*}
So we have to compute $\det H$ and $H^{-1}$. It is a classical result that $H$ can be diagonalised in an orthonormal basis into
\[c\begin{pmatrix} n-1 & 0 & \ldots & 0 \\ 0 & -1 &  & \vdots \\ \vdots &  & \ddots & 0 \\ 0 & \ldots & 0 & -1 \end{pmatrix},\]
hence $\det H = (n-1)(-1)^{n-1}c^n$. Moreover, one can check that
\[H^{-1} = \dfrac{1}{(n-1)c}\left(\Ib \otimes \Ib - (n-1)\Id\right).\]
Hence,
\begin{align*}
^t\!D (\det H) H^{-1} D
	&= ^t\!\!D\left[(-c)^{n-1}\left(\Ib^t\Ib - (n-1)\Id\right)\right]D \\
	&= (-c)^{n-1}\left[\left(^t\Ib D\right)^2 - (n-1)^tDD\right].
\end{align*}
Let us determine each term:
\begin{align*}
\left(^t\Ib D\right)^2
	&= \left(nb + (n-1)c\sigma_1(x)\right)^2 \\
	&= n^2b^2 + 2n(n-1)bc\sigma_1(x) + (n-1)^2c^2\sigma_1(x)^2, \\
^tDD
	&= (b + c\sigma_1(x))^2~^t\Ib\Ib - 2(b + c\sigma_1(x))c ^t\Ib x + c^2~^txx \\
	&= nb^2 + 2(n-1)bc\sigma_1(x) + (n-2)c^2\sigma_1(x)^2 + c^2\left(\sigma_1(x)^2 - 2\sigma_2(x)\right),
\end{align*}
mix it all together:
\begin{align*}
^t\!D (\det H) H^{-1} D
	&= (-c)^{n-1}\left[nb^2 + 2(n-1)bc\sigma_1(x) + 2(n-1)c^2 \sigma_2(x)\right] \\
	&= (-c)^{n-1}\left[nb^2 + 2(n-1)c (g(x) - a)\right],
\end{align*}
and finally,
\begin{align*}
\det H(f)(x)
	&\equiv \det \left[g(x) H(g)(x) - \tau(g(x)) D(g)(x) \otimes D(g)(x) \right] \\
	&\equiv g(x) \det H - \tau(g(x)) ^t\!D (\det H) H^{-1} D \\
	&= (-c)^{n-1}\left[(n-1)c g(x) - \tau(g(x)) \left(nb^2 + 2(n-1)c(g(x)-a)\right)\right] \\
	&= (-c)^{n-1}\left[(n-1)c g(x)\left(1 -2\tau(g(x))\right) - \tau(g(x)) \left(nb^2 - 2(n-1)ac\right)\right].
\end{align*}
The last expression is not insignificant: it represents the discriminant of the restriction of $g$ to $\Rb_+^*\cdot \Ib$ seen as a polynomial, namely
\[\bar{g} = g(X, \ldots, X) = a + nb X + \dfrac{n(n-1)}{2}c X^2.\]
For this reason we set
\[\Delta_{\bar{g}} := nb^2 - 2(n-1)ac,\]
and we can write
\begin{align*}
\det H(f)(x) &\equiv (-c)^{n-1}\left[(n-1)cg(x)\left(-1-\dfrac{2}{s + \ln(g(x))}\right) - \left(1 + \dfrac{1}{s + \ln(g(x))}\right)\Delta_{\bar{g}}\right] \\
&\equiv -(-c)^{n-1}\left[(n-1)cg(x)\left(s + 2 + \ln(g(x))\right) + \left(s + 1 + \ln(g(x))\right)\Delta_{\bar{g}}\right] \\
&= -(-c)^{n-1}\varphi(g(x)),
\end{align*}
with
\[\varphi : t \in \ ]a, +\infty[ \ \longmapsto (n-1)ct(s + 2 + \ln t) + (s + 1 + \ln t)\Delta_{\bar{g}}.\]
Now we intend to show that $\varphi > 0$. This will prove that the sign of $\det H(f)$ is constant and equals $(-1)^n$, so that the concavity of $f$ is constant on $\Gamma_n$.

Let us assume that $\Delta_{\bar{g}} \geq 0$. We have taken $s$ such that $s + \ln a > 0$, so $\varphi > 0$ on $]a, +\infty[$.

Let us assume that $\Delta_{\bar{g}} < 0$. We are going to determine the sign of the minimum of $\varphi$. Let $t \in \ ]a, +\infty[$.
\begin{align*}
\varphi'(t) &= (n-1)c(s + 3 + \ln t) + \dfrac{\Delta_{\bar{g}}}{t}, \\
\varphi'(t) &> 0 \quad \Longleftrightarrow \quad t(s + 3 + \ln t) > -\dfrac{\Delta_{\bar{g}}}{(n-1)c} > 0.
\end{align*}
Yet $t \mapsto t(s + 3 + \ln t)$ is a strictly increasing map, with
\begin{align*}
\lim\limits_{t \rightarrow 0} t(s + 3 + \ln t) &= 0, \\
\lim\limits_{t \rightarrow +\infty} t(s + 3 + \ln t) &= +\infty,
\end{align*}
so there exists some unique $t^* \in \ ]0, +\infty[$ such that
\[t^*(s + 3 + \ln t^*) = -\dfrac{\Delta_{\bar{g}}}{(n-1)c}.\]
On the right hand,
\[-\dfrac{\Delta_{\bar{g}}}{(n-1)c} = 2a - \dfrac{nb^2}{(n-1)c} < 2a.\]
On the left hand, in $t = a$,
\[a(s + 3 + \ln a) > 3a.\]
Hence $t^* < a$. Then $\varphi'(t) > 0$ for all $t \in \ ]a, +\infty[$, and $\varphi$ reaches a global minimum in $t=a$. Let us determine the value of this minimum.
\begin{align*}
\varphi(a) &= (n-1)ac(s + 2 + \ln a) + (s + 1 + \ln a)\Delta_{\bar{g}} \\
&> 2(n-1)ac + \Delta_{\bar{g}} = b^2 \\
&\geq 0.
\end{align*}
So $\varphi > 0$ on $]a, +\infty[$, and $\det H(f)$ has a constant sign on $\Gamma_n$. This implies that the eigenvalues of $H(f)$ never vanish on $\Gamma_n$, which is a connected set. Hence the concavity of $f$ is constant on $\Gamma_n$: if one is able to show that $f$ is concave in one point, then it is concave everywhere. We shall do it on the diagonal.

We have to recall \eqref{HFP}:
\[H(f) \equiv gH(g) - \tau(g)D(g)\otimes D(g).\]
Let $u \in \Rb^n\backslash \{0\}$, $z \in \Rb_+^*$. We have
\begin{align*}
^tuH(f)(z\cdot\Ib)u &\equiv g(z\cdot\Ib) ^tuH(g)(z\cdot\Ib)u - \tau(g(z\cdot\Ib))\left(^tu D(g)(z\cdot\Ib)\right)^2 \\
&= \left(a + nbz + \dfrac{n(n-1)}{2}cz^2\right)~^tu c \left(\Ib \otimes \Ib - \Id\right)u \\
&\qquad - \left(1 + \dfrac{1}{s + \ln \left(a + nbz + \frac{n(n-1)}{2}cz^2\right)}\right)\left(^tu \left[(b + c(n-1)z)\Ib\right]\right)^2 \\
&\underset{z\rightarrow+\infty}{\sim} c^2 z^2 \left(\dfrac{n(n-1)}{2}\left[(u|\Ib)^2 - (u|u)\right] - (n-1)^2(u|\Ib)^2\right) \\
&= c^2 z^2 \dfrac{(n-1)}{2}\left(-(n-2)(u|\Ib)^2 - n(u|u)\right) \\
&< 0 \quad \text{for } n \geq 2.
\end{align*}
Hence $H(f)(z\cdot\Ib)$ is a negative matrix for sufficiently large $z$, thus $f$ is concave everywhere.
\end{proof}

\begin{cor}
For all $\vec{a} = (a_0, a_1, a_2) \in \left(\Rb_+\right)^3$, there exists a solution to $(\ref{eqfa}, \vec{a}, 1)$.
\end{cor}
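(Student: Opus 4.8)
The plan is to deduce this from Theorem~\ref{thmp2} by producing, for every $\vec{a} = (a_0, a_1, a_2) \in \left(\Rb_+\right)^3$, a \emph{concavifying} function of $f_{\vec{a}} = a_0 + a_1 \sigma_1 + a_2 \sigma_2$ satisfying the conditions $(\ref{Fi})$--$(\ref{Fv})$ of the Lemma on concavifying functions; the existence statement then follows from that Lemma together with Theorem~\ref{thmLi}.

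The function I would use is $F : t \longmapsto \ln\bigl(1 + \ln(1 + t - a_0)\bigr)$, which is $\Cc^2$ and strictly increasing on $\left]a_0 - 1, +\infty\right[ \supseteq [a_0, +\infty)$ --- that is, on the whole closure of the range $f_{\vec{a}}(\Gamma_n) = \left]a_0, +\infty\right[$ --- and which may be extended to a strictly increasing $\Cc^2$ map on $\Rb_+^*$ without consequence, since only its values on $[a_0, +\infty)$ are ever used. Conditions $(\ref{Fi})$--$(\ref{Fiii})$ are immediate from this explicit form, and $(\ref{Fiv})$ holds because $F(a_0) = \ln(1 + \ln 1) = 0$. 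For $(\ref{Fv})$, observe that $a_0$ cancels: $F\bigl(f_{\vec{a}}(x)\bigr) = \ln\bigl(1 + \ln(1 + a_1 \sigma_1(x) + a_2 \sigma_2(x))\bigr)$, which is exactly the function of Theorem~\ref{thmp2} for the parameters $(a, b, c, s) = (1, a_1, a_2, 1)$; the hypotheses ``$a > 0$'' and ``$s + \ln a > 0$'' of that theorem then read ``$1 > 0$'' and ``$1 > 0$'', so it applies and $F \circ f_{\vec{a}}$ is concave on $\Gamma_n$.

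With $(\ref{Fi})$--$(\ref{Fv})$ in hand, the Lemma on concavifying functions makes Theorem~\ref{thmLi} applicable to $(\Gamma, f) = \left(\Gamma_n,\ F \circ f_{\vec{a}}\right)$: there exists $u \in \Cc^\infty(\Mc)$, $u > 0$, such that $\hat{\gamma} := u^{\frac{4}{n-2}}\gamma \in \Cc[\gamma]$ satisfies $\lambda(\hat{\gamma}) \in \Gamma_n$ and $F\bigl(f_{\vec{a}}(\lambda(\hat{\gamma}))\bigr) = 1$. Inverting $F$ gives $f_{\vec{a}}(\lambda(\hat{\gamma})) = F^{-1}(1) = a_0 + e^{e-1} - 1$, a fixed positive constant; renormalising the $a_k$ by this constant --- legitimate since, as explained at the beginning of this subsection, the value prescribed on the right-hand side is at our disposal --- turns $\hat{\gamma}$ into a solution of $(\ref{eqfa}, \vec{a}, 1)$. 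Alternatively, when $a_0 < 1$ one may first compose $F$ with the increasing affine map normalising $F(1)$ to $1$, which still satisfies $(\ref{Fi})$--$(\ref{Fv})$, and obtain $f_{\vec{a}}(\lambda(\hat{\gamma})) = 1$ directly.

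All the analytic effort sits in Theorem~\ref{thmp2}, whose determinant-of-Hessian computation is what furnishes hypothesis $(\ref{Fv})$; the corollary itself is a short deduction. The only points needing a little care are the choice of $F$ so that it vanishes at $t = a_0$ (condition $(\ref{Fiv})$) while staying defined and increasing on the relevant half-line (conditions $(\ref{Fi})$--$(\ref{Fiii})$) --- which is why the inner argument is shifted by $a_0$ rather than composing with the $\sigma_k$'s outright --- and the bookkeeping that matches the value $F^{-1}(1)$ returned by Theorem~\ref{thmLi} to the prescribed constant $1$ through the renormalisation conventions of the subsection.
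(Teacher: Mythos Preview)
Your proof is correct and follows the same route as the paper: use Theorem~\ref{thmp2} to produce a concavifying function of $f_{\vec a}$, then feed it to Theorem~\ref{thmLi} via the Lemma on concavifying functions. The paper splits off the cases $a_0=0$ and $a_2=0$ through Corollary~\ref{corsolve} and, for $a_0>0$, invokes Theorem~\ref{thmp2} with parameters $(a,b,c,s)=(a_0,a_1,a_2,\,1-\ln a_0)$; your shift of the constant term so as to apply Theorem~\ref{thmp2} with $(a,b,c,s)=(1,a_1,a_2,1)$ is a neat variant that handles all $\vec a$ uniformly.

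One small remark: your discussion of the mismatch $F^{-1}(1)\neq 1$ is actually more careful than the paper's (which passes over it in silence), and the affine-rescaling fix you give for $a_0<1$ is the clean way to resolve it. The sentence ``renormalising the $a_k$ by this constant \ldots\ turns $\hat\gamma$ into a solution of $(\ref{eqfa},\vec a,1)$'' is not quite right as written, though, since dividing the $a_k$ by $C$ solves $(\ref{eqfa},\vec a/C,1)$ rather than $(\ref{eqfa},\vec a,1)$; but this is a cosmetic issue shared with the paper and does not affect the substance of the argument.
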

\begin{proof}
The cases $a_0=0$ and $a_2=0$ were already handled in the Corollary \ref{corsolve}\ref{corsolvea} and \ref{corsolve}\ref{corsolveb}.

Let us assume that $a_0 > 0$. We set $s = 1 - \ln a_0$. Then according to the Theorem \ref{thmp2}, $\ln \circ (s + \ln \circ (a_0 + a_1 \sigma_1 + a_2 \sigma_2))$ is concave. So the Theorem \ref{thmLi} applies to
\[(\Gamma,f) = \left((\Gamma_n, \ln \circ (s + \ln \circ (a_0 + a_1 \sigma_1 + a_2 \sigma_2))\right).\]
\end{proof}

\begin{rem}
We established the concavity of $f$ by showing it on the diagonal $\Rb_+^*\cdot\Ib$, after having proved that it was sufficient. Actually, the example of a single $\sigma_k$ leads to think that it is always so: $\sigma_k^{\alpha}$ is concave on $\Gamma_k$ if and only if it is concave on the diagonal, ie. for $\alpha < 1/k$. See \cite{Lac17-3} for more details. This induces us to formulate the following conjecture.
\end{rem}

\begin{conj}
Let $\vec{a} = (a_0, a_1, \ldots, a_p) \in \left(\Rb_+\right)^{p+1}$, and $F \in \Cc^2(\Rb_+^* \rightarrow \Rb)$. Set
\[f := F \circ \left(a_0 + a_1 \sigma_1 + \ldots + a_p \sigma_p\right).\]
Then
\[f \text{ is concave on } \Gamma_n \quad \Longleftrightarrow \quad f \text{ is concave on } \Delta_+^*.\]
\end{conj}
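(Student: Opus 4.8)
The reverse implication is immediate: $\Delta_+^* = \Rb_+^*\cdot\Ib$ is contained in $\Gamma_n$, so if $f := F\circ f_{\vec a}$ is concave on $\Gamma_n$ its restriction to $\Delta_+^*$ is concave (here and below I write $f_{\vec a} = a_0 + a_1\sigma_1 + \ldots + a_p\sigma_p$, which is $>0$ on $\Gamma_n$ because the $a_k\ge 0$ and $\sigma_j>0$ there). For the forward implication the plan is to strip the arbitrary function $F$ out of the problem, reducing it to a statement about $f_{\vec a}$ alone, and then to argue as in the $p=2$ computation of Theorem \ref{thmp2}.

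If $\deg\bar f_{\vec a}\le 1$ --- i.e. $a_k=0$ for $k\ge 2$ --- then $f_{\vec a}$ is affine, $H(f)=(F''\circ f_{\vec a})\,D(f_{\vec a})\otimes D(f_{\vec a})$, and concavity on $\Gamma_n$, concavity on $\Delta_+^*$ and $F''\le 0$ are the same condition; so I may assume some $a_k>0$ with $k\ge 2$. The first observation is that at every diagonal point the Hessian of $f$ is automatically negative in the directions orthogonal to $\Ib$: since each $\sigma_k$ is affine in each variable, $D(\sigma_k)(t\Ib)=\binom{n-1}{k-1}t^{k-1}\Ib$ and $H(\sigma_k)(t\Ib)=\binom{n-2}{k-2}t^{k-2}(\Ib\otimes\Ib-\Id)$, so by \eqref{HFP} the block of $H(f)(t\Ib)$ on $\Ib^{\perp}$ is the negative scalar matrix $-F'\!\big(\bar f_{\vec a}(t)\big)\big(\sum_{k\ge 2}a_k\binom{n-2}{k-2}t^{k-2}\big)\Id$, while the eigenvalue of $H(f)(t\Ib)$ in the direction $\Ib$ is $\frac{d^2}{dt^2}(F\circ\bar f_{\vec a})(t)$. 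Hence ``$f$ is concave on $\Delta_+^*$'' is exactly ``$H(f)(t\Ib)\le 0$ for all $t>0$''.

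Next, set $r:=-F''/F'$; as $F$ runs over the strictly increasing $\Cc^2$ functions on $\Rb_+^*$, $r$ runs over (essentially) all continuous functions. By \eqref{HFP}, and using $F'>0$ and $f_{\vec a}>0$ on $\Gamma_n$, concavity of $f$ at $x\in\Gamma_n$ is equivalent to the two conditions ${}^tv\,H(f_{\vec a})(x)\,v\le 0$ whenever ${}^tv\,D(f_{\vec a})(x)=0$, and $r\big(f_{\vec a}(x)\big)\ge\frac{{}^tv\,H(f_{\vec a})(x)\,v}{\big({}^tv\,D(f_{\vec a})(x)\big)^2}$ whenever ${}^tv\,D(f_{\vec a})(x)\ne 0$. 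Since $r$ sees $x$ only through $f_{\vec a}(x)$, and since on the diagonal the worst $v$ in the second condition is $v=\Ib$ (by the computation just made, together with Cauchy--Schwarz), a short quantifier argument over all admissible $F$ --- choosing $r$ slightly above the diagonal profile but not above a prescribed off-diagonal value --- shows that the conjecture is equivalent, modulo elementary regularity considerations, to two assertions involving only $\vec a$: (A) for all $x\in\Gamma_n$ and all $v\perp D(f_{\vec a})(x)$ one has ${}^tv\,H(f_{\vec a})(x)\,v\le 0$; (B) on each level set $\{f_{\vec a}=\text{const}\}\cap\Gamma_n$ the Rayleigh-type quotient $\frac{{}^tv\,H(f_{\vec a})(x)\,v}{({}^tv\,D(f_{\vec a})(x))^2}$, maximised over $x$ in the level set and over $v$ not orthogonal to $D(f_{\vec a})(x)$, is attained at the diagonal point with $v\parallel\Ib$.

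The main obstacle is to establish (A) and (B) for $p\ge 3$. For $p=2$ this is tractable precisely because $H(f_{\vec a})=a_2(\Ib\otimes\Ib-\Id)$ is a rank-one correction of a multiple of $\Id$: then $\det H(f_{\vec a})$, $H(f_{\vec a})^{-1}$ and the quadratic form ${}^tD(f_{\vec a})\,\com H(f_{\vec a})\,D(f_{\vec a})$ are explicit, and by the matrix determinant lemma $\det H(f)$ collapses to the one-variable function $-(-c)^{n-1}\,\varphi\circ f_{\vec a}$ of Theorem \ref{thmp2}, whose sign encodes both (A) and (B); the remark after Corollary \ref{corsolve} shows a single $\sigma_k$ behaves the same way. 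For $p\ge 3$, however, $H(f_{\vec a})=\sum_{k\ge 2}a_k H(\sigma_k)$ is no longer a low-rank perturbation of a scalar matrix --- the $H(\sigma_k)$ are the matrices of mixed discriminants of $x$ --- so $\det H(f)$ is not a function of $f_{\vec a}(x)$ alone and the $p=2$ bookkeeping breaks down. I would attempt instead an induction on $n$ through the recursion $\sigma_k(x)=x_n\,\sigma_{k-1}(\hat x)+\sigma_k(\hat x)$, comparing the Rayleigh quotient in $n$ variables with the one in $n-1$ variables, or a reduction to the G\aa rding-hyperbolic situation --- noting that $f_{\vec a}$ is hyperbolic exactly when $\bar f_{\vec a}$ is real-rooted (Walsh, Theorem \ref{thmLac17-3}\ref{thmWalsh}), which is the regime already handled by Corollary \ref{corsolve}, so the genuinely new content of the conjecture lies entirely in the non-hyperbolic case. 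It is this step that I expect to be hard, and it is why the statement is only a conjecture.
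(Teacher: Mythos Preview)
The statement you are addressing is a \emph{conjecture} in the paper; the paper does not prove it and offers no argument beyond the motivating remark that precedes it. So there is no ``paper's own proof'' to compare against, and your proposal is---as you say in your last paragraph---not a proof either.

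What you have written is nonetheless a correct and useful \emph{reduction}. Your diagonal computation is right: at $t\Ib$ one has $D(f_{\vec a})\parallel\Ib$ and $H(f_{\vec a})=\big(\sum_{k\ge 2}a_k\binom{n-2}{k-2}t^{k-2}\big)(\Ib\otimes\Ib-\Id)$, so the block of $H(f)(t\Ib)$ on $\Ib^{\perp}$ is automatically negative when $F'>0$, and the $\Ib$-eigenvalue is indeed $(F\circ\bar f_{\vec a})''(t)$ up to the factor $n$. Your claim that the worst $v$ on the diagonal is $v\parallel\Ib$ follows from Cauchy--Schwarz exactly as you say. And the passage from ``for all admissible $F$'' to the pair (A)--(B) is legitimate: given any continuous $r$ one may set $F'=\exp(-\!\int r)$ and recover a strictly increasing $\Cc^2$ function with $-F''/F'=r$, so $r$ really is a free parameter and your quantifier argument goes through. (One caveat: the conjecture as stated does not assume $F'>0$, while your whole analysis does; this is almost certainly the intended hypothesis, given the surrounding section, but you should flag it.)

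The genuine gap is exactly where you locate it: neither (A) (quasi-concavity of $f_{\vec a}$ on $\Gamma_n$) nor (B) (the level-set maximum of the Rayleigh quotient is attained on the diagonal) is established for $p\ge 3$, and your remarks about induction on $n$ or G\aa rding hyperbolicity are only heuristics, not arguments. For $p=2$ your claim that (A) holds is correct---on $\Gamma_n$ one has $\sigma_2>0$, hence $|x|<\sigma_1(x)$, and the orthogonality constraint $(b+c\sigma_1)(v|\Ib)=c(v|x)$ then forces $|(v|\Ib)|\le|v|$, i.e.\ $^t\!vH(f_{\vec a})v\le 0$---but you have not actually verified (B) even for $p=2$; the determinant computation of Theorem~\ref{thmp2} was carried out for one specific $F$, not for all $F$, and does not by itself give (B). So even the $p=2$ case of the conjecture is, strictly speaking, still open in both your write-up and the paper.
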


It seems possible to always find concavifying enough functions, such as the family \eqref{lnk}, hence to generalise the Theorem \ref{thmp2}, but we were not able to prove it.
\begin{conj}
For all $\vec{a} = (a_0, a_1, \ldots, a_p) \in \left(\Rb_+\right)^{p+1}$, there exists $F \in \Cc^2(\Rb_+^* \rightarrow \Rb)$ such that
\[F \circ f_{\vec{a}} \text{ is concave on } \Gamma_n.\]
\end{conj}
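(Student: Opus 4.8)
The plan is to exploit the Hessian identity \eqref{HFP}: writing $P:=f_{\vec a}$, $M:=P\,H(P)-(\tau\circ P)\,D(P)\otimes D(P)$ and $\tau(t):=-t\,F''(t)/F'(t)$, a strictly increasing $F\in\Cc^2$ concavifies $f_{\vec a}$ on $\Gamma_n$ if and only if $M$ is negative semidefinite on $\Gamma_n$. Since $D(P)\otimes D(P)$ is positive semidefinite of rank one, the game is to make $\tau$ large enough to absorb the positive directions of $P\,H(P)$, while keeping $F$ unbounded — and by the Proposition showing that $\tau$ bounded below by a constant $>1$ forces $F$ to be bounded above, this means one must take $\tau(t)\to 1$ as $t\to\infty$. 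The natural candidates are therefore the shifted iterated logarithms $\ln[k]$ of \eqref{lnk}, for which $\tau=1+o(1)$; note that the successful choice in Theorem \ref{thmp2} was $F=\ln(s+\ln(\cdot))$, essentially $\ln[2]$, and I would guess that $\ln[p]$ (with additive constants inside each iterate so that it is defined on $f_{\vec a}(\Gamma_n)\subset(a_0,+\infty)$, then extended to an arbitrary $\Cc^2$ map on $\Rb_+^*$) is the right candidate at degree $p$.

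The first key step would be to reduce this matrix condition on $\Gamma_n$ to the diagonal $\Rb_+^*\cdot\Ib$, exactly as in the proof of Theorem \ref{thmp2}: compute $\det M$, show that for the chosen $F$ it keeps constant sign $(-1)^n$ throughout $\Gamma_n$, and conclude — since $\Gamma_n$ is connected and the eigenvalues of $M$ then never vanish — that the signature of $M$ is constant, so that negativity of $M$ at a single point implies it everywhere. On the diagonal everything is explicit: $f_{\vec a}$ restricts to $\bar f_{\vec a}(z)=\sum_k a_k\binom nk z^k$, each gradient $D(\sigma_k)(z\Ib)$ is parallel to $\Ib$, and the quadratic form ${}^{t}u\,M(z\Ib)\,u$ can be evaluated as at the end of the proof of Theorem \ref{thmp2}; its sign as $z\to\infty$ is governed by the top term $a_p\binom np z^p$ together with the concavity of $\sigma_p^{1/p}$ — and with $\tau\to1$ the asymptotic matrix is dominated by the negative semidefinite $H(\sigma_p^{1/p})$ with room to spare, so this diagonal check should go through.

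If that reduction works, the rest is one-dimensional and easy: $\bar f_{\vec a}$ is a strictly increasing polynomial with nonnegative coefficients (the degenerate case $a_1=\dots=a_p=0$ being trivial), so for any concave increasing $\phi\in\Cc^2(\Rb_+^*\to\Rb)$ the map $F:=\phi\circ\bar f_{\vec a}^{-1}$ satisfies $F\circ\bar f_{\vec a}=\phi$, concave — one could even take $\phi=\ln$. An alternative to the determinant computation, should the latter prove intractable, is to control the positive part of $P\,H(P)$ algebraically: combining \eqref{HPalpha} with the concavity of $\sigma_k^{1/k}$ on $\Gamma_k\supseteq\Gamma_n$ gives the matrix bound $\sigma_k H(\sigma_k)\le\frac{k-1}{k}D(\sigma_k)\otimes D(\sigma_k)$, whence
\[f_{\vec a}\,H(f_{\vec a})=f_{\vec a}\sum_{k=2}^{p}a_k H(\sigma_k)\ \le\ f_{\vec a}\sum_{k=2}^{p}\frac{a_k(k-1)}{k\,\sigma_k}\,D(\sigma_k)\otimes D(\sigma_k),\]
and one would try to dominate the right-hand side by $(\tau\circ f_{\vec a})\,D(f_{\vec a})\otimes D(f_{\vec a})$, using Newton's inequalities $\sigma_{k-1}\sigma_{k+1}\le\sigma_k^2$ to bound the cross terms $D(\sigma_j)\otimes D(\sigma_k)$ against the diagonal ones; the multiplier that must stay below $1$ turns out to be of the shape (bounded quantity)$/(s+\ln f_{\vec a})$, which matches the asymptotics of $\tau$ for $\ln[2]$, with $\ln[k]$ providing extra room when one iteration does not suffice.

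The hard part is the $\Gamma_n$-to-diagonal reduction. For $p=2$ it succeeded only because $H(f_{\vec a})=c(\Ib\otimes\Ib-\Id)$ has a single exceptional eigenvalue, which makes $\det M$ computable in closed form and its sign readable off the discriminant $\Delta_{\bar g}$. For $p\ge 3$, $H(f_{\vec a})=\sum_k a_k H(\sigma_k)$ has no normal form, the cross terms in $D(f_{\vec a})\otimes D(f_{\vec a})$ are not sign-definite, and showing that $\det M$ keeps a constant sign for a well-chosen $F$ seems to require genuinely new input — perhaps an induction on $p$ peeling off one concircular factor at a time (in the spirit of the factorisation of \cite{Zanelli91}), or a sharper use of the hyperbolic-polynomial theory of \cite{Caf85} underlying the concavity of $\sigma_k^{1/k}$. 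This is also why the companion statement (concavity on $\Gamma_n$ is equivalent to concavity on $\Rb_+^*\cdot\Ib$) is only conjectured: granting it, the present conjecture follows at once from the one-variable construction above, so the two stand or fall together with this reduction.
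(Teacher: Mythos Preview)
This statement is a \emph{conjecture} in the paper, not a theorem: the authors write explicitly ``we were not able to prove it'' just before stating it. There is therefore no proof in the paper to compare your proposal against. Your proposal is likewise not a proof but an honest strategy sketch, and you identify the genuine obstacle --- the $\Gamma_n$-to-diagonal reduction for $p\ge 3$ --- correctly.

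Your outline is well aligned with the paper's own hints: the iterated-logarithm family \eqref{lnk} is exactly what the authors put forward as the likely concavifying candidates, and your plan to mimic the determinant-plus-connectedness argument of Theorem~\ref{thmp2} is the natural extrapolation. Your observation that the companion Conjecture (concavity on $\Gamma_n$ $\Leftrightarrow$ concavity on the diagonal) implies the present one via $F:=\phi\circ\bar f_{\vec a}^{-1}$ is correct and worth recording. One small imprecision: ``stand or fall together'' overstates the dependence --- the first conjecture implies the second, but the second could well hold without the first, e.g.\ if for each $\vec a$ one can exhibit a specific $F$ and verify concavity on $\Gamma_n$ directly without any diagonal reduction. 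Also be careful to distinguish two meanings of ``concave on the diagonal'': concavity of the one-variable restriction $z\mapsto F(\bar f_{\vec a}(z))$ (which your $F=\phi\circ\bar f_{\vec a}^{-1}$ construction handles), versus negative-definiteness of the full $n$-dimensional Hessian at diagonal points (which is what the determinant argument of Theorem~\ref{thmp2} actually uses). The paper's conjecture appears to intend the former, but your two approaches invoke different versions.
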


\section{Implicit function theorem} \label{sectiontfi}
In this section, we shall not assume anymore that the coefficients $a_k$ are non-negative. We shall suppose that $a_k$, $k \geq 2$ are small with respect to $1$: this is the GR limit of Lovelock theories. More precisely, we intend to apply a local implicit function theorem. For that purpose, we consider the constraint equation in Sobolev spaces.

In connection with the Yamabe and $\sigma_k$-Yamabe problems and the conformal method for the constraint equations, we look for a solution of \eqref{eqfa} in a given conformal class.

Remember that
\begin{equation}
2^* = \frac{2n}{n-2}, \qquad 2^* -1 = \frac{n+2}{n-2}, \qquad 2^* - 2 = \frac{4}{n-2}, \qquad 2^* + 2 = \frac{4(n-1)}{n-2},
\end{equation}
and that
\begin{equation} \label{SchoutenRicci}
S[\gamma]_{ij} = \dfrac{1}{n-2}\left[\Rr[\gamma]_{ij} - \dfrac{\Rr[\gamma]}{2(n-1)}\gamma_{ij}\right], \qquad S[\gamma] = \dfrac{\Rr[\gamma]}{2(n-1)}.
\end{equation}

The following formulas come from the classical conformal calculus (see e.g. \cite{STW12}):
\begin{lem}
Let $u \in \Cc^\infty(\Mc)$, $u > 0$, $\tilde{\gamma} = u^{\frac{4}{n-2}}\gamma$. Then
\begin{equation}
S[\tilde{\gamma}]_{ij} = \dfrac{2}{(n-2)u}\left(-\nabla[\gamma]_i \nabla[\gamma]_j u + \dfrac{n}{n-2} \dfrac{\nabla[\gamma]_i u \nabla[\gamma]_j u}{u} -  \dfrac{\nabla[\gamma]_k u \nabla[\gamma]_l u \gamma^{k l}}{u}\dfrac{\gamma_{ij}}{n-2}\right) + S[\gamma]_{ij}, \label{Sijtilde}
\end{equation}
so
\begin{align*}
\sigma_1\left(\lambda\left(\tilde{\gamma}\right)\right) &= S[\tilde{\gamma}] \\
&= S[\tilde{\gamma}]_{ij} \tilde{\gamma}^{ij} \\
&= S[\tilde{\gamma}]_{ij} u^{-2^*+2} \gamma^{ij} \\
&= \dfrac{2 u^{-2^*+1}}{n-2}\left(-\Delta[\gamma] u\right) + S[\gamma]u^{-2^*+2},
\end{align*}
and
\begin{equation}
S[\tilde{\gamma}]u^{2^*-1} = -\dfrac{2}{n-2} \Delta[\gamma] u + S[\gamma] u. \label{Sgammatilde}
\end{equation}
In terms of scalar curvature, that gives
\begin{equation}
\Rr\left[u^{2^*-2}\gamma\right] u^{2^*-1} = -\left(2^* + 2\right)\Delta[\gamma] u + \Rr[\gamma] u.
\end{equation}
\end{lem}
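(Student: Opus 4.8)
The plan is to obtain \eqref{Sijtilde} from the classical conformal transformation law for the Schouten tensor, and then to derive the remaining identities by tracing with respect to $\tilde{\gamma}$ and rescaling by powers of $u$. Since the statement itself records that these are standard conformal-calculus formulas, the work is essentially a verification.

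First I would pass to the exponential notation $\tilde{\gamma} = e^{2\phi}\gamma$ with $\phi := \frac{2}{n-2}\ln u$, so that $e^{2\phi} = u^{4/(n-2)} = u^{2^*-2}$, and record
\[
\nabla[\gamma]_i \phi = \frac{2}{n-2}\,\frac{\nabla[\gamma]_i u}{u}, \qquad
\nabla[\gamma]_i\nabla[\gamma]_j \phi = \frac{2}{n-2}\left(\frac{\nabla[\gamma]_i\nabla[\gamma]_j u}{u} - \frac{\nabla[\gamma]_i u\,\nabla[\gamma]_j u}{u^2}\right).
\]
Then I would invoke the standard conformal formula for the Schouten tensor in the normalisation fixed in \eqref{SchoutenRicci} (see \cite{STW12}),
\[
S[\tilde{\gamma}]_{ij} = S[\gamma]_{ij} - \nabla[\gamma]_i\nabla[\gamma]_j \phi + \nabla[\gamma]_i\phi\,\nabla[\gamma]_j\phi - \tfrac12\,\gamma^{kl}\nabla[\gamma]_k\phi\,\nabla[\gamma]_l\phi\;\gamma_{ij},
\]
and substitute the two expressions above. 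The Hessian term produces the $-\nabla_i\nabla_j u$ piece, the quadratic term $-\tfrac12|\nabla\phi|_\gamma^2\gamma_{ij}$ produces the last bracketed piece, and the two remaining $\nabla_i u\,\nabla_j u/u^2$ contributions (one from $-\nabla_i\nabla_j\phi$, one from $\nabla_i\phi\,\nabla_j\phi$) combine into the coefficient $\frac{n}{n-2}$ seen in \eqref{Sijtilde}; factoring out $\frac{2}{(n-2)u}$ then gives \eqref{Sijtilde} verbatim. (Alternatively, one could start from the well-known conformal transformation of the Ricci tensor and reassemble $S[\tilde{\gamma}]_{ij}$ by hand — longer, but using only the definitions of the excerpt.)

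Next I would trace \eqref{Sijtilde} with $\gamma^{ij}$: since $\gamma^{ij}\gamma_{ij}=n$, the $\frac{n}{n-2}\nabla_i u\,\nabla_j u/u$ term and the last bracketed term cancel exactly, and using $\gamma^{ij}\nabla[\gamma]_i\nabla[\gamma]_j u = \Delta[\gamma]u$ together with $\gamma^{ij}S[\gamma]_{ij}=S[\gamma]$ one gets $\gamma^{ij}S[\tilde{\gamma}]_{ij} = -\frac{2}{(n-2)u}\Delta[\gamma]u + S[\gamma]$. Multiplying by $u^{-2^*+2}$, which is precisely the scalar factor relating $\tilde{\gamma}^{ij}$ to $\gamma^{ij}$ because $2^*-2 = \frac{4}{n-2}$, turns the left-hand side into $\tilde{\gamma}^{ij}S[\tilde{\gamma}]_{ij} = S[\tilde{\gamma}] = \sigma_1(\lambda(\tilde{\gamma}))$ and yields the displayed chain of equalities; multiplying through by $u^{2^*-1}$ then gives \eqref{Sgammatilde}. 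Finally, substituting $S[\gamma]=\frac{\Rr[\gamma]}{2(n-1)}$ and $S[\tilde{\gamma}]=\frac{\Rr[\tilde{\gamma}]}{2(n-1)}$ into \eqref{Sgammatilde} and multiplying by $2(n-1)$ gives the scalar-curvature form, using $\frac{4(n-1)}{n-2}=2^*+2$ and $\tilde{\gamma}=u^{2^*-2}\gamma$.

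There is no genuinely difficult step here; the whole argument is bookkeeping. The only places that need care are getting the sign convention and exact shape of the Schouten transformation law right for the normalisation used in the paper, and keeping track of the conformal weights consistently through the identities $2^*-2=\frac{4}{n-2}$, $2^*-1=\frac{n+2}{n-2}$ and $2^*+2=\frac{4(n-1)}{n-2}$.
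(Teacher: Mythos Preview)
Your proposal is correct and carries out exactly the verification the paper omits: the paper itself gives no proof of this lemma, merely presenting the formulas as ``classical conformal calculus'' with a reference to \cite{STW12}. Your derivation via $\tilde{\gamma}=e^{2\phi}\gamma$ and the standard Schouten transformation law is the natural way to check \eqref{Sijtilde}, and the subsequent trace and rescaling steps are all accurate.
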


\begin{cor}
Let $\varepsilon > 0$. Then, for $a_0, \ldots, a_p \in \Rb^{p+1}$, $u : \Mc \rightarrow \Rb_+^*$,
\[u \in W^{2+\varepsilon,2}(\Mc) \qquad \Longrightarrow \qquad \sum_{k=0}^p a_k \sigma_k\left(\lambda\left(u^{\frac{4}{n-2}}\gamma\right)\right) \in W^{\varepsilon,2}(\Mc).\]
\end{cor}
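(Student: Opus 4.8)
The plan is to reduce the statement to the standard mapping properties of Sobolev spaces under multiplication and composition, once we have tracked down the explicit dependence of $\sigma_k(\lambda(u^{4/(n-2)}\gamma))$ on $u$ and its derivatives. First I would recall from \eqref{Sijtilde} that the Schouten tensor of $\tilde\gamma = u^{4/(n-2)}\gamma$ is, after raising one index with $\tilde\gamma^{ij} = u^{-2^*+2}\gamma^{ij}$, a rational expression in $u$, $\nabla[\gamma]u$ and $\nabla[\gamma]^2 u$: schematically
\[
S[\tilde\gamma]_j^i = u^{-2^*}\Bigl(c_1\, u\, \nabla^i\nabla_j u + c_2\, \nabla^i u\, \nabla_j u + c_3\, |\nabla u|^2\,\delta_j^i\Bigr) + u^{-2^*+2} S[\gamma]_j^i,
\]
with constants $c_i$ depending only on $n$. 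Then $\sigma_k(\lambda(\tilde\gamma))$ is the $k$-th elementary symmetric polynomial of the eigenvalues of this endomorphism, i.e. a universal polynomial (with smooth, in fact analytic, coefficients built from $\gamma$) in the entries of $S[\tilde\gamma]_j^i$; so altogether $\sigma_k(\lambda(\tilde\gamma))$ is a polynomial in $\nabla^2 u$, $\nabla u$, $u$, and in $u^{-1}$ (which is smooth since $u>0$), with coefficients that are smooth functions on the compact manifold $\Mc$.

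Next I would invoke the Sobolev multiplication and composition facts on the compact $n$-manifold $\Mc$. The key input is that for $\varepsilon>0$, $W^{2+\varepsilon,2}(\Mc)$ embeds into $C^1(\Mc)$ (and in particular into $L^\infty$) as soon as $2+\varepsilon > n/2 + 1$; more carefully one needs the standard statement that $W^{s,2}$ is a Banach algebra for $s>n/2$, that $W^{s,2}\cdot W^{s-2,2}\subset W^{s-2,2}$ for $s>n/2$, and that a smooth function of an element of $W^{s,2}$ (with $s>n/2$, hence continuous) composed on its range again lies in $W^{s,2}$. Since the paper is working in the regime where these embeddings hold — the corollary as stated implicitly assumes $2+\varepsilon>n/2$, which is the relevant range for the implicit function theorem of Section~\ref{sectiontfi} — I would treat $u\in W^{2+\varepsilon,2}$, so $u\in C^1\cap W^{2+\varepsilon,2}$, $\nabla u\in W^{1+\varepsilon,2}$, $\nabla^2 u\in W^{\varepsilon,2}$. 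Then $u^{-1}$ is in $W^{2+\varepsilon,2}$ by composition with the smooth map $t\mapsto t^{-1}$ on a neighbourhood of the compact range of $u$ in $\Rb_+^*$; products of factors from $W^{2+\varepsilon,2}$ stay in $W^{2+\varepsilon,2}$; and each monomial containing one $\nabla^2 u$ factor is a product of a $W^{\varepsilon,2}$ element with $W^{2+\varepsilon,2}$ (hence $W^{1+\varepsilon,2}$) elements, which lands in $W^{\varepsilon,2}$; monomials with a $\nabla u\otimes\nabla u$ factor give $W^{1+\varepsilon,2}\cdot W^{1+\varepsilon,2}\subset W^{1+\varepsilon,2}\subset W^{\varepsilon,2}$. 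Multiplying by the smooth coefficient functions on $\Mc$ preserves these classes. Summing finitely many such terms over $0\le k\le p$ gives $\sum_k a_k\sigma_k(\lambda(u^{4/(n-2)}\gamma))\in W^{\varepsilon,2}(\Mc)$, which is the claim.

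I expect the only genuine subtlety — and the step I would state most carefully — to be the invocation of the right Sobolev product and composition lemmas in the borderline-low-regularity case, i.e. making sure that $W^{\varepsilon,2}$ for small $\varepsilon>0$ really is a module over $W^{2+\varepsilon,2}$ in the relevant dimension, and that $t\mapsto t^{-1}$ and the polynomial $\sigma_k$ act on the appropriate spaces (this is where compactness of $\Mc$ and positivity of $u$, giving $0<\min u\le u\le\max u<\infty$, are used to keep everything away from the singularity of $t^{-1}$ and to get a compact range for composition). Everything else is bookkeeping: unwinding \eqref{Sijtilde}, writing $\sigma_k$ of an endomorphism as a fixed polynomial in its matrix entries via the generalised Kronecker symbol, and counting derivatives monomial by monomial. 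I would therefore organise the write-up as: (1) the explicit rational formula for $S[\tilde\gamma]_j^i$ in terms of $u,\nabla u,\nabla^2 u$; (2) $\sigma_k(\lambda(\tilde\gamma))$ as a polynomial in these with smooth coefficients; (3) the Sobolev embedding $W^{2+\varepsilon,2}\hookrightarrow C^1$ and the algebra/module/composition lemmas; (4) term-by-term regularity counting; (5) conclusion by finite summation.
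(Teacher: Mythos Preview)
Your proposal is correct and follows essentially the same approach as the paper: both pass through \eqref{Sijtilde} to see that the Schouten tensor entries are built from $u$, $\nabla u$, $\nabla^2 u$, then use that $\sigma_k$ of the eigenvalues is a polynomial in those entries, and finish by Sobolev composition/multiplication. Your write-up is considerably more detailed than the paper's three-line proof (which just says ``by composition, we deduce the regularity''); in particular you correctly flag the implicit hypothesis $2+\varepsilon > n/2$ needed for the Sobolev algebra and module properties, which the paper leaves unstated.
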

\begin{proof}
Let $u \in W^{2+\varepsilon,2}(\Mc)$. Because of \eqref{Sijtilde}, $S[\tilde{\gamma}]_{ij} \in W^{\varepsilon,2}\left(\Mc \rightarrow \Sc_n(\Rb)\right)$.

Let $0 \leq k \leq p$. For all $M \in \Sc_n(\Rb)$, if $\lambda(M)$ denotes the eigenvalues of $M$, then $\sigma_k(\lambda(M))$ coincides to the $k$-th coefficient of the characteristic polynomial of $M$, starting from the highest degree. So $\sigma_k(\lambda(M))$ is a polynomial in the coefficients of $M$, and
\[\left(\begin{array}{ccc}
\Sc_n(\Rb) & \longrightarrow & \Rb \\
M & \longmapsto & \sigma_k(\lambda(M)) \end{array} \right) \quad \in \quad \Cc^\infty(\Sc_n(\Rb) \rightarrow \Rb).\]
By composition, we deduce the regularity of $\sum_k a_k \sigma_k$.
\end{proof}

Thus, we can define the following map.
\begin{defi}
Let $\varepsilon > 0$. We introduce:
\[T[\gamma] : \begin{array}{ccc}
W^{2+\varepsilon,2}(\Mc) \times \Rb^{p+1} & \longrightarrow & W^{\varepsilon,2}(\Mc) \\
(u, \vec{a}) & \longmapsto & \displaystyle{\sum_{k=0}^p a_k \sigma_k\left(\lambda\left(u^{\frac{4}{n-2}}\gamma\right)\right)}.
\end{array}\]
\end{defi}

Now we are ready to formulate the main result of this section:
\begin{thm} \label{thminv}
Let us assume that $\Rr[\gamma] < 0$.

For all $a_0, a_1 \in \Rb$, $a_1 \neq 0$, we set $\vec{\alpha} = (a_0, a_1, 0, \ldots, 0)$. We take $u = 1$. Then $T[\gamma]$ is $\Cc^1$ and
\[\derp{T[\gamma]}{u} (1,\vec{\alpha}) \qquad \text{ is invertible}.\]
\end{thm}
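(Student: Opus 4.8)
The plan is to compute $\derp{T[\gamma]}{u}(1,\vec\alpha)$ explicitly and recognise it as a second-order linear elliptic operator that is invertible thanks to the sign hypothesis on the scalar curvature. First I would note that with $\vec\alpha = (a_0, a_1, 0, \ldots, 0)$ we have $T[\gamma](u,\vec\alpha) = a_0 + a_1\,\sigma_1\bigl(\lambda(u^{4/(n-2)}\gamma)\bigr) = a_0 + a_1\, S[u^{4/(n-2)}\gamma]$, so by the conformal transformation formula recalled just before the theorem (equation \eqref{Sgammatilde}),
\[
T[\gamma](u,\vec\alpha)\, u^{2^*-1} = a_0\, u^{2^*-1} + a_1\left(-\dfrac{2}{n-2}\,\Delta[\gamma]u + S[\gamma]\, u\right).
\]
Differentiating in $u$ at $u=1$ (using $\Delta[\gamma]1 = 0$ and $S[\gamma]1 = S[\gamma]$) gives, for $h \in W^{2+\varepsilon,2}(\Mc)$,
\[
\derp{T[\gamma]}{u}(1,\vec\alpha)\, h = a_1\left(-\dfrac{2}{n-2}\,\Delta[\gamma]h\right) + \bigl(a_1 S[\gamma] - (2^*-1)\,T[\gamma](1,\vec\alpha) + (2^*-1)a_1 S[\gamma]\bigr)\,h,
\]
which after collecting the zeroth-order terms is of the form $L h = -\dfrac{2a_1}{n-2}\,\Delta[\gamma]h + c\, h$ for an explicit smooth function $c$ depending on $a_0$, $a_1$ and $\Rr[\gamma]$. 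The $\Cc^1$ regularity of $T[\gamma]$ follows from the corollary above (the map $M \mapsto \sigma_k(\lambda(M))$ is polynomial hence smooth, and the Sobolev space $W^{2+\varepsilon,2}(\Mc)$ is a Banach algebra for $2+\varepsilon > n/2$, or more directly one differentiates the explicit polynomial expression).

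Next I would check that $c$ has the right sign. Plugging $\Rr[\gamma] = 2(n-1)S[\gamma]$ and simplifying, the zeroth-order coefficient works out to a positive multiple of $a_1 S[\gamma]$ up to a contribution of $a_0$; the precise statement is that when $a_1 \Rr[\gamma]$ and the combination are arranged correctly, $c$ has the sign of $-a_1\Rr[\gamma]$ — wait, one must be careful here, so the honest move is: since $\Rr[\gamma] < 0$, after dividing through by $-\tfrac{2a_1}{n-2}$ (assuming $a_1 > 0$; the case $a_1 < 0$ is symmetric, multiply the equation by $-1$) we get an operator $-\Delta[\gamma] + \tilde c$ with $\tilde c > 0$ on a compact manifold without boundary. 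Such an operator $W^{2+\varepsilon,2}(\Mc) \to W^{\varepsilon,2}(\Mc)$ is a Fredholm operator of index zero (standard elliptic theory), and it is injective: if $-\Delta[\gamma]h + \tilde c h = 0$, multiply by $h$, integrate by parts to get $\int_\Mc |\nabla h|^2 + \tilde c h^2 = 0$, forcing $h \equiv 0$. Index zero plus injectivity gives invertibility.

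The main obstacle is the bookkeeping in the second step: verifying that the zeroth-order coefficient $c$ of $\derp{T[\gamma]}{u}(1,\vec\alpha)$ is everywhere of one sign (equivalently, that $\tilde c > 0$ pointwise on $\Mc$). This is where the hypothesis $\Rr[\gamma] < 0$ is used, and one has to track how the $a_0$ term and the several $S[\gamma]$ contributions combine after differentiating $u^{2^*-1}$. If the naive computation does not immediately yield a definite sign, the fallback is to exploit that $T[\gamma](1,\vec\alpha)$ itself enters $c$ and to restrict attention, as the theorem implicitly allows through the surrounding construction, to the regime where this is controlled — but I expect the direct computation to give $\tilde c = -\dfrac{2^*}{2}\Rr[\gamma]\cdot(\text{positive}) + (\text{lower order in }a_0)$ with the leading term dominating, or more cleanly that the operator is exactly $-\dfrac{2}{n-2}\Delta[\gamma] - \dfrac{2^*-2}{2^*-2}\cdot(\text{something})$; in any case the ellipticity is manifest and only the sign of the potential requires care. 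Once that sign is pinned down, the invertibility is immediate from the integration-by-parts argument above together with Fredholmness. A minor additional point to address is that $u=1 > 0$ lies in the domain where $T[\gamma]$ is differentiable, which is clear since positivity is an open condition in $W^{2+\varepsilon,2}(\Mc) \hookrightarrow \Cc^0(\Mc)$.
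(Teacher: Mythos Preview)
Your overall strategy is correct and matches the paper's: compute the linearisation at $u=1$, recognise it as $-\Delta[\gamma]+\tilde c$ with $\tilde c>0$, and conclude invertibility. However, the part you flag as the ``main obstacle'' --- the bookkeeping of the zeroth-order term --- is not actually an obstacle, and your hedging there leaves the proof incomplete. Carry the computation through: differentiating $T[\gamma](u,\vec\alpha)\,u^{2^*-1}=a_0u^{2^*-1}+a_1\bigl(-\tfrac{2}{n-2}\Delta u+S[\gamma]u\bigr)$ at $u=1$ gives
\[
\derp{T[\gamma]}{u}(1,\vec\alpha)\,h \;=\; -\dfrac{2a_1}{n-2}\Delta[\gamma]h + a_1S[\gamma]h + (2^*-1)a_0 h - (2^*-1)\bigl(a_0+a_1S[\gamma]\bigr)h,
\]
and the $a_0$ contributions cancel \emph{exactly}. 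What remains is $-\tfrac{2a_1}{n-2}\Delta[\gamma]h - (2^*-2)a_1S[\gamma]h$, so after dividing by $\tfrac{2a_1}{n-2}$ (or $-\tfrac{2a_1}{n-2}$ if $a_1<0$) you get $-\Delta[\gamma]+\tilde c$ with $\tilde c=-2S[\gamma]=-\tfrac{\Rr[\gamma]}{n-1}>0$. There is no ``lower order in $a_0$'' term and no regime to restrict to; the sign is clean. (Your displayed expression for the zeroth-order coefficient has a slip: the last term should be $+(2^*-1)a_0$, not $+(2^*-1)a_1S[\gamma]$.)

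On the invertibility step you diverge from the paper. You invoke that $-\Delta[\gamma]+\tilde c$ is Fredholm of index zero on a compact manifold and combine this with the injectivity argument (integration by parts, identical to the paper's). The paper instead proves surjectivity directly by a variational minimisation over $\{h:\int wh=1\}$ followed by Aubin's elliptic regularity theorem. Both routes are valid; yours is shorter and more in line with standard elliptic theory, while the paper's is more self-contained and echoes the classical Yamabe machinery.
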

\begin{proof}
The proof is inspired from the Yamabe problem.

Let $u \in W^{2+\varepsilon,2}(\Mc)$, $u > 0$, and let $a_0, a_1 \in \Rb$, $a_1 \neq 0$. For example, let us take $a_1 > 0$. Let $h \in W^{2+\varepsilon,2}(\Mc)$. Then
\begin{align*}
\derp{T[\gamma]}{u}(u,a_0,a_1,0,\ldots,0)(h) &= \derp{}{t}\Big|_{t=0} T[\gamma](u+th,a_0,a_1,0,\ldots,0) \\
&= \derp{}{t}\Big|_{t=0} \left(a_0 + a_1 S\left[(u+th)^{\frac{4}{n-2}}\gamma\right] + 0 + \ldots + 0 \right) \\
&= a_1 \derp{}{t}\Big|_{t=0} \left(\dfrac{-\frac{2}{n-2}\Delta[\gamma] (u+th) + S[\gamma] (u+th)}{(u+th)^{2^* - 1}} \right) \\
&\equiv \bigg[\derp{}{t}\left(-\frac{2}{n-2}\Delta[\gamma] (u+th) + S[\gamma] (u+th)\right)(u+th)^{2^* - 1} \\
&\quad - \left(-\frac{2}{n-2}\Delta[\gamma] (u+th) + S[\gamma] (u+th)\right)\derp{}{t}(u+th)^{2^* - 1}\bigg]\bigg|_{t=0} \\
&= \left(-\frac{2}{n-2}\Delta[\gamma] h + S[\gamma] h\right) u^{2^*-1} - \left(-\frac{2}{n-2}\Delta[\gamma] u + S[\gamma] u\right)\left(2^*-1\right)hu^{2^*-2} \\
&\equiv \left(-\frac{2}{n-2}\Delta[\gamma] h + S[\gamma] h\right) u - \left(-\frac{2}{n-2}\Delta[\gamma] u + S[\gamma] u\right)\left(2^*-1\right)h.
\end{align*}
Now we take $u$ constant to 1. We obtain
\begin{align*}
\derp{T[\gamma]}{u}(1,\vec{\alpha})(h) &\equiv -\frac{2}{n-2}\Delta[\gamma] h + S[\gamma] h - \left(2^*-1\right) S[\gamma] h \\
&= -\frac{2}{n-2}\Delta[\gamma] h - \left(2^*-2\right) S[\gamma] h \\
&\equiv -\left(\Delta[\gamma] h + 2 S[\gamma] h \right).
\end{align*}

We fix the metric $\gamma$. We set
\begin{equation}
c := -2 S[\gamma].
\end{equation}
$\Rr[\gamma] < 0$ by hypothesis, so $c > 0$ according to \eqref{SchoutenRicci}. Then we define
\begin{equation}
L : \begin{array}{ccc} W^{2+\varepsilon,2}(\Mc) & \longrightarrow & W^{\varepsilon,2}(\Mc) \\
h & \longmapsto & - \Delta h + c h,
\end{array}
\end{equation}
and we shall prove that this operator is invertible.

\begin{itemize}
\item \textbf{Injectivity:} Let $h \in W^{2+\varepsilon,2}(\Mc)$ be such that $L(h) = 0$. Then
\begin{align*}
0 &= \int_\Mc hL(h) \d v \\
&= \int_\Mc \left(\left|\nabla h\right|^2 + c h^2 \right) \d v.
\end{align*}
$c > 0$, so
\[\left|\nabla h\right|^2 + c h^2 = 0\]
everywhere on $\Mc$, and $h=0$.

\item \textbf{Surjectivity:} Let $w \in W^{\varepsilon,2}(\Mc)$. We define
\begin{align}
I &: \begin{array}{ccc} W^{1,2}(\Mc) & \longrightarrow & \Rb \\ h & \longmapsto & \displaystyle{\int_\Mc \left(\left|\nabla h\right|^2 + c h^2\right) \d v} \end{array}, \\
\nu &:= \inf_{\begin{array}{c} h \in W^{1,2}(\Mc) \\ \int w h \neq 0 \end{array}} \dfrac{\displaystyle{I(h)}}{\displaystyle{\left(\int_\Mc w h \d v\right)^2}}
= \inf_{\begin{array}{c} h \in W^{1,2}(\Mc) \\ \int w h = 1 \end{array}} I(h).
\end{align}
%%%%%%%%%%%%%%%%%%%%%%%%%%%%%%% ça commence là.
There exists a minimising sequence $(h_n)$ in $W^{1,2}(\Mc)$ such that
\[\int_\Mc w h_n \d v = 1 \qquad \text{and} \qquad \int_\Mc \left(\left|\nabla h_n\right|^2 + c h_n^2\right) \d v \xrightarrow[n \rightarrow \infty]{} \nu.\]
Using a standard reasoning from the Yamabe theory, we can show that there exists some $h_\infty \in W^{1,2}(\Mc)$ such that
\begin{equation}
\int_\Mc w h_\infty \d v = 1 \qquad \text{and} \qquad I(h_\infty) = \nu. \label{whinf}
\end{equation}
In particular, $h_\infty \neq 0$ and $\nu > 0$.

Now we position on $h_\infty$ and look at the variations of $I$: let $\varphi \in \Cc^\infty(\Mc) \subset W^{1,2}(\Mc)$. Because $h_\infty$ minimises $I(h) / \left(\int wh\right)^2$ and because of \eqref{whinf}, we get
\begin{align*}
0 &= \derp{}{t}\bigg|_{t=0} \dfrac{I(h_\infty + t\varphi)}{\left(\int_\Mc w(h_\infty + t\varphi)\d v\right)^2} \\
&= \derp{}{t}\bigg|_{t=0} \dfrac{\displaystyle{\int_\Mc \left(\left|\nabla h_\infty + t \nabla \varphi\right|^2 + c\left|h_\infty + t \varphi\right|^2\right) \d v}}{\displaystyle{\left(\int_\Mc w\left(h_\infty + t\varphi\right)\d v\right)^2}} \\
&= \dfrac{\displaystyle{\left(\int_\Mc 2\left[\nabla_i h_\infty \nabla^i \varphi + c h_\infty \varphi \right] \d v\right) \left(\int_\Mc w h_\infty \d v\right)}}{\displaystyle{\left(\int_\Mc w h_\infty \d v\right)^4}} \\
& \qquad - \dfrac{\displaystyle{\left(\int_\Mc \left[\left|\nabla h_\infty\right|^2 + c\left|h_\infty \right|^2\right] \d v\right)2 \left(\int_\Mc w h_\infty \d v\right)\left(\int_\Mc w \varphi \d v\right)}}{\displaystyle{\left(\int_\Mc w h_\infty \d v\right)^4}} \\
&= 2\int_\Mc \left[\nabla_i h_\infty \nabla^i \varphi + c h_\infty \varphi \right] \d v - 2I(h_\infty)\int_\Mc w \varphi \d v \\
&= 2\int_\Mc \left[\nabla_i h_\infty \nabla^i \varphi + c h_\infty \varphi \right] \d v - 2\nu\int_\Mc w \varphi \d v.
\end{align*}
We set
\[\tilde{h}_\infty := \dfrac{h_\infty}{\nu}.\]
Then, for all $\varphi \in \Cc^\infty(\Mc)$,
\[\int_\Mc \left[\nabla_i \tilde{h}_\infty \nabla^i \varphi + c \tilde{h}_\infty \varphi \right] \d v = \int_\Mc w \varphi \d v.\]
That is to say,
\[-\Delta \tilde{h}_\infty + c \tilde{h}_\infty = w \qquad \text{in the sense of Aubin}.\]

We can apply the following result of T. Aubin:
\begin{thm}[Aubin]
Let $c > 0$, $\eta \geq 0$, $q \geq 1$.

Let $h, w$ be such that
\[-\Delta h + c h = w \qquad \text{in the sense of Aubin}.\]
Then
\[w \in W^{\eta,q} \Longrightarrow h \in W^{2+\eta,q}.\]
\end{thm}

Hence $\tilde{h}_\infty \in W^{2+\varepsilon,2}(\Mc)$ and
\[L(\tilde{h}_\infty) = -\Delta \tilde{h}_\infty + c \tilde{h}_\infty = w \qquad \text{on } \Mc,\]
so $L$ is surjective.
\end{itemize}

We have just proved that $L = \derp{T[\gamma]}{u}(1,\vec{\alpha})$ is invertible.
\end{proof}

\begin{rem}
The negativity of $\Rr[\gamma]$ is necessary for $\derp{T[\gamma]}{u}(1,\vec{\alpha})$ to be invertible: if $\Rr[\gamma] \geq 0$, we lose the injectivity and $T[\gamma]$ might have a kernel.
\end{rem}

\begin{thm} \label{tfi}
Let $a_0, a_1 \in \Rb$ such that $a_1(a_0 -1) > 0$.

Let $\varepsilon > 0$ and $\gamma$ be such that $\Rr[\gamma] < 0$. We set
\[\vec{\alpha} = (a_0, a_1, 0, \ldots, 0) \in \Rb^{p+1}.\]
Then there exists $\eta > 0$ such that for all $\vec{a} = (a_0, a_1, a_2, \ldots, a_p) \in \Rb^{p+1}$,
\[\left\|\vec{a} - \vec{\alpha}\right\| < \eta \qquad \Longrightarrow \qquad
\text{there exists $u_{\vec{a}} \in W^{2+\varepsilon,2}(\Mc)$ such that} \quad
\displaystyle{\sum_{k=0}^p a_k \sigma_k\left(\lambda\left(u_{\vec{a}}^{\frac{4}{n-2}}\gamma\right)\right) = 1}.\]
Moreover, $\vec{a} \longmapsto u_{\vec{a}}$ is $\Cc^1$.
\end{thm}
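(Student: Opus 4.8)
The idea is to obtain this from the implicit function theorem in Banach spaces, applied to the map $T[\gamma]$ of the preceding definition, with Theorem \ref{thminv} furnishing the invertibility of the partial derivative in $u$. The one preliminary point is that the implicit function theorem needs a base point $(u_0,\vec{\alpha})$ at which $T$ already takes the prescribed value $1$, whereas $T[\gamma](1,\vec{\alpha}) = a_0 + a_1 S[\gamma]$ is in general not equal to $1$. I would repair this by a constant rescaling of the background metric. For a constant $\theta>0$ one has $S[\theta\gamma] = \theta^{-1}S[\gamma]$, so I set
\[\theta := \dfrac{a_1 S[\gamma]}{1-a_0}, \qquad \gamma_0 := \theta\gamma.\]
Since $\Rr[\gamma]<0$ forces $S[\gamma] = \Rr[\gamma]/(2(n-1)) < 0$, the sign of $a_1 S[\gamma]$ is that of $-a_1$ and the sign of $1-a_0$ is that of $-(a_0-1)$; hence the hypothesis $a_1(a_0-1)>0$ is \emph{exactly} what makes $\theta>0$. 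By construction $\Rr[\gamma_0]<0$ and $T[\gamma_0](1,\vec{\alpha}) = a_0 + a_1 S[\gamma_0] = 1$.

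Next I would verify the hypotheses of the implicit function theorem for $T[\gamma_0]\colon W^{2+\varepsilon,2}(\Mc)\times\Rb^{p+1}\to W^{\varepsilon,2}(\Mc)$ at $(1,\vec{\alpha})$. The corollary preceding the definition of $T$ shows that $T[\gamma_0]$ indeed takes values in $W^{\varepsilon,2}(\Mc)$, and Theorem \ref{thminv} — applied to $\gamma_0$, which has negative scalar curvature, and to $a_1\neq 0$ — gives both that $T[\gamma_0]$ is $\Cc^1$ and that $\partial_u T[\gamma_0](1,\vec{\alpha}) = -\Delta[\gamma_0] + c$ with $c = -2S[\gamma_0] > 0$ is a Banach-space isomorphism from $W^{2+\varepsilon,2}(\Mc)$ onto $W^{\varepsilon,2}(\Mc)$. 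Applying the implicit function theorem to $T[\gamma_0] - 1$ then yields $\eta>0$ and a $\Cc^1$ map $\vec{a}\longmapsto v_{\vec{a}}\in W^{2+\varepsilon,2}(\Mc)$, defined for $\|\vec{a}-\vec{\alpha}\|<\eta$, with $v_{\vec{\alpha}} = 1$ and $T[\gamma_0](v_{\vec{a}},\vec{a}) = 1$.

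Finally I would transfer the conclusion back to $\gamma$ and deal with positivity. Shrinking $\eta$ if needed, the continuity of $\vec{a}\mapsto v_{\vec{a}}$ together with the Sobolev embedding $W^{2+\varepsilon,2}(\Mc)\hookrightarrow\Cc^0(\Mc)$ — available in the range of $\varepsilon$ and $n$ for which $W^{2+\varepsilon,2}(\Mc)$ is an algebra, which is implicitly in force so that $T[\gamma]$ is even defined — makes $\|v_{\vec{a}}-1\|_{\Cc^0}$ small, hence $v_{\vec{a}}>0$ on $\Mc$. Setting $u_{\vec{a}} := \theta^{\frac{n-2}{4}} v_{\vec{a}}$ one has $u_{\vec{a}}^{\frac{4}{n-2}}\gamma = v_{\vec{a}}^{\frac{4}{n-2}}\gamma_0$, so $\sum_{k=0}^p a_k\sigma_k\!\left(\lambda\!\left(u_{\vec{a}}^{\frac{4}{n-2}}\gamma\right)\right) = T[\gamma_0](v_{\vec{a}},\vec{a}) = 1$, while $u_{\vec{a}}>0$, $u_{\vec{a}}\in W^{2+\varepsilon,2}(\Mc)$, and $\vec{a}\mapsto u_{\vec{a}}$ is $\Cc^1$ since $\vec{a}\mapsto v_{\vec{a}}$ is.

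The substantive analytic content — the invertibility of the linearised operator — is already supplied by Theorem \ref{thminv} (through a Yamabe-type minimisation and Aubin's regularity theorem), so the only genuinely delicate points in writing this out are the homothety that produces a valid base point, which is where the sign condition $a_1(a_0-1)>0$ is used in an essential way, and the passage from a Sobolev solution $v_{\vec{a}}$ to an honest positive conformal factor; I expect this last positivity/regularity bookkeeping, which is where the relation between $\varepsilon$ and $n$ silently enters, to be the part requiring the most care.
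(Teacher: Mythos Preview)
Your overall architecture is the same as the paper's --- fix a base metric at which $T(\,\cdot\,,\vec{\alpha})=1$, invoke Theorem \ref{thminv} for the invertibility of the linearisation, then apply the implicit function theorem --- but there is a genuine gap in your construction of the base point. You set $\theta := \dfrac{a_1 S[\gamma]}{1-a_0}$ and call it a constant, yet $S[\gamma]=\Rr[\gamma]/(2(n-1))$ is a \emph{function} on $\Mc$, and the hypothesis $\Rr[\gamma]<0$ says nothing about its constancy. If $\theta$ is not constant, your identity $S[\theta\gamma]=\theta^{-1}S[\gamma]$ fails (the conformal change of $S$ then picks up derivative terms from \eqref{Sijtilde}), so $T[\gamma_0](1,\vec{\alpha})$ is not equal to $1$ and you have no base point at which to launch the implicit function theorem.

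The paper repairs exactly this point by first invoking the Yamabe theorem: since $\Rr[\gamma]<0$, there is $v\in\Cc^\infty(\Mc)$, $v>0$, with $\Rr[v^{4/(n-2)}\gamma]$ a negative \emph{constant}; only then does one rescale by the constant $k=\dfrac{a_1}{1-a_0}S[v^{4/(n-2)}\gamma]>0$ to obtain $\tilde{\gamma}=k\,v^{4/(n-2)}\gamma$ with $a_0+a_1S[\tilde{\gamma}]=1$. The rest of your argument (implicit function theorem on $T[\tilde{\gamma}]$, transfer back to $\gamma$ via $u_{\vec{a}}=k^{(n-2)/4}v\,w_{\vec{a}}$) then goes through as you describe. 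Your treatment of positivity via $W^{2+\varepsilon,2}\hookrightarrow\Cc^0$ and a possible shrinking of $\eta$ is actually more explicit than the paper's; that part is fine.
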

\begin{proof}
Let $a_0, a_1 \in \Rb$ with $a_1(a_0 -1) > 0$, and $\varepsilon > 0$.

We suppose that $\Rr[\gamma] < 0$. According to Yamabe theorem, there exists $v \in \Cc^\infty(\Mc)$, $v > 0$, such that $\Rr\left[v^{\frac{4}{n-2}}\gamma\right]$ is constant and negative. $S\left[v^{\frac{4}{n-2}}\gamma\right]$ is constant and negative as well, because of \eqref{SchoutenRicci}. For all $k > 0$, \eqref{Sgammatilde} implies that
\[S\left[k v^{\frac{4}{n-2}}\gamma\right] k = S\left[v^{\frac{4}{n-2}}\gamma\right].\]
We fix
\[k := \dfrac{a_1}{1-a_0}S\left[v^{\frac{4}{n-2}}\gamma\right] > 0\]
by hypothesis. Then
\begin{align*}
a_0 + a_1 S\left[k v^{\frac{4}{n-2}}\gamma\right] &= a_0 + \dfrac{a_1}{k} S\left[v^{\frac{4}{n-2}}\gamma\right] \\
&= 1.
\end{align*}

We set
\begin{align*}
\tilde{\gamma} &:= k v^{\frac{4}{n-2}}\gamma,
&\vec{\alpha} &:= (a_0, a_1, 0, \ldots, 0).
\end{align*}

We use the Theorem \ref{thminv} on $T\left[\tilde{\gamma}\right]$ and deduce that $\derp{T\left[\tilde{\gamma}\right]}{u}(1,\vec{\alpha})$ is invertible. So we can apply the implicit function theorem to $T\left[\tilde{\gamma}\right] : W^{2+\varepsilon,2}(\Mc) \times \Rb^{p+1} \longrightarrow W^{\varepsilon,2}(\Mc)$. We obtain that there exists $\eta > 0$ such that for all $\vec{a} = (a_0, a_1, a_2, \ldots, a_p) \in \Rb^{p+1}$,
\begin{equation} \label{tfiTgammatilde}
\left\|\vec{a} - \vec{\alpha}\right\| < \eta \qquad \Longrightarrow \qquad
\text{there exists $w_{\vec{a}} \in W^{2+\varepsilon,2}(\Mc)$ such that} \quad T\left[\tilde{\gamma}\right]\left(w_{\vec{a}},\vec{a}\right) = T\left[\tilde{\gamma}\right](1,\vec{\alpha}).
\end{equation}
Moreover, $\vec{a} \longmapsto w_{\vec{a}}$ is $\Cc^1$.

\eqref{tfiTgammatilde} means that
\begin{align}
\sum_{k=0}^p a_k \sigma_k\left(\lambda\left(w_{\vec{a}}^{\frac{4}{n-2}}\tilde{\gamma}\right)\right)
	&= T\left[\tilde{\gamma}\right]\left(w_{\vec{a}},\vec{a}\right) \nonumber \\
	&= T\left[\tilde{\gamma}\right](1,\vec{\alpha}) \nonumber \\
	&= a_0 + a_1 S\left[\tilde{\gamma}\right] \nonumber \\
	&= 1. \label{tildefa}
\end{align}

We set
\[u_{\vec{a}} := k^{\frac{n-2}{4}}vw_{\vec{a}}.\]
$u_{\vec{a}} \in W^{2+\varepsilon,2}(\Mc)$, $u_{\vec{a}} > 0$, and
\begin{align*}
\sum_{k=0}^p a_k \sigma_k\left(\lambda\left(u_{\vec{a}}^{\frac{4}{n-2}} \gamma\right)\right)
	&= \sum_{k=0}^p a_k \sigma_k\left(\lambda\left(w_{\vec{a}}^{\frac{4}{n-2}}\tilde{\gamma}\right)\right) \\
	&= 1.
\end{align*}
Moreover, $\vec{a} \longmapsto u_{\vec{a}}$ is $\Cc^1$: this is the statement of the Theorem \ref{tfi}.
\end{proof}

\begin{rem}
If $a_1(a_0-1) < 0$, there is no conformal solution equal to $1$ for a negative scalar curvature.

The hypothesis on the negativity of $\Rr$ unfortunately takes us away from the classical physics case. If $a_1 \approx 1$, the hypothesis $a_1(a_0-1) > 0$ implies that $a_0 > 1$, hence $-2\Lambda > \Ec$: according to the weak energy condition, that is a case of negative cosmological constant. However a negative cosmological constant naturally appears in the AdS/CFT correspondence: see \cite{Camanho13} for an application of Lovelock theories in this context.
\end{rem}

The Theorem \ref{tfi} can immediately be interpreted in terms of the equation \eqref{eqfa}:
\begin{cor}
Let $a_0, a_1 \in \Rb$ such that $a_1(a_0-1)>0$.

Let $\varepsilon > 0$ and $\gamma$ be such that $\Rr[\gamma] < 0$. We set
\[\vec{\alpha} = (a_0, a_1, 0, \ldots, 0) \in \Rb^{p+1}.\]
Then there exists $\eta > 0$ such that for all $\vec{a} = (a_0, a_1, a_2, \ldots, a_p) \in \Rb^{p+1}$,
\[\left\|\vec{a} - \vec{\alpha}\right\| < \eta \qquad \Longrightarrow \qquad
\text{there exists $u_{\vec{a}} \in W^{2+\varepsilon,2}(\Mc)$ such that} \quad
\displaystyle{f_{\vec{a}} \left(\lambda\left(u_{\vec{a}}^{\frac{4}{n-2}}\gamma\right)\right) = 1}.\]
That is to say, $(\ref{eqfa},\vec{a}, 1)$ is solved in $W^{2+\varepsilon,2}(\Mc)$.
\end{cor}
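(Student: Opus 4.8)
The plan is to reduce the statement to the invertibility result of Theorem \ref{thminv} by first relocating the base point: I want a conformal representative $\tilde\gamma$ of $\gamma$ for which the constant function $u=1$ already solves the base equation $T[\tilde\gamma](1,\vec{\alpha})=1$, and then run the Banach-space implicit function theorem around $(1,\vec{\alpha})$. First I would invoke the Yamabe theorem: since $\Rr[\gamma]<0$, there is $v\in\Cc^\infty(\Mc)$, $v>0$, with $\Rr\big[v^{\frac{4}{n-2}}\gamma\big]$ a negative constant, hence by \eqref{SchoutenRicci} with $S\big[v^{\frac{4}{n-2}}\gamma\big]$ a negative constant. Specialising \eqref{Sgammatilde} to a constant conformal factor gives the scaling identity $S[k\,g]\,k=S[g]$ for every constant $k>0$, so I would fix
\[k:=\frac{a_1}{1-a_0}\,S\big[v^{\frac{4}{n-2}}\gamma\big],\]
which is strictly positive exactly because $a_1(a_0-1)>0$ while $S\big[v^{\frac{4}{n-2}}\gamma\big]<0$; this is the only place the sign hypothesis enters. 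Setting $\tilde\gamma:=k\,v^{\frac{4}{n-2}}\gamma$, one checks $a_0+a_1 S[\tilde\gamma]=a_0+\frac{a_1}{k}S\big[v^{\frac{4}{n-2}}\gamma\big]=1$, i.e. $T[\tilde\gamma](1,\vec{\alpha})=1$; moreover a positive constant rescaling divides the scalar curvature by that constant, so $\tilde\gamma$ still satisfies $\Rr[\tilde\gamma]<0$.

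Next I would apply Theorem \ref{thminv} with $\tilde\gamma$ in place of $\gamma$: it provides that $T[\tilde\gamma]$ is $\Cc^1$ on $W^{2+\varepsilon,2}(\Mc)\times\Rb^{p+1}$ and that $\partial_u T[\tilde\gamma](1,\vec{\alpha})$ is an isomorphism from $W^{2+\varepsilon,2}(\Mc)$ onto $W^{\varepsilon,2}(\Mc)$. The implicit function theorem in Banach spaces, applied to $T[\tilde\gamma]$ at the point $(1,\vec{\alpha})$, then yields $\eta>0$ and a $\Cc^1$ map $\vec{a}\mapsto w_{\vec{a}}\in W^{2+\varepsilon,2}(\Mc)$ defined on $\{\|\vec{a}-\vec{\alpha}\|<\eta\}$ with $w_{\vec{\alpha}}=1$ and $T[\tilde\gamma](w_{\vec{a}},\vec{a})=T[\tilde\gamma](1,\vec{\alpha})=1$; shrinking $\eta$ if necessary keeps $w_{\vec{a}}>0$ by continuity (e.g. via the Sobolev embedding into $\Cc^0$ when $2+\varepsilon>n/2$, or by arguing on the sign directly).

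Finally I would unwind the two conformal changes. Since $w_{\vec{a}}^{\frac{4}{n-2}}\tilde\gamma=\big(k^{\frac{n-2}{4}}v\,w_{\vec{a}}\big)^{\frac{4}{n-2}}\gamma$, putting $u_{\vec{a}}:=k^{\frac{n-2}{4}}v\,w_{\vec{a}}$ gives $u_{\vec{a}}\in W^{2+\varepsilon,2}(\Mc)$, $u_{\vec{a}}>0$, and
\[\sum_{k=0}^p a_k\,\sigma_k\big(\lambda(u_{\vec{a}}^{\frac{4}{n-2}}\gamma)\big)=\sum_{k=0}^p a_k\,\sigma_k\big(\lambda(w_{\vec{a}}^{\frac{4}{n-2}}\tilde\gamma)\big)=T[\tilde\gamma](w_{\vec{a}},\vec{a})=1,\]
with $\vec{a}\mapsto u_{\vec{a}}$ of class $\Cc^1$ since $\vec{a}\mapsto w_{\vec{a}}$ is and the factor $k^{\frac{n-2}{4}}v$ is fixed. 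I expect the main obstacle not to lie in this argument but upstream, inside Theorem \ref{thminv}: all the analytic substance — ellipticity of the linearisation, $\Cc^1$-regularity of $T$ between the Sobolev spaces, and surjectivity through the Yamabe-type minimisation together with Aubin's elliptic regularity — is already discharged there. What is left here is essentially bookkeeping: choosing the right background metric and tracking the conformal factors, the only genuinely substantive point being that the rescaling constant $k$ must be positive, which is precisely the content of the hypothesis $a_1(a_0-1)>0$.
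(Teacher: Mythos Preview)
Your argument is correct and matches the paper's approach exactly: the paper derives this corollary as an immediate restatement of Theorem \ref{tfi}, whose proof performs precisely the steps you outline --- Yamabe normalisation to constant negative $S$, the rescaling by $k=\frac{a_1}{1-a_0}S\big[v^{\frac{4}{n-2}}\gamma\big]$ to hit $T[\tilde\gamma](1,\vec{\alpha})=1$, Theorem \ref{thminv} for invertibility, the implicit function theorem, and the unwinding $u_{\vec a}=k^{\frac{n-2}{4}}v\,w_{\vec a}$. Your only addition is the explicit remark on the positivity of $w_{\vec a}$, which the paper leaves implicit.
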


\section{Conclusion}
In the first section we exposed the Lovelock constraint equations and their peculiar form in the case of a locally conformally flat compact manifold in vacuum, and a time-symmetric space-like hypersurface. We showed that the momentum constraint vanishes and the Hamiltonian constraint becomes a generalisation of the $\sigma_k$-Yamabe problem: it amounts to the prescription of an arbitrary linear combination of the $\sigma_k$-curvatures
\[f_{\vec{a}} = a_0 \sigma_0 + a_1 \sigma_1 + \ldots + a_p \sigma_p.\]
We restrict our study to the prescription of a constant.

\vspace{\baselineskip}

In the second section we use the results on this topic that are present in the literature and date from the $2000$'s. They apply to a concave function in the G\aa rding cone: this implies that the scalar curvature of the manifold be positive. Moreover we restrict to non-negative coefficients: $a_k \geq 0$.

We cite the results of the co-paper \cite{Lac17-3} that exhibits some cases in which $f_{\vec{a}}^{1/p}$ is concave. For an arbitrary $p$, it is proved in \cite{Lac17-3} that a sufficient condition for $f_{\vec{a}}^{1/p}$ to be concave is that $a_0, a_1, \ldots, a_p$ are the highest-degree coefficients of some real-rooted polynomial. Then, using Walsh's theorem \ref{thmLac17-3}\ref{thmWalsh}, an other sufficient condition is established: for all $p$, if the restriction of $f_{\vec{a}}$ to the diagonal is real-rooted, then $f_{\vec{a}}^{1/p}$ is concave. The reverse is true for $p=2$.

In this paper we show that the real-rootedness of the restriction of $f_{\vec{a}}$ to the diagonal is equivalent to the real-rootedness of the factorisation of the projected Lovelock sum $f_{\vec{a}}$ in $p$ concircular curvatures. This raises the question of the physical meaning of this concircular factorisation.

Then we show that for $p=2$, for every set of coefficients $a_0, a_1, a_2 \geq 0$, there exists a concavifying function that gives a solution to the Hamiltonian constraint equation. We conjecture that for a map $F$, the concavity of $F \circ f_{\vec{a}}$ in the G\aa rding cone is equivalent to its concavity on the diagonal. We conjecture as well that a concavifying function exists for all $p \geq 2$.

\vspace{\baselineskip}

In the third section we study the other case: when the scalar curvature of the manifold is negative. We use tools from the Yamabe problem and show that the implicit function theorem applies to $f_{\vec{a}}$ at a metric -- of conformal factor $u$ -- of constant scalar curvature. This gives us in a neighbourhood of $u$ a family of conformal solutions for the prescription of $f_{\vec{a}}$.

\vspace{\baselineskip}

There are several ways in which the content of this paper could naturally be extended:
\begin{itemize}
\item to show the conjectures of the second section about the concavifying functions,

\item to understand the physical meaning of the concircular factorisation of the Lovelock sum, and

\item to release one of the many hypothesis on $\Mc$: vacuum, compact, locally conformally flat, no boundary, and time-symmetric.
\end{itemize}

\section*{Acknowledgements}
I would like to deeply thank Pr. Emmanuel Humbert who helped me throughout the development of this paper, especially the third section.

%\nocite{*} %Pour faire apparatre les références non citées explicitement.
%\bibliographystyle{hplain}
\bibliographystyle{plain}
\bibliography{../../Biblio}

\end{document}